\newtheorem{Def}{Definition}
\newtheorem{Thm}[Def]{Theorem}
\newtheorem{Cor}[Def]{Corollary}
\newtheorem{Prop}[Def]{Proposition}
\newtheorem{Lem}[Def]{Lemma}
\newcommand{\1}{\mathds{1}}
\newcommand{\dbra}[1]{\langle\!\langle {#1} \vert}
\newcommand{\dket}[1]{\vert {#1} \rangle\!\rangle}
\newcommand{\dketbra}[1]{\vert {#1} \rangle\!\rangle\!\langle\!\langle {#1} \vert}
\newcommand{\STab}{{\mathrm{STab}}}
\newcommand{\young}[2]{\mathbb{Y}^{#1}_{#2}}
\newcommand{\SU}{\mathrm{SU}}
\newcommand{\mcL}{\mathcal{L}}
\newcommand{\mcX}{\mathcal{X}}
\newcommand{\mcY}{\mathcal{Y}}
\newcommand{\mcZ}{\mathcal{Z}}
\newcommand{\mcU}{\mathcal{U}}
\newcommand{\mcS}{\mathcal{S}}
\newcommand{\mcI}{\mathcal{I}}
\newcommand{\mcO}{\mathcal{O}}
\newcommand{\mcP}{\mathcal{P}}
\newcommand{\mcF}{\mathcal{F}}
\newcommand{\mcA}{\mathcal{A}}
\newcommand{\mcB}{\mathcal{B}}
\newcommand{\mcM}{\mathcal{M}}
\newcommand{\CC}{\mathbb{C}}
\newcommand{\mfS}{\mathfrak{S}}
\newcommand{\sfT}{{\sf T}}
\newcommand{\range}[1]{\{1, \ldots, #1\}}
\begin{document}

\title{One-to-One Correspondence between Deterministic Port-Based Teleportation and Unitary Estimation}

\author{Satoshi Yoshida, Yuki Koizumi, Micha\l{} Studzi\'{n}ski, Marco T\'{u}lio Quintino, Mio Murao
\thanks{S.~Yoshida is with Department of Physics, Graduate School of Science, The University of Tokyo, Tokyo 113-0033, Japan (email:\href{mailto:satoshiyoshida.phys@gmail.com}{satoshiyoshida.phys@gmail.com}).}
\thanks{Y.~Koizumi is with Department of Physics, Faculty of Science, The University of Tokyo, Tokyo 113-0033, Japan, and also with Department of Applied Physics, Graduate School of Engineering, The University of Tokyo, Tokyo 113-8656, Japan.}
\thanks{M.~Studzi\'{n}ski is with Institute of Theoretical Physics and Astrophysics, Faculty of Mathematics, Physics and Informatics, University of Gda\'{n}sk, Wita Stwosza 57, 80-308 Gda\'{n}sk, Poland and International Centre for Theory of Quantum Technologies, University of Gda\'{n}sk, 80-309 Gda\'{n}sk, Poland.}
\thanks{M.~T.~Quintino is with Sorbonne Universit\'{e}, CNRS, LIP6, F-75005 Paris, France.}
\thanks{M.~Murao is with Department of Physics, Graduate School of Science, The University of Tokyo, Tokyo 113-0033, Japan, and also with Trans-scale Quantum Science Institute, The University of Tokyo, Tokyo 113-0033, Japan.}
\thanks{This paper was presented at Quantum Innovation 2024 and the 28th Annual Quantum Information Processing Conference (QIP 2025).}
}



\maketitle

\begin{abstract}
Port-based teleportation is a variant of quantum teleportation, where the receiver can choose one of the ports in his part of the entangled state shared with the sender, but cannot apply other recovery operations. 
We show that the optimal fidelity of deterministic port-based teleportation (dPBT) using $N=n+1$ ports to teleport a $d$-dimensional state is equivalent to the optimal fidelity of $d$-dimensional unitary estimation using $n$ calls of the input unitary operation.
From any given dPBT, we can explicitly construct the corresponding unitary estimation protocol achieving the same optimal fidelity, and \emph{vice versa}.
Using the obtained one-to-one correspondence between dPBT and unitary estimation, we derive the asymptotic optimal fidelity of port-based teleportation given by $1-O(d^4)N^{-2}\leq F \leq 1-\Omega(d^4)N^{-2}$, which improves the previously known result given by $1-O(d^5)N^{-2} \leq F \leq 1-\Omega(d^2) N^{-2}$.
We also show that the optimal fidelity of unitary estimation for the case $n\leq d-1$ is $F = {n+1 \over d^2}$, 
and this fidelity is equal to the optimal fidelity of unitary inversion with $n\leq d-1$ calls of the input unitary operation even if we allow indefinite causal order among the calls.
\end{abstract}

\begin{IEEEkeywords}
Quantum teleportation, Unitary estimation, Representation theory, Higher-order quantum transformation
\end{IEEEkeywords}

\section{Introduction}

\IEEEPARstart{Q}{uantum} teleportation \cite{bennett1993teleporting} is one of the fundamental protocols in quantum information, which is widely used in quantum communication \cite{bennett1992communication} and quantum computation \cite{gottesman1999demonstrating, raussendorf2001one}.
It also opens up a way to understand fundamental properties of quantum mechanics, such as quantum entanglement \cite{horodecki2009quantum}.
In the original protocol of quantum teleportation, Alice sends an unknown quantum state to Bob by using a shared entanglement with the Bell measurement on the unknown state and her share of the entangled state and classical communication. Depending on the outcome of Alice's measurement, Bob applies a Pauli operation on his part of the shared entangled state to recover her state.

Port-based teleportation (PBT)~\cite{ishizaka2008asymptotic, ishizaka2009quantum} is a variant of the teleportation protocol, where Alice and Bob share an entangled $2N$-qudit state, and each of Alice and Bob possesses $N$ qudits (called ports).  Alice performs a joint measurement on an unknown state and Alice's $N$ ports.
Instead of applying a Pauli operation, Bob can only choose a port from $N$ ports of the shared entanglement to recover Alice's state.
Since Bob's recovery operation commutes with parallel calls of the same quantum operation applied to Bob's ports, we can use port-based teleportation to apply a quantum operation $\mathcal{E}$ to a quantum state prepared after the application of $\mathcal{E}$ in the following way.
We first apply a quantum operation $\mathcal{E}$ in parallel to Bob's ports (storage) and use it for port-based teleportation of the quantum state $\rho$.
As a result, Bob obtains the quantum state $\mathcal{E}(\rho)$ (retrieval).
Such a task is called storage-and-retrieval \cite{sedlak2019optimal}, quantum learning \cite{bisio2010optimal}, or universal programming \cite{yang2020optimal,nielsen1997programmable}.
Beyond storage-and-retrieval, port-based teleportation has wide applications in quantum cryptography \cite{beigi2011simplified}, Bell nonlocality \cite{buhrman2016quantum}, holography \cite{may2019quantum, may2022complexity}, and higher-order quantum transformations \cite{quintino2019probabilistic, quintino2019reversing, quintino2022deterministic, yoshida2023universal}.
Port-based teleportation is also extended to the scenario where Alice's state is given by multi-qudit states (multi port-based teleportation) \cite{kopszak2021multiport, studzinski2022efficient} and continuous-variable states \cite{pereira2023continuous}.

Since the no-programming theorem prohibits a deterministic and exact implementation of storage-and-retrieval \cite{nielsen1997programmable}, port-based teleportation is also impossible in a deterministic and exact way.
This no-go theorem led researchers to consider probabilistic or approximate protocol \cite{ishizaka2008asymptotic, ishizaka2009quantum, ishizaka2015some, wang2016higher, studzinski2017port, mozrzymas2018optimal, mozrzymas2018simplified, christandl2021asymptotic, studzinski2022square, leditzky2022optimality, strelchuk2023minimal, grinko2023gelfand, grinko2023efficient, wills2023efficient, fei2023efficient, mozrzymas2024port, kim2024asymptotic}.
The former is called probabilistic port-based teleportation (pPBT), and the latter is called deterministic port-based teleportation (dPBT).
The optimal success probability of pPBT is explicitly given in Refs.~\cite{ishizaka2009quantum, studzinski2017port}, and it is known to be utilized to achieve the optimal success probability of storage-and-retrieval of unitary operation \cite{sedlak2019optimal}.
On the other hand, less is known for the optimal fidelity of dPBT, and no closed formula is known except for the qubit case \cite{ishizaka2008asymptotic}.
Its asymptotic form is shown in Ref.~\cite{christandl2021asymptotic}, but its lower and upper bounds do not coincide with each other.
Also, dPBT is known not to provide the optimal fidelity of storage-and-retrieval of unitary operation \cite{yang2020optimal, quintino2022deterministic}.
\begin{figure*}
    \centering
    \includegraphics[width=\linewidth]{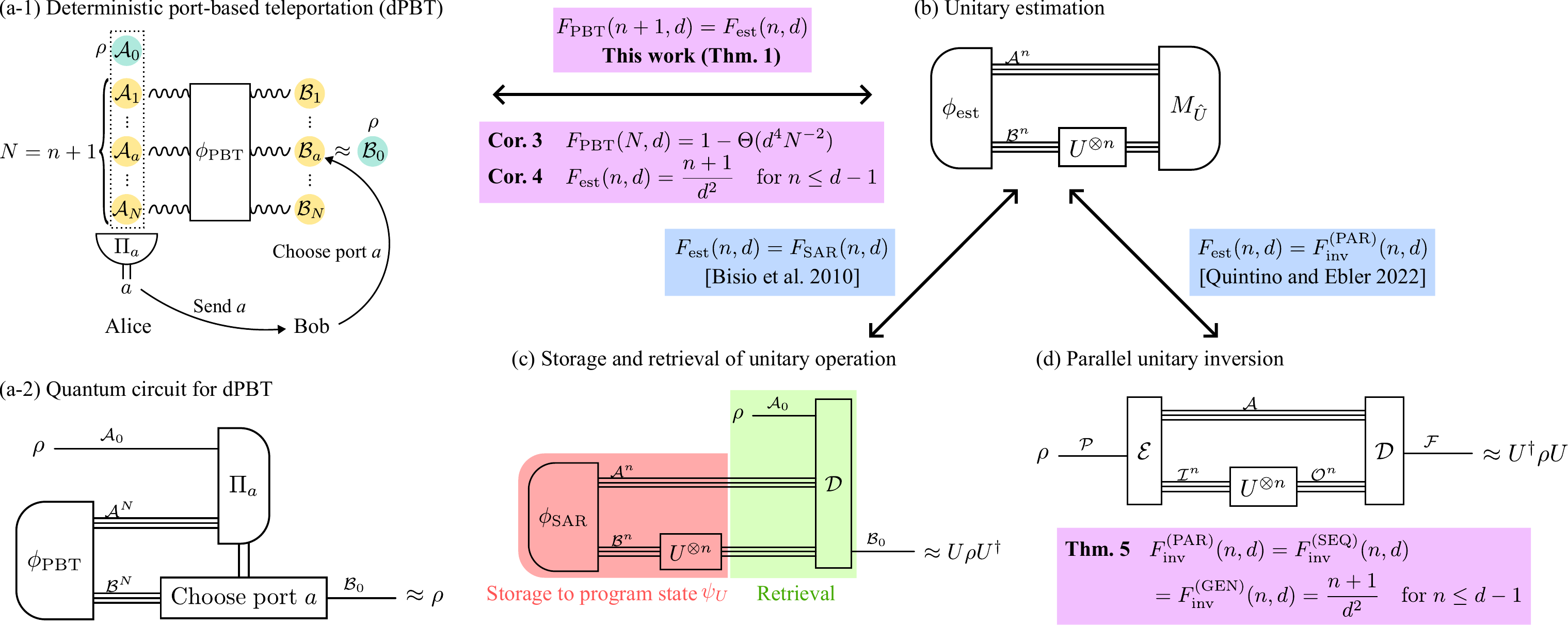}
    \caption{This work shows the one-to-one correspondence between deterministic port-based teleportation (dPBT) and unitary estimation (see Thm.~\ref{thm:equivalence}).
    Combining this result with Ref.~\cite{bisio2010optimal} and Ref.~\cite{quintino2022deterministic}, we also show the one-to-one correspondence with deterministic storage-and-retrieval (dSAR) of unitary operation and deterministic parallel unitary inversion.\\
    (a-1) In dPBT, Alice approximately teleports her unknown state $\rho$ by sending the measurement outcome $a$ of the joint measurement on the target state $\rho$ and half of a shared $2N$-qudit entangled state $\phi_\mathrm{PBT}$.
    Bob chooses the port $a$ of the other half to obtain a quantum state close to $\rho$.
    Its circuit description is shown in (a-2).
    By using the one-to-one correspondence, its asymptotic optimal fidelity is shown to be $F_\mathrm{PBT}(N,d) = 1-\Theta(d^4N^{-2})$ (see Cor.~\ref{cor:asymptotically_optimal_PBT}). \\
    (b) In unitary estimation, $n$ calls of unitary operator $U$ are applied in parallel to the resource state $\phi_\mathrm{est}$, and the output state is measured in a positive operator-valued measure (POVM) measurement $\{M_{\hat{U}} \dd \hat{U}\}_{\hat{U}}$ to obtain the estimated unitary $\hat{U}$.
    By using the one-to-one correspondence, its optimal fidelity is shown to be $F_\mathrm{est}(n,d) = {n+1 \over d^2}$ for $n\leq d-1$ (see Cor.~\ref{cor:optimal_unitary_estimation_small_n}).\\
    (c) In dSAR, $n$ calls of $U$ are applied into a resource state $\rho_\text{SAR}$ to obtain a program state $\psi_U$. In a later moment, the action of $U$ is retrieved by applying a quantum operation $\mathcal{D}$ on the joint system of the input state $\rho$ and the program state $\psi_U$. \\
    (d) In deterministic parallel unitary inversion, an operation $\mathcal{E}$ is applied before $n$ calls of an unknown unitary operator $U$, and an operation $\mathcal{D}$ is applied just after, in a way that the resulting composition is approximately $U^{-1}$.
    For $n\leq d-1$, the parallel protocol achieves the optimal fidelity even if we consider the most general protocol including the ones with indefinite causal order (see Thm.~\ref{thm:optimality_of_parallel_unitary_inversion}).}
    \label{fig:equivalence}
\end{figure*}

In this paper, we show a one-to-one correspondence of the dPBT with another widely explored task called the unitary estimation in the Bayesian framework \cite{holevo2011probabilistic, acin2001optimal, dariano2001using, fujiwara2001estimation, peres2002covariant, bagan2004entanglement, bagan2004quantum, ballester2004estimation, chiribella2004efficient, chiribella2005optimal, hayashi2006parallel, kahn2007fast, bisio2010optimal, yang2020optimal, haah2023query,holevo2011probabilistic}.
We show that the optimal fidelity of $d$-dimensional unitary estimation using $n$ calls of the input unitary operation is equivalent to that of dPBT using $N=n+1$ ports to teleport a $d$-dimensional state.
Our proof is made in a constructive way, once the optimal protocol for dPBT is given, we can explicitly construct the corresponding unitary estimation protocol and \emph{vice versa}. Combining our results with previously known the one-to-one correspondence between unitary estimation, deterministic storage-and-retrieval (dSAR)~\cite{bisio2010optimal} and deterministic parallel unitary inversion~\cite{quintino2022deterministic}, we then obtain a one-to-one correspondence between the four tasks (see Fig.~\ref{fig:equivalence}).
Note that this one-to-one correspondence does not imply that the optimal dPBT protocol provides the optimal fidelity of dSAR; rather, our result implies the opposite.
If we use the dPBT protocol to perform the dSAR using $n$ calls of the input unitary operation $U$, it provides the fidelity given by $F_\mathrm{PBT}(n,d)$, which is strictly smaller than $F_\mathrm{SAR}(n,d) = F_\mathrm{PBT}(n+1,d)$ from our one-to-one correspondence.

The one-to-one correspondence proved here allows us to translate previous results on the optimal fidelity of unitary estimation to that of dPBT and \emph{vice versa}. This allows us to prove that the asymptotic fidelity of optimal dPBT is precisely $F_\text{PBT}(N,d)=1-\Theta(d^4 N^{-2})$, from a corresponding result for unitary estimation~\cite{yang2020optimal, haah2023query}.
This result improves upon the previous result \cite{christandl2021asymptotic} and gives a tight scaling with respect to $d$.
Also, results from dPBT~\cite{mozrzymas2018optimal}, allow us to show that the fidelity of unitary estimation is given by ${n+1 \over d^2}$ for $n\leq d-1$.
Finally, we prove that the optimal fidelity of unitary inversion using $n$ calls of the input unitary operation is given by that of unitary estimation for $n\leq d-1$ even when considering adaptive circuits or protocols without a definite causal order \cite{hardy2007towards, oreshkov2012quantum, chiribella2013quantum}.

\section{One-to-one correspondence between dPBT and unitary estimation}

In dPBT, Alice wants to teleport an arbitrary qudit state $\rho\in\mcL(\mcA_0)$ to Bob using a shared entangled $2N$-qudit state $\phi_\mathrm{PBT}\in \mcL(\mcA^N \otimes \mcB^N)$, where $\mcL(\mcX)$ represents the set of linear operators on a Hilbert space $\mcX$, $\mcA^N$ and $\mcB^N$ are the joint Hilbert spaces defined by
$\mcA^N = \bigotimes_{i=1}^{N} \mcA_i$ and $\mcB^N = \bigotimes_{i=1}^{N} \mcB_i$ for $\mcA_0 \simeq \mcA_i \simeq  \mcB_i \simeq \CC^d$.
Alice measures the quantum state $\rho$ and half of the shared entangled state $\phi_\mathrm{PBT}$ with a positive operator-valued measure (POVM) measurement $\{\Pi_a\}_{a=1}^{N}$, and sends the measurement outcome $a$ to Bob.
Bob chooses the port $a$ from the other half of the shared entangled state $\phi_\mathrm{PBT}$, approximating Alice's state $\rho$.
The quantum state Bob obtains is given by $\Lambda(\rho)$, where $\Lambda: \mcL(\mcA_0) \to \mcL(\mcB_0)$ is the teleportation channel \cite{ishizaka2008asymptotic, horodecki1999general}
\begin{align}
\label{eq:def_teleportation_channel}
    \Lambda(\rho)\coloneqq \sum_{a=1}^{N} \Tr_{\mcA_0 \mcA^N \overline{\mcB_a}} [(\Pi_a \otimes \1_{\mcB^N})(\rho\otimes \phi_\mathrm{PBT})],
\end{align}
with $\overline{\mcB_a}\coloneqq \bigotimes_{i\neq a} \mcB_i$ and $\1$ being the identity operator.
The performance of the dPBT protocol may be evaluated in terms of channel fidelity~\cite{raginsky2001fidelity} between a quantum channel with map $\Lambda:\mathcal{L}(\CC^d)\to\mathcal{L}(\CC^d)$ and a unitary operator  $U:\CC^d\to\CC^d$, quantity defined by
\begin{align}
    f( U, \Lambda) \coloneqq {1\over d^2}\sum_{i} \abs{\Tr(K_i U^\dagger)}^2,
\end{align}
where $\{K_i\}_i$ is the set of Kraus operators of $\Lambda$ satisfying $\Lambda(\rho) = \sum_i K_i \rho K_i^\dagger$ \cite{nielsen2010quantum}. 
The performance of a dPBT protocol is then the fidelity between the teleportation channel of Eq.~\eqref{eq:def_teleportation_channel} and the identity channel represented by the identity operator $\1_d\in\mathcal{L}(\mathbb{C}^d)$, i.e, 
\begin{align}
    \label{eq:def_teleportation_fidelity}
    F_\mathrm{PBT}\coloneqq f( \1_d, \Lambda).
\end{align}

Unitary estimation is the task to obtain the classical description of the estimated unitary
$\hat{U}$ of the input unitary operator $U\in \SU(d)$, which can be called $n$ times.
We use the Bayesian framework \cite{holevo2011probabilistic} to evaluate the performance of unitary estimation, and we assume a Haar-random prior distribution of the input unitary operator.
The performance of an estimation protocol is given by its \emph{average fidelity}
\begin{align}
    \label{eq:average-case_fidelity_UE}
    F^{(\mathrm{ave})}_\mathrm{est}\coloneqq \int \dd U \int \dd \hat{U} p(\hat{U}|U) f( \hat{U}, \mcU),
\end{align}
where $p(\hat{U}|U)$ is the probability distribution of obtaining the estimator $\hat{U}$ given the input unitary operator $U$, $\dd U$ and $\dd \hat{U}$ are the Haar measures \cite{mele2024introduction} on $\SU(d)$, $f$ is the channel fidelity,  and $\mcU:\mathcal{L}(\CC^d)\to\mathcal{L}(\CC^d)$ is the unitary operation defined by $\mcU(\cdot)\coloneqq U(\cdot)U^\dagger$.
The optimal fidelity is shown to be implementable by a parallel covariant protocol \cite{bisio2010optimal}. 
In the  parallel covariant protocol [see Fig.~\ref{fig:equivalence}~(b)], the $n$-fold unitary operator $U^{\otimes n}$ is applied to the resource state $\phi_\mathrm{est} \in \mcL(\mcA^n \otimes \mcB^n)$, and the POVM measurement $\left\{M_{\hat{U}} \dd \hat{U}\right\}_{\hat{U}}$ is applied to obtain the estimated unitary $\hat{U}$, hence 
$p(\hat{U}|U)=\tr\left[M_{\hat{U}}\, (U^{\otimes n}\otimes \1_{\mcA^n})\phi_\mathrm{est}(U^{\otimes n}\otimes \1_{\mcA^n})^\dagger \right]$.
Also, in the covariant protocol, the average-case fidelity coincides with the worst-case fidelity~\cite{holevo2011probabilistic} $F_\mathrm{est}^{(\mathrm{wc})} \coloneqq \inf_U \int \dd \hat{U} p(\hat{U}|U) f( \hat{U}, \mcU)$.

In the following, we present a one-to-one correspondence between dPBT and unitary estimation.

\begin{Thm}
\label{thm:equivalence}
The optimal fidelity of deterministic port-based teleportation (dPBT) using $N=n+1$ ports to teleport a $d$-dimensional state [denoted by $F_\mathrm{PBT}(N, d)$] coincides with the optimal fidelity of unitary estimation using $n$ calls of an input $d$-dimensional unitary operation [denoted by $F_\mathrm{est}(n, d)$], i.e.,
\begin{align}
    F_\mathrm{PBT}(n+1, d) = F_\mathrm{est}(n,d)
\end{align}
holds.
In addition, given any dPBT protocol using $N=n+1$ ports, we can construct a unitary estimation protocol with $n$ calls attaining the fidelity more than or equal to the fidelity of the dPBT, and \emph{vice versa}.  
\end{Thm}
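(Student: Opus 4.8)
The plan is to prove the fidelity equality and the explicit protocol conversion together, by first reducing each task to a canonical $\SU(d)$-covariant form in which the figure of merit collapses to a single scalar parameter, and then constructing an explicit bijection between the two canonical optimization problems.

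\emph{Step 1: covariantization.} First I would symmetrize. Given any dPBT protocol $(\phi_\mathrm{PBT},\{\Pi_a\})$, I twirl the induced teleportation channel, $\Lambda\mapsto\int\dd V\,V^\dagger\Lambda\big(V(\cdot)V^\dagger\big)V$; this can be realized inside the dPBT class by conjugating $\phi_\mathrm{PBT}$ and the POVM elements with suitable tensor powers of $V$ and $\overline{V}$, and since $\Tr(V^\dagger K_iV)=\Tr K_i$ it leaves $F_\mathrm{PBT}=f(\1_d,\Lambda)=\tfrac1{d^2}\sum_i\abs{\Tr K_i}^2$ unchanged. A trace-preserving $\SU(d)$-covariant channel on $\mcL(\CC^d)$ is forced to be depolarizing, $\Lambda_p(\rho)=p\rho+(1-p)\tfrac1d\Tr\rho$, so $F_\mathrm{PBT}=\tfrac{1+(d^2-1)p}{d^2}$ and maximizing $F_\mathrm{PBT}$ is the same as maximizing $p$. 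On the estimation side, Ref.~\cite{bisio2010optimal} already supplies a covariant parallel optimizer; by covariance $F_\mathrm{est}$ equals $\int\dd V\,q(V)\,\abs{\Tr V}^2/d^2$ for the distribution $q$ of the estimation error $V=U^\dagger\hat{U}$, which is exactly $f(\1_d,\cdot)$ of the random-unitary ``estimate-and-reprepare'' channel $\rho\mapsto\int\dd V\,q(V)\,V\rho V^\dagger$; twirling this channel to a depolarizing one (without changing $f(\1_d,\cdot)$) shows $F_\mathrm{est}$ satisfies the same affine relation with a depolarizing parameter. Thus both tasks reduce to the single scalar problem of pushing the induced qudit channel as close to $\1_d$ as the respective structure allows.

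\emph{Step 2: the explicit bijection.} The depolarizing parameter of the dPBT channel is a linear functional of $(\phi_\mathrm{PBT},\{\Pi_a\})$ subject only to positivity and the POVM/normalization constraints, and likewise the parameter attainable by a parallel estimation protocol is a linear functional of $(\phi_\mathrm{est},\{M_{\hat{U}}\})$ subject to the analogous constraints. The core move is to combine input-output duality with Schur-Weyl duality: replacing the $n$ uses $U^{\otimes n}$ by a maximally entangled ``Choi'' pair carrying $\overline{U}^{\otimes n}$ on a reference recasts the estimation performance operator as an operator on $(\CC^d)^{\otimes n}$ decomposed over Young diagrams with $n$ boxes, while on the dPBT side, isolating the single output port inside $(\CC^d)^{\otimes N}$ and applying the branching rule $\mu\vdash N\mapsto\mu'\vdash N-1$ presents the $N$-port performance operator through exactly the same data with $N-1=n$ boxes. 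Concretely I would identify the estimation resource $\phi_\mathrm{est}$ on $n$ copies with the state seen through $N-1$ of Bob's ports, identify the remaining port $\mcB_a$ with the register on which the estimate $\hat{U}$ is applied to score the fidelity, and repackage Alice's port-selecting POVM $\{\Pi_a\}$ together with the state in $\mcB_a$ into $\{M_{\hat{U}}\dd \hat{U}\}$ by a Schur transform; running the map in reverse (adjoin one maximally entangled pair to $\phi_\mathrm{est}$ and undo the Schur repackaging) builds a dPBT protocol out of any estimation protocol. I then verify that (i) each image is a legitimate protocol, because complete positivity and the POVM/comb normalization are carried to one another, and (ii) the depolarizing parameter -- hence the fidelity -- is preserved. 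With Step~1 this yields $F_\mathrm{PBT}(n+1,d)=F_\mathrm{est}(n,d)$, and because symmetrization never decreases the fidelity it also gives the ``$\ge$'' protocol-conversion claim in both directions.

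\emph{Main obstacle.} The hard part will be making the operator dictionary of Step~2 exact: I must check that removing a single tensor factor from Bob's ports really does turn the $N$-port dPBT performance operator into the $n$-call estimation performance operator with \emph{matching} Young-diagram weights (not merely the same index set), and that the two covariance actions are identified consistently, including the transpose / complex conjugation that input-output duality introduces between the ``$\mcA$'' and ``$\mcB$'' sides. A secondary technical point is to confirm in full generality that the covariantized objects of Step~1 are still, respectively, a valid dPBT protocol and a valid parallel estimation protocol, so that the conversion statement indeed applies to every protocol rather than only to covariant ones.
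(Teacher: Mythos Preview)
Your Step~1 is correct as far as it goes, but the conclusion you draw from it is vacuous: twirling the output channel to depolarizing form just says each optimal fidelity is an affine function of a single number $p$, which is true of any number. Nothing in Step~1 relates the feasible range of $p$ for dPBT to the feasible range for estimation, so the equality is entirely deferred to Step~2. Moreover, the covariantization the proof actually needs is stronger than yours: one must twirl the \emph{resource state and POVM themselves} (by $\SU(d)\times\mfS_N$ on the dPBT side, and by $\SU(d)\times\SU(d)$ on the tester in the estimation side), so that the covariant protocols are parametrized by unit vectors $\vec w=(w_\mu)_{\mu\in\young{d}{N}}$ and $\vec v=(v_\alpha)_{\alpha\in\young{d}{n}}$ and the fidelities become quadratic forms $\vec w^{\sfT}M_\mathrm{PBT}\vec w$ and $\vec v^{\sfT}M_\mathrm{est}\vec v$.

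The genuine gap is in Step~2. Your proposed bijection---identify $\phi_\mathrm{est}$ with $N-1$ of Bob's ports, identify the last port with the score register, and ``repackage $\{\Pi_a\}$ into $\{M_{\hat U}\dd\hat U\}$ by a Schur transform''---does not work as stated: the port label $a\in\{1,\ldots,N\}$ has no canonical correspondence with a continuous $\hat U\in\SU(d)$, and after covariantization the two protocols live on index sets of \emph{different} cardinalities ($\young{d}{N}$ versus $\young{d}{n}$), so there is no literal bijection between resource states. What is missing is the key algebraic identity the paper uses: with $(M_\mathrm{PBT})_{\mu\nu}=\frac{1}{d^2}\#(\mu-\square\cap\nu-\square)$ and $(M_\mathrm{est})_{\alpha\beta}=\frac{1}{d^2}\#(\alpha+\square\cap\beta+\square\cap\young{d}{n+1})$, one has $M_\mathrm{PBT}=R^{\sfT}R$ and $M_\mathrm{est}=RR^{\sfT}$ for the rectangular branching matrix $R_{\alpha\mu}=1$ iff $\mu\in\alpha+\square$. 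Since $R^{\sfT}R$ and $RR^{\sfT}$ share nonzero spectrum, the maximal eigenvalues coincide, and the explicit fidelity-nondecreasing conversions are $\vec v\coloneqq R\vec w/\lVert R\vec w\rVert$ and $\vec w\coloneqq R^{\sfT}\vec v/\lVert R^{\sfT}\vec v\rVert$, verified by a Rayleigh-quotient inequality. Your ``main obstacle'' paragraph is exactly this computation; without it the argument does not close.
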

\begin{proof}[Proof sketch]
    We show this theorem by explicitly constructing a unitary estimation protocol from any given dPBT protocol and \emph{vice versa} (see Fig.~\ref{fig:equivalence_protocol}).
    We first convert any given dPBT and unitary estimation protocols into covariant protocol without decreasing the fidelity~\cite{mozrzymas2018optimal, leditzky2022optimality,chiribella2005optimal, bisio2010optimal}, as shown in (a)$\to$(b) and (c)$\to$(d) of Fig.~\ref{fig:equivalence_protocol} (see Appendixes~\ref{appendix_sec:covariant_PBT} and \ref{appendix_sec:covariant_unitary_estimation} for the details).
    The covariant protocol\footnote{A function $f$ is called covariant with respect to the unitary representations $U, V$ of the group $G$ if and only if $f(U_g \cdot U_g^{\dagger}) = V_g f(\cdot) V_g^\dagger$ holds for all $g\in G$ [see, e.g., Eq.~(1) in \cite{holevo1993note}].} are protocols with the resource states and the POVMs in specific forms that are covariant with respect to certain representations of the unitary group, see e.g., Def.~1 of Ref.~\cite{yang2020optimal}.
    The covariant dPBT and unitary estimation protocols are shown to be parametrized by non-negative weights $\vec{w} = \{w_\mu\}_\mu$ and $\vec{v} = \{v_\alpha\}_\alpha$ associated with irreducible representations of the unitary group $\SU(d)$, respectively, satisfying the normalization conditions given by
    \begin{align}
        \sum_{\mu} w_\mu^2 = \sum_{\alpha} v_\alpha^2 = 1.
    \end{align}
    In addition, the corresponding fidelities are given by
    \begin{align}
        F_\mathrm{PBT} &= \vec{w}^\sfT M_\mathrm{PBT}(N,d) \vec{w},\\
        F_\mathrm{est} &= \vec{v}^\sfT M_\mathrm{est}(n,d) \vec{v},
    \end{align}
    using the teleportation matrix $M_\mathrm{PBT}(N,d)$~\cite{mozrzymas2018optimal} and the estimation matrix $M_\mathrm{est}(n,d)$~\cite{bagan2004quantum,bagan2004entanglement,chiribella2005optimal,yang2020optimal}.
    These two matrices are shown to be related by introducing a matrix $R(n,d)$ as shown below:
    \begin{align}
        M_\mathrm{PBT}(N,d) &= {1\over d^2} R(N-1,d)^\sfT R(N-1,d),\\
        M_{\mathrm{est}}(n,d)&= {1\over d^2} R(n,d) R^\sfT(n,d).
    \end{align}
    Then, we present an explicit recipe to convert between covariant protocols for dPBT with $N=n+1$ calls and unitary estimation with $n$ calls keeping the fidelity, given by multiplying $R(n,d)$ or $R(n,d)^\sfT$ to the weight vectors $\vec{w}$ or $\vec{v}$, respectively, as shown in (b) $\leftrightarrow$ (d) of Fig.~\ref{fig:equivalence_protocol}.
    A detailed proof of Thm.~\ref{thm:equivalence} is presented in Appendix~\ref{appendix_sec:proof_equivalence}.
\end{proof}

\begin{figure*}
    \centering
    \includegraphics[width=.7\linewidth]{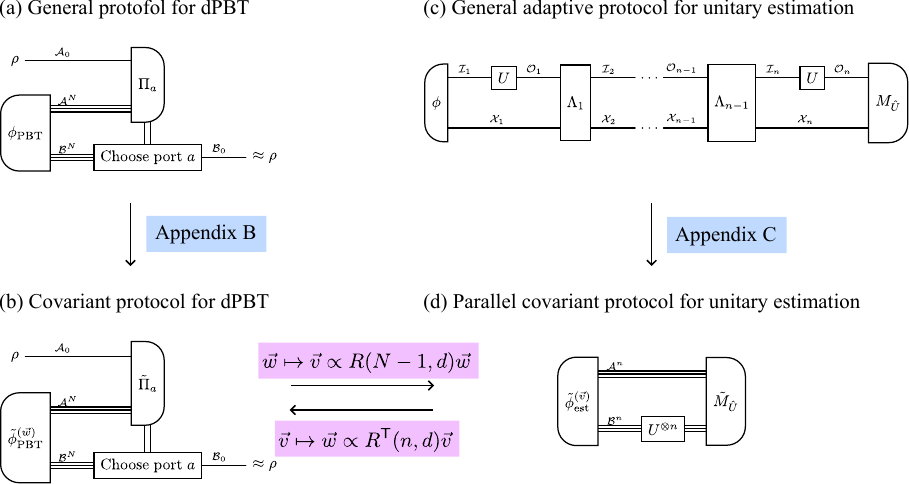}
    \caption{Explicit construction of a unitary estimation protocol from any given dPBT protocol and \emph{vice versa}.\\
    (a) General protocol for dPBT.\\
    (b) A covariant protocol for dPBT is constructed by using Eqs.~\eqref{eq:wmu}, \eqref{eq:covariant_resource_state} and \eqref{eq:srm} to the general protocol (a) \cite{mozrzymas2018optimal, leditzky2022optimality} (see the details in Appendix~\ref{appendix_sec:covariant_PBT}).\\
    (c) General adaptive protocol for unitary estimation.\\
    (d) A parallel covariant protocol for unitary estimation is constructed by using Eqs.~\eqref{eq:def_of_probe_state}, \eqref{eq:v_alpha} and \eqref{eq:covariant_POVM} to the general protocol (c) \cite{bisio2010optimal} (see the details in Appendix~\ref{appendix_sec:covariant_unitary_estimation}).\\
    This work shows a transformation between the covariant protocols for dPBT and unitary estimation given in Eqs.~\eqref{eq:PBT_to_untiary_estimation} and \eqref{eq:untiary_estimation_to_PBT}.
    Combining this transformation with the constructions $\text{(a)} \to \text{(b)}$ and $\text{(c)} \to \text{(d)}$, we obtain a transformation between the sets of general protocols for unitary estimation and dPBT.
    }
    \label{fig:equivalence_protocol}
\end{figure*}

\section{One-to-one correspondence between four different tasks}

In Ref.~\cite{bisio2010optimal}, the optimal unitary estimation using $n$ calls is shown to be equivalent to the optimal deterministic storage-and-retrieval (dSAR) protocol using $n$ calls, and in Ref.~\cite{quintino2022deterministic}, the optimal unitary estimation using $n$ calls is shown to be equivalent to the optimal deterministic parallel unitary inversion protocol using $n$ calls. Combining these with Thm.~\ref{thm:equivalence} proved in this work, we obtain the one-to-one correspondence among the optimal fidelity of these four tasks (see Fig.~\ref{fig:equivalence}). For completeness, we now describe the task of dSAR and parallel unitary inversion.

The dSAR (see Fig.~\ref{fig:equivalence}), also referred to as unitary learning, considers the problem of storing the usage of an arbitrary input unitary operation on some state in a way that the usage of this operation may be retrieved in a later moment. When the usage of the input operations is made in parallel, an assumption which can be made without loss in performance~\cite{bisio2010optimal}, the task is described as follows. An arbitrary input unitary operator $U\in\SU(d)$ is called $n$ times on a resource state $\phi_\mathrm{SAR}\in\mathcal{L}(\mathcal{A}\otimes\mathcal{I}^{n})$, where $\mathcal{A}$ is an auxiliary space and the unitary operation $\mcU^{\otimes n}$ maps an input space $\mathcal{I}^{n}$ into an output space $\mathcal{O}^{n}$ to obtain a program state 
\begin{align}
    \psi_U\coloneqq\left(\1 \otimes U^{\otimes n}\right)  \phi_\mathrm{SAR} \left(\1 \otimes U^{\otimes n}\right)^\dagger\in \mathcal{L}(\mathcal{A}\otimes\mathcal{O}^{n}).
\end{align}
Then, an arbitrary input state $\rho\in\mathcal{L}(\mathcal{P})$ is subjected to a quantum operation $\mathcal{D}:\mathcal{L}(\mathcal{P}\otimes\mathcal{A}\otimes\mathcal{O}^{n})\to \mathcal{L}(\mathcal{F})$, often referred to as decoder \cite{chiribella2008transforming}, on the joint system of $\rho$ and $\psi_U$ to obtain the output state   $\Lambda_U(\rho)\coloneqq \mathcal{D}(\rho\otimes \psi_U)\in\mathcal{F}$, which we desired to be approximately the state $U\rho U^\dagger$.
The performance of dSAR is evaluated by the average-case fidelity given by
 $   F_\mathrm{SAR}^{(\mathrm{ave})}\coloneqq \int \dd U f(U, \Lambda_U)$,
whose optimal value is always attainable by a covariant protocol, and respects~\cite{yang2020optimal} 
 $F_\mathrm{SAR}^{(\mathrm{ave})} = 1-{1\over 2} \|\Lambda_U-\mathcal{U}\|_\diamond$ for the covariant protocol,
where $\|\cdot \|_\diamond$ is the diamond norm \cite{kitaev1997quantum, watrous2005notes}.
Reference~\cite{bisio2010optimal} shows that optimal dSAR protocol can always be implemented by using an estimation protocol, i.e., $F_\mathrm{est}(n,d) = F_\mathrm{SAR}(n,d)$ holds for optimal protocols. From our results, it implies that $F_\mathrm{SAR}(n,d)=F_\mathrm{PBT}(n+1,d)$.

In the task of transforming unitary operations, one is allowed to make $n$ calls of unitary operation $\mathcal{U}(\cdot)=U(\cdot) U^\dagger:\mathcal{L}(\mathcal{I}_i)\to\mathcal{L}(\mathcal{O}_i)$, where $i\in\{1,\ldots,n\}$, $U\in \mathrm{SU}(d)$ and we aim to design a quantum circuit which aims to output an operation $\mathcal{U}_g(\cdot)\coloneqq g(U)\cdot g(U)^\dagger$, where $g: \mathrm{SU}(d) \to \mathrm{SU}(d)$.
When the $n$ calls of the input operation are made in parallel (see Fig.~\ref{fig:equivalence}), the task consists in finding an encoder operation $\mathcal{E}:\mathcal{L}(\mathcal{P})\to \mathcal{L}(\mathcal{A}\otimes\mathcal{I}^{n})$, where $\mathcal{A}$ is an auxiliary space and $\mathcal{I}^{n}\coloneqq\bigotimes_{i=1}^n \mathcal{I}_i$, and a decoder operation $\mathcal{D}:\mathcal{L}(\mathcal{A}\otimes \mathcal{O}^{n})\to\mathcal{F}$, where $\mathcal{O}^{n}\coloneqq \bigotimes_{i=1}^n \mathcal{O}_i$, such that the resulting output operation 
\begin{align}
    \Lambda_U: =\mathcal{D} \circ \left(\1_{\mathcal{A}} \otimes \mathcal{U}^{\otimes n} \right) \circ \mathcal{E},
\end{align}
is approximately the operation $\mathcal{U}_g$, with $\1_{\mathcal{A}}$ here being the identity map in the auxiliary space. Similarly to dSAR, we evaluate the performance of a given protocol in terms of its average-case fidelity, 
$F^\mathrm{(PAR)}_{\mathrm{g}}\coloneqq \int \dd U f(g(U), \Lambda_U)$.

Unitary estimation protocols can always be used to transform $n$ calls of a unitary $U$ into $g(U)$ with some fidelity $F_\text{est}^{(\mathrm{ave})}$, for that one may simply estimate the unitary $U$ as $\hat{U}$ and then to output $g(\hat{U})$. In Ref.~\cite{quintino2022deterministic}, it is shown that if the target function $g:\SU(d)\to\SU(d)$ is antihomomorphic, i.e., $g(UV) = g(V)g(U)$ holds for all $U, V\in \mathrm{SU}(d)$, unitary estimation attains optimal performance for parallel unitary transformations. Since unitary inversion $g(U)=U^{-1}$ and unitary transposition $g(U)=U^T$ are antihomomorphisms, it follows from Thm.~\ref{thm:equivalence} that the optimal parallel unitary inversion and parallel unitary transposition with $n$ calls is precisely $F_{\text{PBT}}(n+1,d)$.

\section{Applications}

Theorem~\ref{thm:equivalence} allows us to translate results on unitary estimation to results on dPBT, and \emph{vice versa}.
We now illustrate some of these applications.
This section utilizes the big-O notation $O(\cdot)$, $\Omega(\cdot)$ and $\Theta(\cdot)$, defined as follows \cite{arora2009computational}:
\begin{align}
    f(x) = O(g(x)) &\Leftrightarrow \limsup_{x\to \infty}{\abs{f(x) \over g(x)}} <\infty,\\
    f(x) = \Omega(g(x)) &\Leftrightarrow g(x) = O(f(x)),\\
    f(x) = \Theta(g(x)) &\Leftrightarrow f(x)=O(g(x)) \text{ and } f(x) = \Omega(g(x)).
\end{align}
Intuitively, $O(\cdot)$ represents a lower bound, $\Omega(\cdot)$ represents an upper bound, and $\Theta(\cdot)$ represents a tight bound.

Reference \cite{yang2020optimal} shows a unitary estimation protocol asymptotically achieving $F_\text{est}(n,d)\geq 1-O(d^4) n^{-2}$.
As shown in the following lemma, this lower bound is tight:

\begin{Lem}
\label{lemma:asymptotically_optimal_unitary_estimation}
The optimal fidelity of unitary estimation using $n$ calls of an input $d$-dimensional unitary operation is upper bounded as
\begin{align}
    F_\text{est}(n,d) \leq 1-\Omega(d^4) n^{-2}.
\end{align}
\end{Lem}
\begin{proof}[Proof sketch]
Reference~\cite{haah2023query} shows a lower bound on the query complexity $n$ to achieve a diamond-norm error $\epsilon$ for unitary estimation, given by $n = \Omega(d^2)/\epsilon$.
The proof is based on constructing $\exp(\Omega(d^2))$ unitary channels that are $\epsilon$-distant from each other and $O(\epsilon)$-close to the identity channel in terms of the diamond norm.
We adapt their proof for the Bures distance~\cite{zyczkowski2006geometry} induced from the channel fidelity instead of the diamond norm to show this lemma.
A detailed proof is presented in Appendix~\ref{appendix_sec:asymptotically_optimal_unitary_estimation}.
\end{proof}

Thanks to Thm.~\ref{thm:equivalence}, we have the following corollary.
\begin{Cor} The asymptotic fidelity of optimal dPBT is given by
\label{cor:asymptotically_optimal_PBT}
\begin{align}
    1-O(d^4)N^{-2} \leq F_\mathrm{PBT}(N,d) \leq 1-\Omega(d^4) N^{-2}.
\end{align}
\end{Cor}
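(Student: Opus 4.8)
The plan is to read the corollary off directly from Theorem~\ref{thm:equivalence} together with tight bounds on the optimal fidelity of unitary estimation. By Theorem~\ref{thm:equivalence} one has $F_\mathrm{PBT}(N,d)=F_\mathrm{est}(N-1,d)$, so it suffices to show $F_\mathrm{est}(n,d)=1-\Theta(d^4 n^{-2})$ (with constants not depending on $d$) and then perform the index shift $n=N-1$.

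For the achievability bound $F_\mathrm{est}(n,d)\geq 1-O(d^4 n^{-2})$ I would simply invoke the explicit parallel covariant estimation protocol of Ref.~\cite{yang2020optimal}, which is shown there to attain this scaling; since the optimal fidelity dominates that of any fixed protocol, the lower bound follows. For the converse bound $F_\mathrm{est}(n,d)\leq 1-\Omega(d^4 n^{-2})$, Ref.~\cite{haah2023query} proves a matching lower bound on the diamond-norm error of $n$-query unitary estimation; rather than converting that statement to a fidelity bound (which would be lossy), I would adapt the proof technique of Ref.~\cite{haah2023query} so as to lower-bound the average infidelity $1-F_\mathrm{est}(n,d)$ directly by $\Omega(d^4 n^{-2})$, using that for the optimal covariant protocol the average-case and worst-case fidelities coincide. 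This converse is carried out in detail in the SM~\cite{supple}. Combining the two bounds gives $F_\mathrm{est}(n,d)=1-\Theta(d^4 n^{-2})$.

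Substituting $n=N-1$ then yields $F_\mathrm{PBT}(N,d)=1-\Theta(d^4(N-1)^{-2})$, and since $(N-1)^{-2}/N^{-2}\to 1$ as $N\to\infty$ we have $(N-1)^{-2}=\Theta(N^{-2})$, hence $F_\mathrm{PBT}(N,d)=1-\Theta(d^4 N^{-2})$, as claimed. The only genuinely new step is the converse bound on $F_\mathrm{est}$, i.e., adapting the query-complexity argument of Ref.~\cite{haah2023query} --- which is phrased in terms of the diamond norm of the best $n$-query unitary-simulation channel --- to the Bayesian average-fidelity figure of merit of unitary estimation while keeping the implied constant independent of $d$; everything else is bookkeeping with the big-$\Theta$ notation and the substitution $N=n+1$, so I expect this converse to be the main obstacle.
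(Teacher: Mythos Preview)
Your proposal is correct and follows essentially the same approach as the paper: invoke Theorem~\ref{thm:equivalence} to reduce to $F_\mathrm{est}(n,d)=1-\Theta(d^4 n^{-2})$, take the achievability bound from Ref.~\cite{yang2020optimal}, and for the converse adapt the query-complexity argument of Ref.~\cite{haah2023query} from the diamond norm to a fidelity (Bures-distance) figure of merit, which is precisely what the SM (Appendix~\hyperref[appendix_sec:asymptotically_optimal_unitary_estimation]{E}) carries out. Your identification of the converse adaptation as the only nontrivial step matches the paper's emphasis.
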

Corollary~\ref{cor:asymptotically_optimal_PBT} may be compared with previous results~\cite{christandl2021asymptotic} which proved that
\begin{align}
    1-O(d^5) N^{-2} \leq F_\mathrm{PBT}(N,d) \leq 1-\Omega(d^2) N^{-2}.
\end{align}
In particular, we can explicitly construct the dPBT protocol whose fidelity scales with $1-O(d^4)N^{-2}$ by combining the unitary estimation protocol in Ref.~\cite{yang2020optimal} with the obtained one-to-one correspondence, which improves over the protocol exhibiting the fidelity $1-O(d^5)N^{-2}$ shown in \cite{christandl2021asymptotic}.
This result also gives a partial answer to an open problem raised in Ref.~\cite{christandl2021asymptotic} to determine
$h(d)\coloneqq \lim_{N\to \infty} N^2 [1-F_\mathrm{PBT}(N,d)]$, which is shown to be $h(d) = \Theta(d^4)$ by Cor.~\ref{cor:asymptotically_optimal_PBT}.

Reference~\cite{mozrzymas2018optimal} shows that, when 
$N\leq d$, optimal dPBT is given by
$    F_\mathrm{PBT}(N,d) = {N\over d^2}$. Hence, from Thm.~\ref{thm:equivalence}, we obtain the following corollary.
\begin{Cor}
\label{cor:optimal_unitary_estimation_small_n}
When the number of the calls of the input unitary operation, denoted by $n$, satisfies $n\leq d-1$, the optimal fidelity of unitary estimation is given by
\begin{align}
    F_{\mathrm{est}}(n,d) = {n+1\over d^2}.
\end{align}
\end{Cor}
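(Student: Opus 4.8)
\textit{Proof proposal.} The plan is to obtain the corollary as an immediate consequence of Theorem~\ref{thm:equivalence} together with the closed-form expression for the optimal dPBT fidelity in the small-port regime established in Ref.~\cite{mozrzymas2018optimal}. First I would translate the hypothesis $n\le d-1$ into a statement about ports: putting $N=n+1$, the condition $n\le d-1$ is exactly $N\le d$, which is precisely the range in which Ref.~\cite{mozrzymas2018optimal} gives $F_\mathrm{PBT}(N,d)=N/d^2$ (and one should note that the boundary value $N=d$, i.e.\ $n=d-1$, is included in that range). Then Theorem~\ref{thm:equivalence} yields $F_\mathrm{est}(n,d)=F_\mathrm{PBT}(n+1,d)=(n+1)/d^2$, which is the claim.

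The only points requiring attention are bookkeeping ones rather than genuine mathematical obstacles: the shift between the number of calls $n$ and the number of ports $N=n+1$ must be tracked consistently, and one should make explicit that what is used here is the \emph{equality of optimal fidelities} asserted in the first part of Theorem~\ref{thm:equivalence}, not the constructive protocol conversion in its second part. Since $F_\mathrm{PBT}(N,d)=N/d^2$ for $N\le d$ is an exact identity rather than an asymptotic estimate, no limits, error terms, or approximations enter the argument, so there is really no analytic difficulty to overcome; the substance of the result lives entirely in Theorem~\ref{thm:equivalence} and in Ref.~\cite{mozrzymas2018optimal}.

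As a sanity check I would verify the qubit case $d=2$, where the hypothesis forces $n\in\{0,1\}$: the formula gives $F_\mathrm{est}(0,2)=1/4$, consistent with the fact that with no access to $U$ one can only output a fixed estimator, and $F_\mathrm{est}(1,2)=1/2$; both match the qubit dPBT values $F_\mathrm{PBT}(1,2)=1/4$ and $F_\mathrm{PBT}(2,2)=1/2$ from Ref.~\cite{ishizaka2008asymptotic}. It is also worth remarking that this linear-in-$n$ behavior applies in a regime disjoint from the asymptotic one of Corollary~\ref{cor:asymptotically_optimal_PBT}, so the two formulas are not in tension; in the regime $n\le d-1$ the fidelity is small and grows linearly with the number of calls, whereas Corollary~\ref{cor:asymptotically_optimal_PBT} governs the large-$N$ behavior.
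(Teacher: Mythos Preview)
Your proposal is correct and follows exactly the paper's own argument: the corollary is stated immediately after recalling from Ref.~\cite{mozrzymas2018optimal} that $F_\mathrm{PBT}(N,d)=N/d^2$ for $N\le d$, and the proof consists of nothing more than combining this with Theorem~\ref{thm:equivalence} under the substitution $N=n+1$. Your additional sanity checks and remarks are sound but not required for the proof.
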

Also, in Appendix~\ref{appendix_sec:optimal_unitary_estimation}, we show that the optimal resource state for dPBT and unitary estimation are the same for the case $N=n+1\leq d$.
Hence, the optimal resource state is useful as a universal resource state for dPBT and unitary estimation.

\section{Optimal unitary inversion for $n\leq d-1$}

As previously mentioned, the task of transforming $n$ calls of an arbitrary unitary operation into its inverse is attainable by an estimation strategy when only parallel uses of the input operation are allowed. However, already when $n=2$ calls are allowed, the performance of sequential circuits for qubit unitary inversion outperforms parallel ones, and the performance of processes without a definite causal order~\cite{hardy2007towards, oreshkov2012quantum, chiribella2013quantum} outperforms sequential ones~\cite{quintino2022deterministic}.
Also, when $n=4$ calls are allowed, there exists a sequential circuit which transforms an arbitrary qubit-unitary operation into its inverse operation with fidelity one~\cite{yoshida2023reversing}, which was recently extended to an arbitrary dimension using $n=O(d^2)$ calls~\cite{Chen2024Reversing} (see also~\cite{Odake2024LowerBound}). 

In Appendix~\ref{appendix_sec:feasilibity_proof}, we show an upper bound of the fidelity of unitary inversion given by ${n+1\over d^2}$ for $n\leq d-1$ even if we consider the most general protocols including the ones with indefinite causal order.
Corollary~\ref{cor:optimal_unitary_estimation_small_n} implies that this bound is achievable by an estimation-based strategy.
More precisely, we prove the following theorem. 
\begin{Thm}
\label{thm:optimality_of_parallel_unitary_inversion}
  When $n\leq d-1$, the optimal fidelity of $d$-dimensional unitary inversion with $n$ calls of a unitary operation is given by
  \begin{align}
    F_\mathrm{inv}^{(\mathrm{PAR})}(n,d)=F_\mathrm{inv}^{(\mathrm{SEQ})}(n,d)=F_\mathrm{inv}^{(\mathrm{GEN})}(n,d)=\frac{n+1}{d^2},\label{eq:optimal_fidelity_n_less_than_d}
  \end{align}
  where $\mathrm{PAR}$, $\mathrm{SEQ}$, and $\mathrm{GEN}$ refer to parallel, sequential, and general protocols including the ones with indefinite causal order, respectively.
\end{Thm}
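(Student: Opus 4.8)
The plan is to collapse Eq.~\eqref{eq:optimal_fidelity_n_less_than_d} to a single inequality and then prove that inequality by semidefinite duality. Since every parallel protocol is sequential and every sequential protocol is general, $F_\mathrm{inv}^{(\mathrm{PAR})}(n,d)\le F_\mathrm{inv}^{(\mathrm{SEQ})}(n,d)\le F_\mathrm{inv}^{(\mathrm{GEN})}(n,d)$; on the other side, running an optimal $n$-call estimation protocol and then applying the channel $\hat U^{-1}(\cdot)\hat U$ is a valid parallel inversion protocol of average fidelity $F_\mathrm{est}(n,d)=(n+1)/d^2$ by Corollary~\ref{cor:optimal_unitary_estimation_small_n}. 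Hence everything reduces to proving $F_\mathrm{inv}^{(\mathrm{GEN})}(n,d)\le(n+1)/d^2$ for $n\le d-1$, which is the only real content of the theorem.

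To bound the general case I would cast it as a semidefinite program. A general deterministic protocol using $n$ calls of $\mcU$ and outputting a channel $\mcL(\mcP)\to\mcL(\mcF)$ --- allowing indefinite causal order among the calls --- is represented by a process operator $W\ge 0$ on $\mcP\otimes\bigotimes_{i=1}^n(\mcI_i\otimes\mcO_i)\otimes\mcF$ obeying the linear constraints of the corresponding higher-order-operation framework with a fixed global past $\mcP$ and future $\mcF$~\cite{chiribella2013quantum}. Inserting $\mcU$ into every slot produces an output channel whose Choi operator is the link product of $W$ with $\dketbra{U}^{\otimes n}$, so the Haar-averaged channel fidelity with the target $\mcU^{-1}$ becomes the linear functional $F=\Tr[W\,\Omega_\mathrm{inv}]$, where
\begin{align}
  \Omega_\mathrm{inv}=\frac{1}{d^2}\int\!\dd U\;\Big(\bigotimes_{i=1}^n\dketbra{U}^{T_{\mcI_i\mcO_i}}\Big)\otimes\dketbra{U^{-1}}_{\mcP\mcF},
\end{align}
and $F_\mathrm{inv}^{(\mathrm{GEN})}(n,d)$ is the maximum of $\Tr[W\,\Omega_\mathrm{inv}]$ over valid general processes $W$.

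I would then reduce by symmetry. Unitary inversion is covariant under $U\mapsto V_1UV_2$, which induces a group of transformations --- isomorphic to $\SU(d)\times\SU(d)$ --- acting on all wires as tensor products of (anti)fundamental representations and leaving $\Omega_\mathrm{inv}$ invariant; the task is moreover symmetric under the $S_n$ permuting the slots. Since conjugating all wires by these transformations maps valid general processes to valid general processes and the feasible set is convex, twirling restricts the optimisation, with no loss, to invariant $W$, i.e., to a small problem on multiplicity spaces. The hypothesis $n\le d-1$ enters here decisively: every Young diagram occurring in $U^{\otimes n}$ has at most $n\le d-1$ rows, so the isotypic decomposition is ``unobstructed'', and in particular the antifundamental representation that carries $U^{-1}$ appears in the slot register with multiplicity at most one (and not at all when $n<d-1$). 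This makes $\Omega_\mathrm{inv}$ low-rank with a simple block structure and renders the residual SDP tractable.

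The hard part, I expect, is closing this residual SDP exactly at $(n+1)/d^2$: the general-process constraints are strictly weaker than the quantum-comb (definite-order) constraints, so one cannot merely invoke the known parallel or sequential optimum and must genuinely show that indefinite causal order is useless here. The route I would take is to exhibit an explicit dual-feasible witness --- an operator $Y$ lying in the affine space dual to the process constraints with $Y\ge\Omega_\mathrm{inv}$ and dual objective $(n+1)/d^2$, built from the identity together with the projectors onto the relevant $\SU(d)$-isotypic blocks --- the condition $n\le d-1$ being exactly what guarantees such a closed-form $Y$ exists. Weak duality then gives $F_\mathrm{inv}^{(\mathrm{GEN})}(n,d)\le(n+1)/d^2$, and with the reductions of the first paragraph this yields \eqref{eq:optimal_fidelity_n_less_than_d}. (As a fallback one could instead show that the twirled general-process SDP shares its optimal value with the twirled parallel one, which equals $(n+1)/d^2$ by Corollary~\ref{cor:optimal_unitary_estimation_small_n} together with the estimation-to-parallel-inversion correspondence of Ref.~\cite{quintino2022deterministic}.)
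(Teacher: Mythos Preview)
Your overall strategy matches the paper's exactly: reduce to $F_\mathrm{inv}^{(\mathrm{GEN})}(n,d)\le(n+1)/d^2$ via SDP duality and then exhibit an explicit dual-feasible witness. The paper's dual problem is
\[
\min\,\lambda\quad\text{s.t.}\quad \Omega\le\lambda\,(W\otimes\1_\mcF),\qquad \Tr_{\mcO_i}W=\Tr_{\mcI_i\mcO_i}W\otimes\tfrac{\1_{\mcI_i}}{d}\ \ \forall i,\qquad \Tr W=d^n,
\]
and the entire content of the proof is the explicit construction of a feasible pair $(\lambda,W)$ with $\lambda=(n+1)/d^2$.

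Where your proposal has a gap is precisely at the step you yourself flag as ``the hard part''. Your guess that the witness is ``built from the identity together with the projectors onto the relevant $\SU(d)$-isotypic blocks'' is too optimistic. The paper's witness is
\[
W=\frac{1}{n+1}\sum_{\alpha\in\young{d}{n}}\ \sum_{\mu\in\alpha+\square}\frac{m_\mu}{m_\alpha d_\mu}\sum_{a,b=1}^{m_\alpha}(E^\alpha_{ab})_{\mcI^n}\otimes(E^\mu_{a^\alpha_\mu b^\alpha_\mu})_{\mcO^n\mcP},
\]
which carries nontrivial off-diagonal structure in the Young--Yamanouchi multiplicity labels and couples $\mcI^n$ to $\mcO^n\mcP$ through the branching $\alpha\to\mu\in\alpha+\square$; it is not a combination of isotypic projectors. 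Verifying the partial-trace constraint $\Tr_{\mcO_n}W=\Tr_{\mcI_n\mcO_n}W\otimes\1_{\mcI_n}/d$ is itself delicate and hinges on a separate representation-theoretic identity (the paper's Lemma~S7) converting a sum over $\mu\in(\alpha+\square)\cap(\beta+\square)$ weighted by matrix elements of the transposition $(n,n{+}1)$ into a sum over $\lambda\in(\alpha-\square)\cap(\beta-\square)$; the paper proves this by a dimension-limit trick. None of this falls out of a symmetry reduction alone.

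Your reading of where $n\le d-1$ enters (multiplicity of the antifundamental in the slot register) is also not the operative one. In the paper the hypothesis is used simply to guarantee $\alpha+\square\subset\young{d}{n+1}$ for every $\alpha\in\young{d}{n}$, which is exactly what makes the above $W$ well-defined; the antifundamental observation, while true, does not by itself produce the witness. Your fallback idea---showing the twirled general SDP coincides with the parallel one---would amount to proving the same thing by other means and is not obviously easier.
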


\section{Conclusion}

In this paper, we show the equivalence between the optimal fidelity of dPBT for $N=n+1$ and unitary estimation with $n$ calls of the unitary operation by giving an explicit transformation between the optimal protocols.
Combining this result with the previous results on the equivalence of the optimal fidelity between unitary estimation and dSAR~\cite{bisio2010optimal}, and also between parallel unitary inversion and unitary estimation~\cite{quintino2022deterministic},  we obtain a one-to-one correspondence among four important quantum information tasks.
Our correspondence result has the potential to accelerate the research on dPBT and unitary estimation since obtaining the optimal fidelity of dPBT in any setting, i.e., asymptotic or finite, can be immediately translated into that of unitary estimation, and \emph{vice versa}.
Here, we use this one-to-one correspondence to derive the asymptotically optimal dPBT protocol, a non-trivial result which improves the scaling of the fidelity derived in a previous work \cite{christandl2021asymptotic}.
In a related topic, we prove that when only $n\leq d-1$ calls are available, deterministic unitary inversion can always be obtained by an estimation protocol, showing that sequential circuits and strategies without definite causal order are not useful in the regime of $n\leq d-1$ calls.

Our proof of the one-to-one correspondence is based on the covariant forms of dPBT and unitary estimation, which is shown to be optimal in terms of fidelity.
On the other hand, Ref.~\cite{haah2023query} shows that a non-covariant adaptive protocol for unitary estimation with no auxiliary system achieves the asymptotically optimal performance in terms of the diamond-norm distance.
Since the transformation from a general protocol to a covariant protocol may induce the increase of the resources other than the number of calls $n$, e.g., the number of required qudits or non-Clifford gates to run the protocols, and the amount of entanglement in the resource states, the corresponding dPBT protocol constructed by our methods may not be implemented efficiently for some other notions of efficiency.
We leave it an open problem to construct a conversion between the fidelities of unitary estimation and dPBT in terms of resources other than the number of calls.

\section*{Acknowledgments}
We acknowledge Yuxiang Yang for fruitful discussions and Jisho Miyazaki and Wataru Yokojima for comments on the manuscript.
S.Y. acknowledges support by Japan Society for the Promotion of Science (JSPS) KAKENHI Grant Number 23KJ0734, FoPM, WINGS Program, the University of Tokyo, and DAIKIN Fellowship Program, the University of Tokyo.
Y.K. acknowledges support by FY2023 Study and Visit Abroad Program (SVAP 2023), the School of Science, the University of Tokyo.
M.S. acknowledges support by grant Sonata 16, UMO-2020/39/D/ST2/01234 from the Polish National Science Centre.
M.M. acknowledges support by MEXT Quantum Leap Flagship Program (MEXT QLEAP) JPMXS0118069605, JPMXS0120351339, JSPS KAKENHI Grant Number 23K21643 and IBM Quantum.
The quantum circuits shown in this paper are drawn using {\sc quantikz} \cite{kay2018tutorial}.

\appendix
Appendix~\ref{appendix_sec:shur_weyl_duality} reviews mathematical tools based on
the Young diagrams and the Schur-Weyl duality.
Appendix~\ref{appendix_sec:covariant_PBT} shows a construction of a covariant protocol for deterministic port-based teleportation (dPBT) from any given dPBT protocol, following Refs.~\cite{ishizaka2008asymptotic, mozrzymas2018optimal, leditzky2022optimality}.
The teleportation fidelity of the obtained covariant protocol is also shown (see Lem.~\ref{lem:covariant_PBT}).
Appendix~\ref{appendix_sec:covariant_unitary_estimation} shows a construction of a parallel covariant protocol, for unitary estimation from any given adaptive protocol \cite{chiribella2005optimal, bisio2010optimal}.
The fidelity of the obtained covariant protocol is also shown (see Lem.~\ref{lem:covariant_unitary_estimation}).
Appendix~\ref{appendix_sec:proof_equivalence} proves Thm.~\ref{thm:equivalence}, stating the one-to-one correspondence between dPBT and unitary estimation, by constructing a transformation between covariant protocols for dPBT and unitary estimation.
Appendix~\ref{appendix_sec:asymptotically_optimal_unitary_estimation} shows the asymptotically optimal fidelity of unitary estimation using the previous results from Refs.~\cite{yang2020optimal, haah2023query} (Lem.~\ref{lemma:asymptotically_optimal_unitary_estimation}).
Appendix~\ref{appendix_sec:optimal_unitary_estimation} shows an optimal protocol for unitary estimation for $n\leq d-1$ corresponding to Cor.~\ref{cor:optimal_unitary_estimation_small_n}.
Appendix~\ref{appendix_sec:feasilibity_proof} proves Thm.~\ref{thm:optimality_of_parallel_unitary_inversion}, stating that estimation-based protocol for unitary inversion is optimal for $n\leq d-1$ even when we consider indefinite causal order \cite{hardy2007towards, oreshkov2012quantum, chiribella2013quantum}.

\appendices
\section{Review of the Schur-Weyl duality, Young diagrams, and the Young-Yamanouchi basis}
\label{appendix_sec:shur_weyl_duality}
This section reviews mathematical tools based on the Young diagrams and the Schur-Weyl duality.
We suggest the standard textbooks, e.g. Refs.~\cite{fulton1997young, georgi2000lie, ceccherini2010representation}, for more detailed reviews.

A Young diagram is defined in association with a partition of $n$ for a non-negative integer $n$.
A partition of $n$ is a non-decreasing sequence of positive integers $(\alpha_1, \ldots, \alpha_d)$ for any non-negative integer $d$ with $\alpha_1\geq \cdots \geq \alpha_d > 0$ which sums up to $\sum_{i=1}^{d} \alpha_i = n$.
A partition of $n$ can be represented as a Young diagram with $n$ boxes, which has $\alpha_i$ boxes in the $i$-th row.
For instance, $(2, 1)$ can be represented as a Young diagram given by
\begin{align}
\label{eq:example_young_diagram}
    \alpha = \ydiagram{2,1}.
\end{align}
We call $d$, the number of positive integers in the sequence associated with the Young diagram $\alpha$, to be the depth of each Young diagram.
We denote the set of Young diagrams with $n$ boxes and at most depth $d$ by $\young{d}{n}$, i.e., 
\begin{align}
\label{eq:def_Ydn}
    \young{d}{n}\coloneqq \left\{(\alpha_1, \ldots, \alpha_{d'}) \; \middle| \; \substack{0\leq d'\leq d, \alpha_1\geq \cdots \geq \alpha_{d'}>0, \\\sum_{i=1}^{d'}\alpha_i = n}\right\}.
\end{align}
For a Young diagram $\alpha$, we define the following two sets:
\begin{align}
\label{eq:add_box}
    &\alpha+\square \coloneqq \big\{(\alpha_1, \ldots, \alpha_i+1, \ldots, \alpha_d)\;\big|\;\substack{i\in\{1, \ldots, d\} \\\text{s.t. } \alpha_{i-1}\geq \alpha_i+1}\big\} \nonumber\\
    &\hspace{120pt} \cup\{(\alpha_1,\ldots, \alpha_d, 1)\},\\
\label{eq:remove_box}
    &\alpha-\square \nonumber\\
    &\coloneqq 
    \begin{cases}
        \big\{(\alpha_1, \ldots, \alpha_i-1, \ldots, \alpha_{d})\;\big|\;\substack{i\in\{1, \ldots, d\} \\\text{s.t. } \alpha_{i}-1\geq \alpha_{i+1}}\big\} & (\alpha_d >1)\\
        \big\{(\alpha_1, \ldots, \alpha_i-1, \ldots, \alpha_{d})\;\big|\;\substack{i\in\{1, \ldots, d-1\} \\\text{s.t. } \alpha_{i}-1\geq \alpha_{i+1}}\big\} \nonumber\\
        \hspace{90pt} \cup \{(\alpha_1, \ldots, \alpha_{d-1})\} & (\alpha_d = 1)
    \end{cases}
    ,
\end{align}
where $\alpha_0$ and $\alpha_{d+1}$ are taken to be $\alpha_0 = \infty$ and $\alpha_{d+1} = 0$, and $\alpha+\square (\alpha-\square)$ represents the set of Young diagrams adding a box to (removing a box from) $\alpha$.
For instance, for the Young diagram $\alpha$ given in Eq.~\eqref{eq:example_young_diagram}, $\alpha+\square$ and $\alpha-\square$ are given by
\begin{align}
    \alpha+\square &= \left\{ \ydiagram{3,1}, \ydiagram{2,2}, \ydiagram{2,1,1} \right\},\\
    \alpha-\square &= \left\{ \ydiagram{1,1}, \ydiagram{2} \right\}.
\end{align}
For a Young diagram $\alpha$ with $n$ boxes, we consider a sequence of Young diagrams given by
\begin{align}
\label{eq:sequence_of_young_diagram}
    (\alpha^{(0)} = \emptyset, \alpha^{(1)}, \ldots, \alpha^{(n-1)}, \alpha^{(n)} = \alpha),
\end{align}
where $\emptyset$ is the Young diagram with zero boxes and $\alpha^{(i)}$ satisfies $\alpha^{(i+1)}\in \alpha^{(i)}+\square$.
A sequence of Young diagrams can be represented by a standard Young tableau, which is given by filling $i$ on the box of the Young diagram $\alpha$ that is added when we obtain $\alpha^{(i)}$ from $\alpha^{(i-1)}$.
The Young diagram to be filled with numbers is called the frame of a corresponding standard tableau.
For instance, a sequence
\begin{align}
    \left(\emptyset, \ydiagram{1}, \ydiagram{2}, \ydiagram{2,1}\right)
\end{align}
is represented by a standard tableau
\begin{align}
    \ytableaushort{1 2, 3},
\end{align}
with the frame
\begin{align}
    \ydiagram{2,1}.
\end{align}
The number of the standard tableaux with frame $\alpha$, denoted by $m_\alpha$, is given by the hook-length formula:
\begin{align}
    \label{eq:hook-length}
    m_\alpha &= {n! \over \prod_{(i,j)\in \alpha} \mathrm{hook}_\lambda (i,j)}.
\end{align}
Here, $(i,j)$ represents a box in the $i$-th row and the $j$-th column in the Young diagram $\alpha$, $\mathrm{hook}_\lambda (i,j)$ is defined by
\begin{align}
    \mathrm{hook}_\lambda (i,j)\coloneqq \alpha_i + \alpha_j' - i - j + 1,
\end{align}
and $\alpha_i$ and $\alpha'_j$ are the numbers of boxes in the $i$-th row and the $j$-th column of $\alpha$, respectively.
We denote the set of the standard tableaux with frame $\alpha$ by
\begin{align}
\label{eq:stab}
    \STab(\alpha) \coloneqq \{s^{\alpha}_1, \ldots s^{\alpha}_{m_\alpha}\},
\end{align}
where each standard tableau is indexed by $a\in\{1, \ldots, m_{\alpha}\}$ and the $a$-th standard tableau is denoted by $s^{\alpha}_a$.
For a Young diagrams $\alpha$ and a standard tableau $s^\alpha_a$ with frame $\alpha$ associated with a sequence \eqref{eq:sequence_of_young_diagram},
a standard tableau with frame $\mu\in\alpha+\square$ is referred to be obtained by adding a box \fbox{$n+1$} to $s^{\alpha}_{a}$, when its associated sequence is given by
\begin{align}
    (\alpha^{(0)} = \emptyset, \alpha^{(1)}, \ldots, \alpha^{(n-1)}, \alpha^{(n)} = \alpha, \alpha^{(n+1)} = \mu).
\end{align}
We denote the index of thus obtained standard tableau with frame $\mu$ by $a^\alpha_{\mu}\in\{1, \ldots, m_\mu\}$ so that the standard tableau itself is represented by $s^{\mu}_{a^\alpha_{\mu}}$.
We also call $s^{\alpha}_a$ to be the standard tableau obtained by removing a box \fbox{$n+1$} from $s^{\mu}_{a^\alpha_\mu}$.
For instance, in the case of
\begin{align}
    \alpha = \ydiagram{2,1},\quad s^\alpha_a = \ytableaushort{1 2, 3},\quad \\
    \alpha+\square = \left\{ \mu = \ydiagram{3,1},\quad \nu = \ydiagram{2,2},\quad  \lambda = \ydiagram{2,1,1}\right\},
\end{align}
the standard tableaux obtained by adding a box \fbox{$4$} to $s^{\alpha}_{a}$ are given by
\begin{align}
    s^{\mu}_{a^\alpha_{\mu}} = \ytableaushort{1 2 4, 3},\quad s^{\nu}_{a^\alpha_\nu} = \ytableaushort{1 2, 3 4},\quad s^{{ \lambda}}_{a^\alpha_\lambda} = \ytableaushort{1 2, 3, 4}.
\end{align}

We consider the following representations on $(\CC^d)^{\otimes {n} }$ of the special unitary group $\SU(d)$ and the symmetric group $\mfS_n$:
\begin{align}
\label{eq:representation_unitary}
    &\SU(d)\ni U \mapsto U^{\otimes n} \in \mcL(\CC^d)^{\otimes { n} },\\
    &\mfS_{n} \ni \sigma \mapsto V_{\sigma} \in \mcL(\CC^d)^{\otimes {n} },
\end{align}
where $\mcL(\mcX)$ is the set of linear operators on a Hilbert space $\mcX$, and $V_{\sigma}$ is a permutation operator defined as
\begin{align}
\label{eq:def_permutation_operator}
    V_{\sigma}\ket{i_1\cdots i_{n}} \coloneqq \ket{i_{\sigma^{-1}(1)}\cdots i_{\sigma^{-1}(n)}}
\end{align}
for the computational basis $\{\ket{i}\}_{i=1}^{d}$ of $\CC^d$.
Then, these representations are decomposed simultaneously as follows:
\begin{align}
    (\CC^d)^{\otimes { n}} &= \bigoplus_{\alpha\in\young{d}{{n}}} \mcU_{\alpha} \otimes \mcS_{\alpha},\label{eq:decomp_space}\\
    U^{\otimes {n}} &= \bigoplus_{\alpha \in \young{d}{{ n}}} U_{\alpha} \otimes \1_{\mcS_{\alpha}},\\
    V_{\sigma} &= \bigoplus_{\alpha \in\young{d}{{ n}}} \1_{\mcU_{\alpha}} \otimes \sigma_{\alpha},\label{eq:def_sigma_mu}
\end{align}
where $\alpha$ runs in the set $\young{d}{n}$ defined in Eq.~\eqref{eq:def_Ydn}, $\SU(d)\ni U\mapsto U_{\alpha}\in \mcL(\mcU_{\alpha})$  is an irreducible representation of $\SU(d)$, and $\mfS_{n}\ni \sigma \mapsto \sigma_{\alpha}\in \mcL(\mcS_{\alpha})$ is an irreducible representation of $\mfS_n$.
This relation shows that any operator commuting with $U^{\otimes n}$ for all $U\in\SU(d)$ can be written as a linear combination of $\{V_\sigma\}_{\sigma\in\mfS_n}$, which is called the Schur-Weyl duality.
The orthogonal projector onto $\mcU_\alpha \otimes \mcS_\alpha$ is called the Young projector \cite{studzinski2022efficient}, denoted by $P_\alpha$.
The dimension of $\mcU_\alpha$, denoted by $d_\alpha$, is given by the hook-content formula:
\begin{align}
\label{eq:hook-content}
    d_\alpha &= \prod_{(i,j)\in \alpha} {d+j-i \over \mathrm{hook}_\lambda (i,j)}.
\end{align}
The dimension of $\mcS_\alpha$ is given by $m_\alpha$, which is the number of the standard tableaux with frame $\alpha$ given in Eq.~\eqref{eq:hook-length}.
The dimension $m_\alpha$ satisfies the following Lemma.

\begin{Lem}
\label{lem:symmetric_group_induced_irrep_dimension}
    For any given Young diagram $\alpha$ with $n$ boxes,
    \begin{align}
        \sum_{\mu\in\alpha+\square} m_\mu = (n+1) m_\alpha
    \end{align}
    holds.
\end{Lem}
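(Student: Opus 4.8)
This identity is precisely the branching rule for the symmetric group, read off at the level of dimensions. Recall that $m_\alpha$, the number of standard tableaux with frame $\alpha$, equals the dimension of the irreducible $\mfS_n$-module (Specht module) $S^\alpha$. The plan for the short proof is to invoke $\mathrm{Ind}_{\mfS_n}^{\mfS_{n+1}} S^\alpha \cong \bigoplus_{\mu\in\alpha+\square} S^\mu$ and compare dimensions: the left-hand side has dimension $[\mfS_{n+1}:\mfS_n]\,\dim S^\alpha = (n+1)m_\alpha$, while the right-hand side has dimension $\sum_{\mu\in\alpha+\square} m_\mu$. Equivalently, one may quote Frobenius reciprocity, or the relation $DU-UD=\mathrm{id}$ between the raising and lowering operators on Young's lattice, applied to $\emptyset$.

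Since the rest of this appendix is self-contained combinatorics, I would instead give a direct proof by induction on $n$, using only the elementary recurrence $m_\lambda = \sum_{\nu\in\lambda-\square} m_\nu$ (which follows from sorting standard tableaux of frame $\lambda$ by the position of their largest entry). Starting from $\sum_{\mu\in\alpha+\square} m_\mu = \sum_{\mu\in\alpha+\square}\sum_{\beta\in\mu-\square} m_\beta$, I would split off the diagonal terms $\beta=\alpha$, which contribute $|\alpha+\square|\,m_\alpha$. For the off-diagonal terms ($\beta\neq\alpha$) one checks that $\mu=\alpha\cup\beta$ and $\gamma:=\alpha\cap\beta\in\alpha-\square$, so, reorganizing the sum by $\gamma$, the diagrams $\beta$ that occur for a fixed $\gamma$ are exactly those in $(\gamma+\square)\setminus\{\alpha\}$; their contribution is $\sum_{\beta\in\gamma+\square} m_\beta - m_\alpha$, which by the induction hypothesis at $n-1$ equals $n\,m_\gamma - m_\alpha$. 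Summing over $\gamma\in\alpha-\square$, using $\sum_{\gamma\in\alpha-\square} m_\gamma = m_\alpha$ and the elementary fact that a Young diagram has exactly one more addable than removable corner, $|\alpha+\square| = |\alpha-\square| + 1$, gives $\sum_{\mu\in\alpha+\square} m_\mu = (|\alpha-\square|+1)m_\alpha + n\,m_\alpha - |\alpha-\square|\,m_\alpha = (n+1)m_\alpha$. The base case $n=0$ is immediate.

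The only delicate point in the combinatorial route is the reindexing: one must verify that $(\mu,\beta)\mapsto(\alpha\cap\beta,\beta)$ is a bijection from $\{(\mu,\beta):\mu\in\alpha+\square,\ \beta\in\mu-\square,\ \beta\neq\alpha\}$ onto $\{(\gamma,\beta):\gamma\in\alpha-\square,\ \beta\in(\gamma+\square)\setminus\{\alpha\}\}$. The forward direction uses $|\alpha\cap\beta|\geq|\alpha|+|\beta|-|\mu|=n-1$ together with $\alpha\neq\beta$ to force $|\alpha\cap\beta|=n-1$ and $\alpha\cup\beta=\mu$; the reverse direction requires noting that if $b$ is the box of $\beta$ not in $\gamma$, then $b$ is addable to $\alpha$ as well as to $\gamma$ (because $b$ differs from the box removed from $\alpha$ to form $\gamma$, and adding that box back cannot block the corner $b$), so that $\mu:=\alpha\cup\{b\}\in\alpha+\square$ and $\beta\in\mu-\square$. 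This bookkeeping with addable and removable corners is routine but is where attention is needed; everything else is rearrangement of finite sums. Via the representation-theoretic route there is essentially no obstacle beyond quoting the branching rule.
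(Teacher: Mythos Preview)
Your first paragraph is exactly the paper's proof: it quotes the branching rule $\mathrm{Ind}_{\mfS_n}^{\mfS_{n+1}}\mcS_\alpha=\bigoplus_{\mu\in\alpha+\square}\mcS_\mu$ and compares dimensions, using $\dim\mathrm{Ind}_{\mfS_n}^{\mfS_{n+1}}\mcS_\alpha=\frac{|\mfS_{n+1}|}{|\mfS_n|}\dim\mcS_\alpha=(n+1)m_\alpha$.

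Your second route, the inductive combinatorial argument, is genuinely different from what the paper does. The paper stops at the one-line representation-theoretic computation; you instead reduce everything to the recurrence $m_\lambda=\sum_{\nu\in\lambda-\square}m_\nu$ and the corner identity $|\alpha+\square|=|\alpha-\square|+1$, carrying the bijection between $\{(\mu,\beta):\mu\in\alpha+\square,\beta\in\mu-\square,\beta\neq\alpha\}$ and $\{(\gamma,\beta):\gamma\in\alpha-\square,\beta\in(\gamma+\square)\setminus\{\alpha\}\}$ explicitly. I checked the bijection and it is correct, including the point you flag as delicate (that the box $b_1$ of $\beta\setminus\gamma$ is addable to $\alpha$ and that the box $b_2$ of $\alpha\setminus\gamma$ remains removable from $\mu$; both follow because $b_1\neq b_2$ forces them to lie in different rows and columns as corners). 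The trade-off is clear: the paper's argument is a two-line citation of a standard fact but presupposes the branching rule; your combinatorial version is longer and needs the corner-counting lemma, but it is entirely self-contained within the Young-tableaux language already set up in the appendix and avoids invoking induced representations. Either is fine here; if you keep the combinatorial one, it may be worth stating $|\alpha+\square|=|\alpha-\square|+1$ as a separate one-line remark, since it is the only ingredient not already introduced.
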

\begin{proof}
    Since the irreducible representations $\alpha$ and $\mu\in\alpha+\square$ are related by
    \begin{align}
        \mathrm{Ind}_{\mfS_n}^{\mfS_{n+1}} \mcS_\alpha = \bigoplus_{\mu\in\alpha+\square} \mcS_\mu,
    \end{align}
    where $\text{Ind}_H^G \pi$ represents the induced representation for a finite group $G$, its subgroup $H$ and a representation $\pi$ of the group $H$.
    Then, we obtain
    \begin{align}
        \sum_{\mu\in\alpha+\square} m_\mu &= \dim \bigoplus_{\mu\in\alpha+\square} \mcS_\mu = \dim \mathrm{Ind}_{\mfS_n}^{\mfS_{n+1}} \mcS_\alpha \\
        &= {|\mfS_{n+1}| \over |\mfS_n|} \dim \mcS_\alpha = (n+1)m_\alpha.
    \end{align}
\end{proof}

Due to Schur's lemma, any operator commuting with $U^{\otimes n}$ can be written as a linear combination of the operators $E^{\alpha}_{ab}$ defined by \cite{studzinski2022efficient}
\begin{align}
    E^{\alpha}_{ab}\coloneqq \1_{\mcU_\alpha} \otimes \ketbra{\alpha, a}{\alpha, b}_{\mcS_{\alpha}},\label{eq:def_E}
\end{align}
for $a,b\in \range{m_{\alpha}}$, where $\{\ket{\alpha, a}\}$ is an orthonormal basis of $\mcS_{\alpha}$.
In particular, we take the Young-Yamanouchi basis \cite{young1931quantitative, yamanouchi1937construction} (or Young's orthogonal form \cite{ceccherini2010representation}) of $\mcS_{\alpha}$.
Each element in the Young-Yamanouchi basis $\{\ket{\alpha, a}\}$ is associated with the standard tableaux $s^{\alpha}_a$ in the set $\STab(\alpha)$ [see Eq.~\eqref{eq:stab}].
The Young-Yamanouchi basis is a subgroup-adapted basis, i.e., the action of $\mfS_n\subset \mfS_{n+1}$ on the Young-Yamanouchi basis $\{\ket{\mu,i}\}$ for $\mu\in\young{d}{n+1}$ is unitarily equivalent to the action of $\mfS_n$ on the Young-Yamanouchi basis $\{\ket{\alpha, a}\}$ for $\alpha\in\young{d}{n}$ as follows:
\begin{align}
    \bra{\mu, i} \sigma_\mu \ket{\mu, j} = \delta_{\alpha, \beta} \bra{\alpha, a}\sigma_\alpha \ket{\beta, b}\;\;\;\forall \sigma\in\mfS_n, \label{eq:subgroup_adapted}
\end{align}
where $s^{\alpha}_a$ and $s^{\beta}_b$ are the standard tableaux obtained by removing a box \fbox{$n+1$} from $s^{\mu}_i$ and $s^{\mu}_j$, respectively.

The basis $\{E^\mu_{ij}\}$ satisfies the following Lemmas (see, e.g.,  Refs.~\cite{studzinski2022efficient,yoshida2023reversing} for the proofs):

\begin{Lem}
  The basis $\{E^{\mu}_{ij}\}$ satisfies
  \begin{align}
      (E^{\mu}_{ij})^*&=E^{\mu}_{ij},\label{eq:lem2_1}\\
      \Tr E^{\mu}_{ij} &= d_{\mu} \delta_{i, j},\label{eq:lem2_2}\\
      E^{\mu}_{ij}E^{\nu}_{kl} &= \delta_{\mu, \nu} \delta_{j, k} E^{\mu}_{il},\label{eq:lem2_3}
  \end{align}
  where $X^*$ is the complex conjugate of $X$ in the computational basis and $\delta_{i, j}$ is the Kronecker delta defined as $\delta_{i, i}=1$ and $\delta_{i, j}=0$ for $i\neq j$.
\end{Lem}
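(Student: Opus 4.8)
The lemma collects three properties of the matrix units $E^\mu_{ij}=\1_{\mcU_\mu}\otimes\ketbra{\mu,i}{\mu,j}_{\mcS_\mu}$ of Eq.~\eqref{eq:def_E}; of these, only the reality relation~\eqref{eq:lem2_1} needs a genuine argument, so the plan is to settle~\eqref{eq:lem2_2} and~\eqref{eq:lem2_3} first and then focus on~\eqref{eq:lem2_1}.

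For~\eqref{eq:lem2_2}, I would take the trace over $(\CC^d)^{\otimes n}$ and use that $E^\mu_{ij}$ is supported on the single block $\mcU_\mu\otimes\mcS_\mu$ of the decomposition~\eqref{eq:decomp_space}: the trace then factorizes as $\Tr(\1_{\mcU_\mu})\,\Tr(\ketbra{\mu,i}{\mu,j})=d_\mu\braket{\mu,j}{\mu,i}=d_\mu\delta_{ij}$, using $\dim\mcU_\mu=d_\mu$ and orthonormality of the Young--Yamanouchi basis. For~\eqref{eq:lem2_3}, I would use that the blocks $\mcU_\mu\otimes\mcS_\mu$ and $\mcU_\nu\otimes\mcS_\nu$ are mutually orthogonal subspaces of $(\CC^d)^{\otimes n}$, so that $E^\mu_{ij}E^\nu_{kl}$ vanishes unless $\mu=\nu$, together with the elementary identity $\ketbra{\mu,i}{\mu,j}\ketbra{\mu,k}{\mu,l}=\delta_{jk}\ketbra{\mu,i}{\mu,l}$ on the $\mu=\nu$ block; combining the two cases yields $E^\mu_{ij}E^\nu_{kl}=\delta_{\mu\nu}\delta_{jk}E^\mu_{il}$.

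The substantive statement is~\eqref{eq:lem2_1}, that $E^\mu_{ij}$ is a real matrix in the computational basis. The key input is that the Young--Yamanouchi basis realizes Young's orthogonal form, in which every representation matrix $[\sigma_\mu]_{ab}$ of $\mfS_n$ is real (in fact real orthogonal). Since $U^{\otimes n}=\bigoplus_\alpha U_\alpha\otimes\1_{\mcS_\alpha}$ acts trivially on the multiplicity spaces, $E^\mu_{ij}$ commutes with $U^{\otimes n}$ for all $U\in\SU(d)$, so the Schur--Weyl duality lets me expand it in the permutation operators $\{V_\sigma\}_{\sigma\in\mfS_n}$; evaluating block by block using Eq.~\eqref{eq:def_sigma_mu} and the Schur orthogonality relations pins down the coefficients as $E^\mu_{ij}=\tfrac{m_\mu}{n!}\sum_{\sigma\in\mfS_n}[\sigma_\mu]_{ij}\,V_\sigma$ (the exact placement of the indices is immaterial for what follows). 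Each $V_\sigma$ is a permutation matrix with $0/1$ entries in the computational basis by Eq.~\eqref{eq:def_permutation_operator}, hence real, and the coefficients $[\sigma_\mu]_{ij}$ are real by the previous remark, so $E^\mu_{ij}$ is real, i.e.\ $(E^\mu_{ij})^*=E^\mu_{ij}$.

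The only non-mechanical ingredient, and hence the main obstacle, is the reality of Young's orthogonal form underlying~\eqref{eq:lem2_1}; this is itself standard, the matrix elements being rational functions of the axial distances of the relevant standard tableaux, and I would either reproduce that short recursion or cite the textbook treatments, e.g.\ Refs.~\cite{ceccherini2010representation, fulton1997young}. All three identities are also recorded in Refs.~\cite{studzinski2022efficient, yoshida2023reversing}, which I would cite as well.
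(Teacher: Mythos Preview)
Your proof is correct; the paper itself does not prove this lemma at all but merely states it and refers the reader to Refs.~\cite{studzinski2022efficient,yoshida2023reversing} for the proofs, so your argument---expanding $E^\mu_{ij}$ in the permutation operators via Schur--Weyl duality and invoking the reality of Young's orthogonal form for~\eqref{eq:lem2_1}, and the direct block computations for~\eqref{eq:lem2_2} and~\eqref{eq:lem2_3}---already supplies more than the paper does and matches the standard route taken in those references.
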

  
\begin{Lem}
  Let $a^\alpha_{\mu}$ be the index of a standard tableau $s^{\mu}_{a^\alpha_{\mu}}$ obtained by adding a box \fbox{$n+1$} to a standard tableau $s^{\alpha}_a$ for $\alpha\in\young{d}{n}$ and $\mu\in\alpha+\square$. Then, $E^{\alpha}_{ab}\otimes \1_d$ can be written as
  \begin{align}
      E^{\alpha}_{ab} \otimes \1_d = \sum_{\mu\in\alpha+\square} E^{\mu}_{a^\alpha_{\mu} b^\alpha_{\mu}},
  \end{align}
  where $\1_d$ is the identity operator on $\CC^d$.
\end{Lem}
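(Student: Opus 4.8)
The plan is to read off how both sides act on the Schur--Weyl basis of $(\CC^d)^{\otimes(n+1)}$, using that both sides lie in the commutant of $U^{\otimes(n+1)}$. Since $E^\alpha_{ab}$ commutes with $U^{\otimes n}$ for every $U\in\SU(d)$, the operator $E^\alpha_{ab}\otimes\1_d$ commutes with $U^{\otimes(n+1)}$, so by Schur's lemma it has the block form $\bigoplus_{\mu\in\young{d}{n+1}}\1_{\mcU_\mu}\otimes M_\mu$ with $M_\mu\in\mcL(\mcS_\mu)$, and the right-hand side $\sum_{\mu\in\alpha+\square}E^\mu_{a^\alpha_\mu b^\alpha_\mu}$ has exactly the same shape. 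Hence it suffices to show that $M_\mu=\ketbra{\mu,a^\alpha_\mu}{\mu,b^\alpha_\mu}$ for $\mu\in\alpha+\square$ and $M_\mu=0$ otherwise.

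The key step is to compare two decompositions of $(\CC^d)^{\otimes(n+1)}=(\CC^d)^{\otimes n}\otimes\CC^d$. Decomposing the first $n$ factors by Schur--Weyl and then using the $\SU(d)$ branching (Pieri) rule $\mcU_\beta\otimes\CC^d\cong\bigoplus_{\mu\in\beta+\square}\mcU_\mu$ gives
\begin{align}
 (\CC^d)^{\otimes(n+1)}\;\cong\;\bigoplus_{\beta\in\young{d}{n}}\ \bigoplus_{\mu\in\beta+\square}\mcU_\mu\otimes\mcS_\beta ,
\end{align}
which must be matched with the direct Schur--Weyl decomposition $\bigoplus_{\mu}\mcU_\mu\otimes\mcS_\mu$. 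Since $(\CC^d)^{\otimes(n+1)}$, viewed as an $\SU(d)\times\mfS_n$ representation, is multiplicity-free -- each irreducible component, labelled by a pair $(\mu,\beta)$ with $\mu\in\young{d}{n+1}$ and $\beta\in\mu-\square$, occurs exactly once -- these two decompositions coincide, which identifies $\mcS_\mu\cong\bigoplus_{\beta\in\mu-\square}\mcS_\beta$ as $\mfS_n$-modules. The subgroup-adapted property of the Young--Yamanouchi basis, Eq.~\eqref{eq:subgroup_adapted}, is precisely the statement that under this identification $\ket{\beta,b}$ is sent to $\ket{\mu,b^\beta_\mu}$ (up to a phase which, by Eq.~\eqref{eq:subgroup_adapted} and irreducibility of $\mcS_\beta$, is independent of the tableau $b$ and therefore cancels once rank-one operators are formed).

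Granting this, the computation is immediate. On the summand $\mcU_\alpha\otimes\mcS_\alpha\otimes\CC^d$ of the iterated decomposition, $E^\alpha_{ab}\otimes\1_d$ acts as $\1_{\mcU_\alpha}\otimes\ketbra{\alpha,a}{\alpha,b}\otimes\1_d$ and it annihilates the summands with $\beta\neq\alpha$. Reordering tensor factors and applying $\mcU_\alpha\otimes\CC^d\cong\bigoplus_{\mu\in\alpha+\square}\mcU_\mu$ turns $\1_{\mcU_\alpha}\otimes\1_d$ into $\bigoplus_{\mu\in\alpha+\square}\1_{\mcU_\mu}$, while the identification above carries $\ketbra{\alpha,a}{\alpha,b}$ into $\ketbra{\mu,a^\alpha_\mu}{\mu,b^\alpha_\mu}$ inside each $\mcS_\mu$ with $\mu\in\alpha+\square$. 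Hence $E^\alpha_{ab}\otimes\1_d=\bigoplus_{\mu\in\alpha+\square}\1_{\mcU_\mu}\otimes\ketbra{\mu,a^\alpha_\mu}{\mu,b^\alpha_\mu}=\sum_{\mu\in\alpha+\square}E^\mu_{a^\alpha_\mu b^\alpha_\mu}$.

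The main obstacle is purely organizational rather than computational: one must make sure the $\SU(d)$ branching isometries and the Young--Yamanouchi multiplicity spaces are built from compatible subgroup chains, so that the two decompositions above genuinely agree and the index map $b\mapsto b^\beta_\mu$ -- rather than some permuted or phase-twisted version of it -- is the correct one. This compatibility is exactly what Eq.~\eqref{eq:subgroup_adapted} together with the multiplicity-freeness of $(\CC^d)^{\otimes(n+1)}$ as an $\SU(d)\times\mfS_n$ module secures. A reader who prefers to bypass the identification of bases can instead expand $E^\alpha_{ab}\otimes\1_d=\sum_{\mu,i,j}c^\mu_{ij}E^\mu_{ij}$, extract $c^\mu_{ij}=d_\mu^{-1}\Tr[(E^\alpha_{ab}\otimes\1_d)E^\mu_{ji}]$ using Eqs.~\eqref{eq:lem2_2}--\eqref{eq:lem2_3}, and evaluate this trace by tracing out the last tensor factor and invoking Eq.~\eqref{eq:subgroup_adapted}; both routes rest on the same branching input.
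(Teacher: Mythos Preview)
The paper does not actually prove this lemma; it states the result in Appendix~A and defers the proof to the cited references (Refs.~\cite{studzinski2022efficient,yoshida2023reversing}). Your argument is correct and is essentially the standard one found in those references: place $E^\alpha_{ab}\otimes\1_d$ in the commutant of $U^{\otimes(n+1)}$, compare the iterated Schur--Weyl decomposition $(\CC^d)^{\otimes n}\otimes\CC^d\cong\bigoplus_{\beta}\bigoplus_{\mu\in\beta+\square}\mcU_\mu\otimes\mcS_\beta$ with the direct one via the Pieri rule, and then invoke the subgroup-adapted property~\eqref{eq:subgroup_adapted} of the Young--Yamanouchi basis to identify $\ket{\alpha,a}$ with $\ket{\mu,a^\alpha_\mu}$ inside $\mcS_\mu$. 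Your care about the phase ambiguity is appropriate, and your resolution---that multiplicity-freeness of $(\CC^d)^{\otimes(n+1)}$ as an $\SU(d)\times\mfS_n$ module forces the intertwiner on each $\mcS_\beta$ block to be a scalar, which then drops out of the rank-one operator---is exactly right. The alternative trace computation you sketch at the end is also valid and is in fact the dual statement to the partial-trace lemma that immediately follows this one in the paper.
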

  
\begin{Lem}
  Let $s^{\alpha}_a$ and $s^{\beta}_b$ be the standard tableaux obtained by removing a box \fbox{$n+1$} from $s^{\mu}_i$ and $s^{\mu}_j$, respectively, for $\mu, \nu\in\young{d}{n+1}$. The partial trace of $E^{\mu}_{ij}$ in the last system is given by
  \begin{align}
      \Tr_{n+1} E^{\mu}_{ij} = \delta_{\alpha, \beta}\frac{d_\mu}{d_\alpha} E^{\alpha}_{ab}.
  \end{align}
\end{Lem}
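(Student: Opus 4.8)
The plan is to determine $\Tr_{n+1}E^\mu_{ij}$ by first locating it inside the commutant of the $\SU(d)$-action on the first $n$ systems and then reading off its components in the basis $\{E^\gamma_{pq}\}$. First I would observe that $E^\mu_{ij}=\1_{\mcU_\mu}\otimes\ketbra{\mu,i}{\mu,j}$ commutes with $U^{\otimes(n+1)}=\bigoplus_\nu U_\nu\otimes\1_{\mcS_\nu}$ for every $U\in\SU(d)$. Writing $U^{\otimes(n+1)}=U^{\otimes n}\otimes U$ and using cyclicity of the partial trace over the last factor, $\Tr_{n+1}[(\1\otimes U)E^\mu_{ij}(\1\otimes U)^\dagger]=\Tr_{n+1}E^\mu_{ij}$, so that $U^{\otimes n}(\Tr_{n+1}E^\mu_{ij})(U^{\otimes n})^\dagger=\Tr_{n+1}E^\mu_{ij}$ for all $U$. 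By the Schur--Weyl duality this forces $\Tr_{n+1}E^\mu_{ij}=\sum_{\gamma\in\young{d}{n}}\sum_{p,q}c_{\gamma pq}\,E^\gamma_{pq}$ for some coefficients $c_{\gamma pq}$, since the $E^\gamma_{pq}$ span precisely the commutant.

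To extract $c_{\gamma pq}$ I would use the trace orthogonality that follows from Eqs.~\eqref{eq:lem2_2}--\eqref{eq:lem2_3}, namely $\Tr[E^{\gamma'}_{p'q'}E^\gamma_{qp}]=\delta_{\gamma'\gamma}\delta_{p'p}\delta_{q'q}\,d_\gamma$, which gives $c_{\gamma pq}=d_\gamma^{-1}\Tr[(\Tr_{n+1}E^\mu_{ij})\,E^\gamma_{qp}]$. The key move is to turn this partial-trace pairing into an ordinary trace on $n+1$ systems, $\Tr[(\Tr_{n+1}E^\mu_{ij})E^\gamma_{qp}]=\Tr[E^\mu_{ij}(E^\gamma_{qp}\otimes\1_d)]$, and then to apply the preceding Lemma, $E^\gamma_{qp}\otimes\1_d=\sum_{\nu\in\gamma+\square}E^\nu_{q^\gamma_\nu p^\gamma_\nu}$. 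Expanding with the algebra relation \eqref{eq:lem2_3} and taking the trace via \eqref{eq:lem2_2} collapses the sum to the single term $\nu=\mu$, leaving $c_{\gamma pq}=\tfrac{d_\mu}{d_\gamma}\,\delta_{i,p^\gamma_\mu}\delta_{j,q^\gamma_\mu}$ whenever $\mu\in\gamma+\square$ and zero otherwise.

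It remains to interpret the surviving Kronecker deltas, and this is the step I expect to be the most delicate. By definition of adding a box, $p^\gamma_\mu=i$ holds exactly when removing the box \fbox{$n+1$} from $s^\mu_i$ returns $s^\gamma_p$; since the statement defines $(\alpha,a)$ by this very removal, $\delta_{i,p^\gamma_\mu}$ selects $(\gamma,p)=(\alpha,a)$, and likewise $\delta_{j,q^\gamma_\mu}$ selects $(\gamma,q)=(\beta,b)$. Because both constraints share the same index $\gamma$, they are simultaneously satisfiable only when $\alpha=\beta$, which is precisely where the factor $\delta_{\alpha,\beta}$ in the claim originates; the prefactor is then $d_\mu/d_\alpha$ and the surviving basis element is $E^\alpha_{ab}$, yielding $\Tr_{n+1}E^\mu_{ij}=\delta_{\alpha,\beta}\tfrac{d_\mu}{d_\alpha}E^\alpha_{ab}$. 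The main obstacle is thus not analytic but combinatorial: one must track the Young--Yamanouchi tableau indices consistently through the box-removal convention, relying on the subgroup-adapted property \eqref{eq:subgroup_adapted} to guarantee that the labels $p^\gamma_\mu,q^\gamma_\mu$ fixed by the branching match those determined by the hypothesis on $s^\alpha_a$ and $s^\beta_b$.
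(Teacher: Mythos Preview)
Your proof is correct. The paper itself does not give a proof of this lemma but instead refers the reader to Refs.~\cite{studzinski2022efficient,yoshida2023reversing}, so there is no in-text argument to compare against; your approach---placing $\Tr_{n+1}E^\mu_{ij}$ in the $\SU(d)$-commutant on $n$ tensor factors, expanding in the $E^\gamma_{pq}$ basis, and reading off the coefficients via the identity $\Tr[(\Tr_{n+1}X)Y]=\Tr[X(Y\otimes\1_d)]$ together with the preceding lemma---is a clean and standard one. One minor remark: the subgroup-adapted property~\eqref{eq:subgroup_adapted} is not strictly needed for the index bookkeeping, since the map $a\mapsto a^\gamma_\mu$ is defined purely combinatorially in the paper (adding the box \fbox{$n+1$} to the standard tableau), and the identification $\delta_{i,p^\gamma_\mu}\Leftrightarrow(\gamma,p)=(\alpha,a)$ follows directly from that definition.
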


\section{Covariant protocol for dPBT}
\label{appendix_sec:covariant_PBT}
This section shows that any dPBT protocol can be converted into a canonical form of the protocol, which we call the covariant protocol, without decreasing the teleportation fidelity, following Refs.~\cite{ishizaka2008asymptotic, mozrzymas2018optimal, leditzky2022optimality} (see Eqs.~\eqref{eq:wmu}, \eqref{eq:covariant_resource_state} and \eqref{eq:srm} and Figs.~\ref{fig:equivalence_protocol}~(a) and (b)).
The teleportation fidelity of the covariant protocol is shown in Lem.~\ref{lem:covariant_PBT}, which is shown in Refs.~\cite{mozrzymas2018optimal, leditzky2022optimality}.

We consider a general protocol for dPBT given by Eq.~(1) of the main text.
We take an eigendecomposition of the resource state $\phi_\mathrm{PBT}$ given by\footnote{We consider a slightly more general setting where the resource state is a mixed state than a usual setting where the resource state is a pure state.
This extra argument is given to explicitly provide a covariant protocol corresponding to any given dPBT protocol.}
\begin{align}
    \phi_\mathrm{PBT} &= \sum_{k} \ketbra{\phi^{(k)}},\label{eq:phi_PBT_decomposition}\\
    \ket{\phi^{(k)}} &= (O_{\mcA^N}^{(k)} \otimes \1_{\mcB^N}) \bigotimes_{a=1}^{N} \ket{\phi^+}_{\mcA_a \mcB_a},\label{eq:def_Ok}
\end{align}
where $O^{(k)}$ is a linear operator on $\mcA^N$ satisfying the normalization condition $\sum_k \Tr[O^{(k) \dagger}O^{(k)}] = d^N$, and $\ket{\phi^+}$ is the maximally entangled state defined by $\ket{\phi^+} \coloneqq {1\over \sqrt{d}}\sum_{i=1}^{d} \ket{ii}$ using the computational basis $\{\ket{i}\}_{i=1}^{d}$ of $\CC^d$.
Using this expression, the teleportation fidelity~(3) is written by
\begin{align}
\label{eq:the_channel_fidelity_dPBT}
    &f(\1_d, \Lambda)\nonumber\\
    &= \frac1{d^2} \sum^N_{a=1} \sum_{k} \Tr[(O_{\mcA^N}^{(k)} \otimes \1_{\mcA_0}) \sigma_a
    (O_{\mcA^N}^{(k)} \otimes \1_{\mcA_0})^\dagger \Pi_a ]\\
    &= \frac1{d^2} \sum^N_{a=1} \Tr(\tilde{\Pi}_a \sigma_a),
\end{align}
where $\sigma_a$ and $\tilde{\Pi}_a$ are defined by
\begin{align}
    \sigma_a &\coloneqq \frac1{d^{N-1}} 
    \left( \1_{\overline{\mcA_a}} \otimes \ketbra{\phi^+}_{\mcA_a \mcA_0} \right),\\
    \tilde{\Pi}_a &\coloneqq \sum_{k} (O_{\mcA^N}^{(k)} \otimes \1_{\mcA_0})^\dagger \Pi_a (O_{\mcA^N}^{(k)} \otimes \1_{\mcA_0}),\label{eq:def_tilde_Pi_a}
\end{align}
and $\overline{\mcA_a}$ is defined by $\overline{\mcA_a}\coloneqq \bigotimes_{i\neq a} \mcA_i$.
The set of operators $\{\tilde{\Pi}_a\}_{a=1}^{N}$ defined in Eq.~\eqref{eq:def_tilde_Pi_a} corresponds to the set of operators $\{\Pi_a\}_{a=1}^{N}$ satisfying $\Pi_a\geq 0$ and $\sum_a \Pi_a = \1$ if and only if
\begin{align}
    \tilde{\Pi}_a \geq 0, \quad \sum_a \tilde{\Pi}_a = X_{\mcA^N} \otimes \1_{\mcA_0},
\end{align}
where $X$ is defined by
\begin{align}
    X \coloneqq \sum_k O^{(k) \dagger} O^{(k)}.
\end{align}
Therefore, the optimal channel fidelity for a resource state given by Eqs.~\eqref{eq:phi_PBT_decomposition} and \eqref{eq:def_Ok} is written as the following semidefinite programming (SDP):
\begin{align}
\label{eq:sdp_PBT}
\begin{split}
    &\max {1\over d^2} \sum_{a=1}^{N} \Tr(\tilde{\Pi}_a \sigma_a),\\
    \mathrm{s.t.} \; & \tilde{\Pi}_a\geq 0, \quad \sum_a \tilde{\Pi}_a = X_{\mcA^N} \otimes \1_{\mcA_0}, \quad \tr(X)=d^N.
\end{split}
\end{align}
Since the operator $\sigma_a$ satisfies the unitary group symmetry and symmetric group covariance given by
\begin{align}
\label{eq:sigma_a_unitary_group_symmetry}
    &[U^{\otimes N}_{\mcA^N} \otimes U^*_{\mcA_0}, \sigma_a] = 0 \quad \forall U\in \SU(d),\\
\label{eq:sigma_a_symmetric_group_symmetry}
    &[(V_\pi)_{\mcA^N} \otimes \1_{\mcA_0}] \sigma_a [(V_\pi)_{\mcA^N} \otimes \1_{\mcA_0}]^\dagger = \sigma_{\pi(a)} \quad \forall \pi \in \mfS_N,
\end{align}
where $V_\pi$ is the permutation operator defined in Appendix~\ref{appendix_sec:shur_weyl_duality} and $*$ denotes the complex conjugate in the computational basis, we obtain
\begin{align}
    \max {1\over d^2} \sum_{a=1}^{N} \Tr(\tilde{\Pi}_a \sigma_a) = \max {1\over d^2} \sum_{a=1}^{N} \Tr(\tilde{\Pi}_a' \sigma_a),
\end{align}
where $\tilde{\Pi}_a'$ is $\SU(d) \times \mfS_N$-twirled operator defined by
\begin{align}
    \tilde{\Pi}_a' \coloneqq {1\over N!} \sum_{\pi\in\mfS_N} &\int_{\SU(d)} \dd U [(U^{\otimes N} V_\pi)_{\mcA^N} \otimes U^*_{\mcA_0}] \nonumber\\
    &\times \Pi_{\pi^{-1}(a)} [(U^{\otimes N} V_\pi)_{\mcA^N} \otimes U^*_{\mcA_0}]^\dagger.
\end{align}
The set of operators $\{\tilde{\Pi}_a'\}_a$ satisfies $\tilde{\Pi}_a' \geq 0$ and $\sum_{a=1}^{N}\tilde{\Pi}_a' = X'_{\mcA^N} \otimes \1_{\mcA_0}$ for the operator $X'$ defined by
\begin{align}
    X'
    &\coloneqq {1\over N!} \sum_{\pi\in\mfS_N} \int_{\SU(d)} \dd U (U^{\otimes N} V_\pi) X (U^{\otimes N} V_\pi)^\dagger,
\end{align}
satisfying the unitary group and symmetric group symmetry given by
\begin{align}
    &[U^{\otimes N}, X'] = 0 \quad \forall U\in\SU(d),\\
    &[V_\pi, X'] = 0 \quad \forall \pi\in\mfS_N.
\end{align}
Therefore, $X'$ can be written as
\begin{align}
    \label{eq:Xprime}
    X'&= {d^{N}} \sum_{\mu} {w_\mu^2 \over d_\mu m_\mu}  P_\mu,
\end{align}
where $w_\mu$ is defined by
\begin{align}
    w_\mu \coloneqq \sqrt{\Tr(X' P_\mu) \over d^{N}} =  \sqrt{\Tr(X P_\mu) \over d^{N}},
\end{align}
and $P_\mu$ is the Young projector (see Appendix~\ref{appendix_sec:shur_weyl_duality}).
Since the operator $X$ can be written by the resource state $\phi_\mathrm{PBT}$ as
\begin{align}
    X = d^N [\Tr_{\mcA^N}(\phi_\mathrm{PBT}^*)]_{\mcB^N \to \mcA^N},
\end{align}
where the subscript $\mcB^N \to \mcA^N$ represents a relabelling the space from $\mcB^N$ to $\mcA^N$, we obtain
\begin{align}
\label{eq:wmu}
    w_\mu = \sqrt{\Tr[\phi_\mathrm{PBT}^*[(P_\mu)_{\mcA^N} \otimes \1_{\mcB^N}]]}.
\end{align}
Thus, the solution of the SDP \eqref{eq:sdp_PBT} is upper bounded by the following SDP:
\begin{align}
\label{eq:covariant_sdp_PBT}
\begin{split}
    &\max {1\over d^2} \sum_{a=1}^{N} \Tr(\tilde{\Pi}_a \sigma_a),\\
    \mathrm{s.t.} \; & \tilde{\Pi}_a\geq 0, \quad \sum_a \tilde{\Pi}_a = X'_{\mcA^N} \otimes \1_{\mcA_0},\\
    &\tr(X') = d^N, \quad X' \text{ is given by Eq.~\eqref{eq:Xprime}},
\end{split}
\end{align}
which corresponds to the SDP \eqref{eq:sdp_PBT} for a pure resource state given by
\begin{align}
\label{eq:covariant_resource_state}
    \ket{\tilde{\phi}_\mathrm{PBT}^{(\vec{w})}} &\coloneqq (O_{\mcA^N}^{(\vec{w})} \otimes \1_{\mcB^N}) \bigotimes_{a=1}^{N} \ket{\phi^+}_{\mcA_a \mcB_a},\\
\label{eq:covariant_O}
    O_{\mcA^N}^{(\vec{w})} &\coloneqq \sqrt{d^N} \sum_{\mu \in \young{d}{N}} \frac{w_\mu}{\sqrt{d_\mu m_\mu} } P_\mu.
\end{align}
Reference \cite{leditzky2022optimality} shows that the SDP \eqref{eq:covariant_sdp_PBT} gives the optimal value when the POVM is given by the square-root measurement (or the pretty good measurement) \cite{belavkin1975optimal, kholevo1979asymptotically, hausladen1994pretty} for the state ensemble $\{(1/N, \sigma_a)\}_{a=1}^{N}$:
\begin{align}
\label{eq:srm}
    \Pi_a &= \sigma^{-1/2} \sigma_a \sigma^{-1/2},\\
    \sigma &\coloneqq \sum_a \sigma_a.
\end{align}
We call the dPBT protocol with the resource state given in \eqref{eq:covariant_resource_state} and the POVM given in \eqref{eq:srm} the covariant protocol since they are covariant with respect to representations of the symmetric group $\mfS_N$ and the unitary group $\SU(d)$ as follows:
\begin{align}
    (U^{\otimes N}_{\mcA^N} \otimes U^{*\otimes N}_{\mcB^N})\ket{\tilde{\phi}_\mathrm{PBT}^{(\vec{w})}} = \ket{\tilde{\phi}_\mathrm{PBT}^{(\vec{w})}} \quad &\forall U\in\SU(d),\\
    [(V_\pi)_{\mcA^N}\otimes (V_\pi)_{\mcB^N}]\ket{\tilde{\phi}_\mathrm{PBT}^{(\vec{w})}} = \ket{\tilde{\phi}_\mathrm{PBT}^{(\vec{w})}} \quad &\forall \pi\in\mfS_N,\\
    (U^{\otimes N}_{\mcA^N}\otimes U^*_{\mcA_0}) \Pi_a (U^{\otimes N}_{\mcA^N}\otimes U^*_{\mcA_0})^\dagger = \Pi_a \quad &\forall U\in\SU(d),\\
    [(V_\pi)_{\mcA^N} \otimes \1_{\mcA_0}]\Pi_a [(V_\pi)_{\mcA^N} \otimes \1_{\mcA_0}]^\dagger = \Pi_{\pi(a)} \quad &\forall \pi\in\mfS_N,
\end{align}
which can be shown using Eqs.~\eqref{eq:sigma_a_unitary_group_symmetry}, \eqref{eq:sigma_a_symmetric_group_symmetry}, and \eqref{eq:covariant_O}.

In conclusion, any dPBT protocol can be converted into a covariant protocol using the resource state defined in Eqs.~\eqref{eq:wmu} and \eqref{eq:covariant_resource_state} and the square-root measurement defined in Eq.~\eqref{eq:srm} without decreasing the teleportation fidelity [see Fig.~\ref{fig:equivalence_protocol}~(b)].
The teleportation fidelity for the obtained covariant protocol is given by the following Lemma.

\begin{Lem} {\rm \cite{leditzky2022optimality,mozrzymas2018optimal}}
\label{lem:covariant_PBT}
    The fidelity of $d$-dimensional dPBT for $N$ ports is given by
    \begin{align}
        F_{\mathrm{PBT}}(N,d) = \vec{w}^\sfT M_\mathrm{PBT}(N,d) \vec{w},
    \end{align}
    using  a vector $\vec{w} = (w_\mu)_{\mu \in \young{d}{N}} $ satisfying $w_\mu \geq 0$ for all $\mu \in \young{d}{N}$ and $\sum_{\mu \in \young{d}{n}} w_\mu^2 = 1$.
    The $\abs{\mathbb{Y}_{N}^d}\times \abs{\mathbb{Y}_{N}^d} $ matrix $M_\mathrm{PBT}(N,d)$ defined by
    \begin{align}
    \label{def:teleportation-matrix}
        (M_\mathrm{PBT}(N,d))_{\mu\nu} = {1\over d^2}\#[(\mu-\square) \cap (\nu-\square)],
    \end{align}
    where $\mu-\square$ and $\nu-\square$ are defined in Eq.~\eqref{eq:remove_box}, and $\# (\mathbb{X})$ is the cardinality of a set $\mathbb{X}$.
    In particular, the optimal fidelity of dPBT is given by
    \begin{align}
        F_\mathrm{PBT}(N,d) = \max_{\norm{\vec{w}}=1}\vec{w}^\sfT M_\mathrm{PBT}(N,d)\vec{w},
    \end{align}
    which equals to the maximal eigenvalue of $M_\mathrm{PBT}(N,d)$.
\end{Lem}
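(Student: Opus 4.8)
The plan is to evaluate the channel fidelity \eqref{eq:the_channel_fidelity_dPBT} for the specific covariant protocol assembled above --- the pure resource state $\ket{\tilde\phi^{(\vec w)}_\mathrm{PBT}}$ of Eq.~\eqref{eq:covariant_resource_state}, for which $O^{(\vec w)}_{\mcA^N}$ acts as the scalar $\sqrt{d^N}\,w_\mu/\sqrt{d_\mu m_\mu}$ on the Young sector $P_\mu$, together with the square-root measurement \eqref{eq:srm} for the ensemble $\{(1/N,\sigma_a)\}_{a=1}^{N}$ (optimal by Ref.~\cite{leditzky2022optimality}) --- and to show the result is the quadratic form $\vec w^\sfT M_\mathrm{PBT}(N,d)\vec w$. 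Since the resource state is pure, $\tilde\Pi_a=(O^{(\vec w)}_{\mcA^N}\otimes\1_{\mcA_0})^\dagger\sigma^{-1/2}\sigma_a\sigma^{-1/2}(O^{(\vec w)}_{\mcA^N}\otimes\1_{\mcA_0})$ with $\sigma\coloneqq\sum_b\sigma_b$, so
\begin{align}
    F_\mathrm{PBT}(N,d)=\frac1{d^2}\sum_{a=1}^{N}\Tr[\sigma^{-1/2}\sigma_a\sigma^{-1/2}(O^{(\vec w)}_{\mcA^N}\otimes\1_{\mcA_0})\sigma_a(O^{(\vec w)}_{\mcA^N}\otimes\1_{\mcA_0})^\dagger],
\end{align}
which is visibly a quadratic form in the block amplitudes $w_\mu$.

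First I would diagonalise $\sigma=\sum_b\sigma_b$ on $\mcA^N\otimes\mcA_0$ via Schur--Weyl duality with respect to the symmetry $U^{\otimes N}_{\mcA^N}\otimes U^*_{\mcA_0}$ together with the $\mfS_N$-covariance of the $\sigma_b$. Treating $\mcA_0$ as the extra system glued on by the maximally entangled states, the blocks of $\sigma$ are labelled by a Young diagram $\mu\in\young{d}{N}$ on $\mcA^N$ and a diagram $\alpha\in\mu-\square$ obtained by removing one box, and on each block $\sigma$ is a known nonnegative scalar of the form $c\,d_\mu/(d_\alpha d^{N-1})$; this is exactly the standard covariant-PBT spectral computation of Refs.~\cite{ishizaka2008asymptotic, mozrzymas2018optimal, leditzky2022optimality}. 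Substituting $\sigma^{-1/2}$ back and pushing the partial traces over $\mcA_0$ and over the ports through the operator basis $\{E^\mu_{ij}\}$ --- using the trace and branching identities of Appendix~\hyperref[appendix_sec:shur_weyl_duality]{A}, in particular $E^{\alpha}_{ab}\otimes\1_d=\sum_{\mu\in\alpha+\square}E^{\mu}_{a^\alpha_{\mu} b^\alpha_{\mu}}$ and $\Tr_{n+1}E^{\mu}_{ij}=\delta_{\alpha\beta}(d_\mu/d_\alpha)E^{\alpha}_{ab}$ --- collapses the sum over $a$ and over the internal indices into a sum over Young diagrams with $N-1$ boxes. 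Concretely, the sector-$\mu$ amplitude $w_\mu$ of the resource state gets linked to the sector-$\nu$ amplitude $w_\nu$ exactly when $\mu$ and $\nu$ both branch to a common $(N-1)$-box diagram $\alpha\in(\mu-\square)\cap(\nu-\square)$, so the coefficient of $w_\mu w_\nu$ is $\tfrac1{d^2}\#\big((\mu-\square)\cap(\nu-\square)\big)=(M_\mathrm{PBT}(N,d))_{\mu\nu}$.

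Finally, $M_\mathrm{PBT}(N,d)$ is real symmetric and positive semidefinite, being $\tfrac1{d^2}$ times the Gram matrix of the $0/1$ vectors $\{v_\mu\}_{\mu\in\young{d}{N}}$ indexed by $\young{d}{N-1}$ with $(v_\mu)_\alpha=1$ precisely when $\alpha\in\mu-\square$; the constraint $w_\mu\ge0$, $\sum_{\mu\in\young{d}{N}}w_\mu^2=1$ is nothing but the normalisation $\Tr\phi_\mathrm{PBT}=1$ rewritten through $w_\mu=\sqrt{\Tr(XP_\mu)/d^N}$ and $\sum_\mu\Tr(XP_\mu)=\Tr X=d^N$. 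Maximising the quadratic form over unit vectors then yields $F_\mathrm{PBT}(N,d)=\max_{\norm{\vec w}=1}\vec w^\sfT M_\mathrm{PBT}(N,d)\vec w=\lambda_{\max}(M_\mathrm{PBT}(N,d))$ by Rayleigh--Ritz, the sign constraint $w_\mu\ge0$ being harmless since the nonnegativity of the entries of $M_\mathrm{PBT}(N,d)$ guarantees a Perron-type top eigenvector with nonnegative components. The hard part is the second step: explicitly computing the square-root measurement, i.e.\ the spectrum of $\sigma$ in the Schur--Weyl basis, and carrying out the nested partial traces carefully enough that the combinatorial factor $\#\big((\mu-\square)\cap(\nu-\square)\big)$ drops out; the rest is linear algebra and the representation-theoretic identities already recorded in Appendix~\hyperref[appendix_sec:shur_weyl_duality]{A}.
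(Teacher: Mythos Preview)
Your outline is correct and is precisely the route taken in the cited references; the paper itself does not prove the lemma but simply writes ``See the details in Ref.~\cite{leditzky2022optimality}.'' The only point to watch when you carry out the ``hard part'' is the exact eigenvalue of $\sigma$ on each $(\mu,\alpha)$ block: in the references it carries factors of both the $\mathrm{SU}(d)$-dimensions and the $\mfS_N$-dimensions, so the schematic $c\,d_\mu/(d_\alpha d^{N-1})$ should be replaced by the precise value from Refs.~\cite{mozrzymas2018optimal,leditzky2022optimality} to make the cancellation producing $\#\big((\mu-\square)\cap(\nu-\square)\big)$ go through cleanly. Your concluding Rayleigh--Ritz plus Perron-type argument for dropping the sign constraint $w_\mu\geq 0$ is also correct.
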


\begin{proof}
    See the details in Ref.~\cite{leditzky2022optimality}.
\end{proof}

\section{Parallel covariant protocol for unitary estimation}
\label{appendix_sec:covariant_unitary_estimation}
This section shows that any general adaptive protocol for unitary estimation can be converted into a canonical form of the protocol, which we call the parallel covariant protocol \cite{chiribella2005optimal, bisio2010optimal} without decreasing the fidelity [see Eqs.~\eqref{eq:def_of_probe_state}, \eqref{eq:v_alpha} and \eqref{eq:covariant_POVM} and Figs.~\ref{fig:equivalence_protocol}~(c) and (d)].
The fidelity of the parallel covariant protocol is shown in Lem.~\ref{lem:covariant_unitary_estimation}.

Reference \cite{bisio2010optimal} shows that the optimal fidelity of dSAR can be achieved by a covariant parallel ``measure-and-prepare'' protocol.
We show a similar construction to convert any unitary estimation protocol into a covariant parallel protocol.
This construction is based on the Choi representation of quantum operations and quantum supermaps \cite{choi1975completely, jamiolkowski1972linear, chiribella2008quantum}.

We briefly review the Choi representation.
We consider a quantum operation $\Lambda: \mcL(\mcI) \to \mcL(\mcO)$, where $\mcI$ and $\mcO$ are the Hilbert spaces corresponding to the input and output systems, and $\mcL(\mcX)$ is the set of linear operators on a linear space $\mcX$.
The Choi matrix $J_{\Lambda} \in \mcL(\mcI\otimes \mcO)$ is defined by
\begin{align}
    J_{\Lambda}\coloneqq \sum_{i,j} \ketbra{i}{j}_{\mcI} \otimes \Lambda(\ketbra{i}{j})_{\mcO},
\end{align}
where $\{\ket{i}\}_i$ is the computational basis of $\mcI$, and the subscripts $\mcI$ and $\mcO$ represent the Hilbert spaces where each term is defined.
The Choi matrix of the unitary operation $\mcU(\cdot) = U(\cdot) U^\dagger$ is given by $J_{\mcU} = \dketbra{U}$, where $\dket{U}$ is the dual ket defined by
\begin{align}
\label{eq:def_dual_vector}
    \dket{U}\coloneqq \sum_{i} \ket{i} \otimes U\ket{i}.
\end{align}
In the Choi representation, the composition of a quantum operation $\Lambda$ with a quantum state $\rho$ and that of quantum operations $\Lambda_1, \Lambda_2$ are represented in a unified way using a link product $\ast$ as
\begin{align}
    \Lambda(\rho) &= J_{\Lambda} \ast \rho,\\
    J_{\Lambda_2\circ \Lambda_1} &= J_{\Lambda_1} \ast J_{\Lambda_2},
\end{align}
where the link product $\ast$ for $A\in\mcL(\mcX\otimes \mcY)$ and $B\in \mcL(\mcY\otimes \mcZ)$ is defined as \cite{chiribella2008quantum}
\begin{align}
    A \ast B \coloneqq \Tr_\mcY [(A^{\sfT_\mcY}\otimes \1_{\mcZ})(\1_{\mcX} \otimes B)],
\end{align}
and $A^{\sfT_\mcY}$ is the partial transpose of $A$ over the subspace $\mcY$.

The most general protocol for the unitary estimation within the quantum circuit framework is given by \cite{chiribella2008quantum, bisio2010optimal}
\begin{align}
\label{eq:general_adaptive_unitary_estimation}
    p(\hat{U}|U) = \mathrm{Tr}\big[&M_{\hat{U}}(\mathcal{U}_{\mcI_n \to \mcO_n} \otimes \1_{\mcX_n}) \circ \Lambda_{n-1}  \nonumber\\
    &\circ \cdots \circ \Lambda_1 \circ (\mathcal{U}_{\mcI_1\to\mcO_1} \otimes \1_{\mcX_1}) \circ \phi\big],
\end{align}
where $\mcI_i$ and $\mcO_i$ for $i\in\{1, \ldots, n\}$ are Hilbert spaces defined by $\mcI_i \simeq \mcO_i \simeq \CC^d$, $\mcX_i$ for $i\in\{1, \ldots, n\}$ are auxiliary Hilbert spaces, $\Lambda_i$ are completely positive and trace preserving (CPTP) maps whose input and output spaces are given by $\Lambda_i: \mcL(\mcO_i \otimes \mcX_i) \to \mcL(\mcI_{i+1} \otimes \mcX_{i+1})$, $\phi$ is a quantum state on the Hilbert space $\mcI_1\otimes \mcX_1$, and $\{M_{\hat{U}} \dd \hat{U}\}_{\hat{U}}$ is a POVM on the Hilbert space $\mcO_n \otimes \mcX_n$ [see Fig.~\ref{fig:equivalence_protocol} (c)].
In the Choi representation, Eq.~\eqref{eq:general_adaptive_unitary_estimation} can be written as
\begin{align}
    p(\hat{U}|U) = T_{\hat{U}} \ast [J_{\mathcal{U}}^{\otimes n}]_{\mcI^n \mcO^n},
\end{align}
where $\{T_{\hat{U}} \dd \hat{U}\}_{\hat{U}}$ is called a quantum tester \cite{chiribella2009optimal, chiribella2008memory,bavaresco2021strict, bavaresco2022unitary} defined by
\begin{align}
    T_{\hat{U}}\coloneqq \phi \ast J_{\Lambda_1} \ast \cdots \ast J_{\Lambda_{n-1}} \ast M_{\hat{U}}^\sfT \in \mcL(\mcI^n \otimes \mcO^n), 
\end{align}
and $\mcI^n$ and $\mcO^n$ are the joint Hilbert spaces defined by $\mcI^n\coloneqq \bigotimes_{i=1}^{n} \mcI_i$ and $\mcO^n\coloneqq \bigotimes_{i=1}^{n} \mcO_i$, respectively.
The quantum tester satisfies the positivity and normalization conditions given by \cite{chiribella2009optimal, chiribella2008memory,bavaresco2021strict, bavaresco2022unitary}
\begin{align}
\label{eq:tester_condition}
    T_{\hat{U}}\geq 0, \quad \Tr[\int \dd \hat{U} T_{\hat{U}}] = d^n.
\end{align}
In terms of the quantum tester $\{T_{\hat{U}} \dd \hat{U}\}_{\hat{U}}$, the average-case fidelity of unitary estimation is given by
\begin{align}
    F_\mathrm{est}^{(\mathrm{ave})} = {1\over d^2} \int \dd U \int \dd \hat{U} T_{\hat{U}} \ast J_\mathcal{U}^{\otimes n} \abs{\Tr(\hat{U}^\dagger U)}^2.
\end{align}
Since
\begin{align}
    (V^{\otimes n}_{\mcI^n} \otimes W^{\otimes n}_{\mcO^n}) T_{\hat{U}} (V^{\otimes n}_{\mcI^n} \otimes W^{\otimes n}_{\mcO^n})^\dagger \ast J_{\mathcal{U}}^{\otimes n} = T_{\hat{U}} \ast J_{\mathcal{W}^\sfT \circ \mathcal{U} \circ \mathcal{V}}^{\otimes n}\nonumber\\
    \forall V, W\in\SU(d)
\end{align}
holds, we obtain
\begin{align}
    F_\mathrm{est}^{(\mathrm{ave})} = {1\over d^2} \int \dd U \int \dd \hat{U} T'_{\hat{U}} \ast J_\mathcal{U}^{\otimes n} \abs{\Tr(\hat{U}^\dagger U)}^2,
\end{align}
where $T'_{\hat{U}}$ is a $\SU(d)$-twirled operator given by
\begin{align}
    T'_{\hat{U}} \coloneqq \int \dd V \int \dd W (V^{\otimes n}_{\mcI^n} \otimes W^{\otimes n}_{\mcO^n}) T_{W^\sfT \hat{U} V} (V^{\otimes n}_{\mcI^n} \otimes W^{\otimes n}_{\mcO^n})^\dagger.
\end{align}
Therefore, the average-case fidelity of the original adaptive protocol can be achieved by any protocol having the probability distribution given by
\begin{align}
\label{eq:tilde_p}
    \tilde{p}(\hat{U}|U) \coloneqq T'_{\hat{U}} \ast J_\mathcal{U}^{\otimes n}.
\end{align}

We show that a parallel protocol can implement the probability distribution \eqref{eq:tilde_p} using an argument shown in Refs.~\cite{chiribella2009optimal, chiribella2008memory, bavaresco2022unitary}.
Defining $T'$ by
\begin{align}
    T' &\coloneqq \int \dd \hat{U} T'_{\hat{U}}\\
    &= \int \dd \hat{U} \int \dd V \int \dd W (V^{\otimes n}_{\mcI^n} \otimes W^{\otimes n}_{\mcO^n}) T_{\hat{U}} (V^{\otimes n}_{\mcI^n} \otimes W^{\otimes n}_{\mcO^n})^\dagger,
\end{align}
the operator $T'$ satisfies the following $\SU(d)$ symmetry:
\begin{align}
\label{eq:T'_unitary_symmetry}
    [V^{\otimes n}_{\mcI^n} \otimes W^{\otimes n}_{\mcO^n}, T'] = 0 \quad \forall V, W\in\SU(d).
\end{align}
From Eq.~\eqref{eq:tester_condition}, $T'$ satisfies the following positivity and normalization conditions:
\begin{align}
    T' \geq 0, \quad \Tr(T') = d^n.
\end{align}
We define a quantum state $\ket{\phi'_\mathrm{est}}$ and a POVM $\{M'_{\hat{U}} \dd U\}$ by \cite{chiribella2009optimal, chiribella2008memory, bavaresco2022unitary}
\begin{align}
    \ket{\phi'_\mathrm{est}}&\coloneqq \sqrt{T'}^\sfT \dket{\1}_{\mcI^n \mcO^n},\\
    M'_{\hat{U}} &\coloneqq (T^{\prime -1/2} T'_{\hat{U}} T^{\prime -1/2})^\sfT,
\end{align}
where the normalization condition of $\ket{\phi'_\mathrm{est}}$ can be checked as follows:
\begin{align}
    &\Tr[\ketbra{\phi'_\mathrm{est}}] \nonumber\\
    &= T' \ast \dketbra{\1}_{\mcI^n \mcO^n}\\
    &= \int \dd \hat{U} \int \dd V \int \dd W (V^{\otimes n}_{\mcI^n} \otimes W^{\otimes n}_{\mcO^n})\nonumber\\
    &\hspace{60pt}\times T_{W^\sfT \hat{U} V} (V^{\otimes n}_{\mcI^n} \otimes W^{\otimes n}_{\mcO^n})^\dagger \ast \dketbra{\1} \\
    &= \int \dd \hat{U} \int \dd V \int \dd W p(W^\sfT \hat{U} V|W^\sfT V) \\
    &= 1.
\end{align}
The probability distribution given in Eq.~\eqref{eq:tilde_p} can be implemented by measuring the quantum state $(\1_{\mcI^n} \otimes U^{\otimes n}_{\mcO^n})\ket{\phi'_\mathrm{est}}$ with the POVM $\{M'_{\hat{U}} \dd \hat{U}\}_{\hat{U}}$ since
\begin{align}
    &\Tr[M'_{\hat{U}} (\1_{\mcI^n} \otimes U^{\otimes n}_{\mcO^n})\ketbra{\phi'_\mathrm{est}} (\1_{\mcI^n} \otimes U^{\otimes n}_{\mcO^n})^\dagger]\nonumber\\
    &= \mathrm{Tr}\Big[(T^{\prime -1/2} T'_{\hat{U}} T^{\prime -1/2})^\sfT (\1_{\mcI^n} \otimes U^{\otimes n}_{\mcO^n})\sqrt{T'}^\sfT \nonumber\\
    &\hspace{60pt} \times \dketbra{\1}_{\mcI^n \mcO^n} \sqrt{T'}^\sfT (\1_{\mcI^n} \otimes U^{\otimes n}_{\mcO^n})^\dagger\Big]\\
    &= \Tr[T_{\hat{U}}^{\prime \sfT} (\1_{\mcI^n} \otimes U^{\otimes n}_{\mcO^n}) \dketbra{\1}_{\mcI^n \mcO^n} (\1_{\mcI^n} \otimes U^{\otimes n}_{\mcO^n})^\dagger]\label{eq:use_unitary_symmetry}\\
    &= T'_{\hat{U}} \ast J_{\mathcal{U}}^{\otimes n}
\end{align}
holds, where we use Eq.~\eqref{eq:T'_unitary_symmetry} in Eq.~\eqref{eq:use_unitary_symmetry}.

Due to the unitary group symmetry \eqref{eq:T'_unitary_symmetry}, $\sqrt{T'}^{\sf T}$ can be written as
\begin{align}
    \sqrt{T'}^{\sf T} = \bigoplus_{\alpha, \beta\in\young{d}{n}} \1_{\mcU_{\alpha, 1}} \otimes \1_{\mcU_{\beta, 2}} \otimes (\sqrt{X_{\alpha\beta}})_{\mcS_{\alpha, 1} \mcS_{\beta, 2}},
\end{align}
where $(\mcU_{\alpha, 1}, \mcS_{\alpha, 1}), (\mcU_{\beta, 2}, \mcS_{\beta, 2})$ are the Hilbert spaces given by the decomposition \eqref{eq:decomp_space} for $\mcO^n$ and $\mcI^n$, respectively, and $X_{\alpha\beta}$ is a positive semidefinite matrix.
Therefore, $\ket{\phi'_\mathrm{est}}$ can be written as
\begin{align}
\label{eq:def_of_probe_state}
    &\ket{\phi'_\mathrm{est}} = \ket{\tilde{\phi}_\mathrm{est}^{(\vec{v})}} \coloneqq \bigoplus_{\alpha \in \young{d}{n}} \frac{v_\alpha}{\sqrt{d_\alpha}} \dket{W_\alpha},\\
    &\dket{W_\alpha} \coloneqq \sum_{s=1}^{d_\alpha} \ket{\alpha, s}_{\mcU_{\alpha,1}} \otimes \ket{\alpha, s}_{\mcU_{\alpha,2}} \otimes \ket{\mathrm{arb}}_{\mcS_{\alpha,1}\mcS_{\alpha,2}},
\end{align}
where $v_\alpha$ and $\ket{\mathrm{arb}}$ are given by
\begin{align}
    v_\alpha &\coloneqq \sqrt{d_\alpha \sum_{i,j=1}^{m_\alpha}\bra{\alpha, i}\otimes \bra{\alpha, i} X_{\alpha \alpha} \ket{\alpha,j}\otimes \ket{\alpha,j}},\\
    \ket{\mathrm{arb}} &\coloneqq {\sqrt{d_\alpha} \over v_\alpha} \sqrt{X_{\alpha\alpha}} \sum_{i=1}^{m_\alpha} \ket{\alpha, i}_{\mcS_{\alpha, 1}} \otimes \ket{\alpha, i}_{\mcS_{\alpha, 2}}.
\end{align}
As shown in Lem.~\ref{lem:covariant_unitary_estimation}, $\ket{\mathrm{arb}}$ does not affect the fidelity of unitary estimation, so it can be chosen arbitrarily.
The coefficient $v_\alpha$ is given from $T$ as
\begin{align}
\label{eq:v_alpha}
    v_\alpha = \sqrt{\Tr[T \dketbra{\1}_{\mcI^n \mcO^n} (P_{\alpha})_{\mcI^n} \otimes (P_{\alpha})_{\mcO^n}]},
\end{align}
where $P_\alpha$ is the Young projector (see Appendix~\ref{appendix_sec:shur_weyl_duality}).
The optimal POVM for the probe state \eqref{eq:def_of_probe_state} is given by the covariant form \cite{chiribella2005optimal}:
\begin{align}
\label{eq:covariant_POVM}
    M_{\hat U} &\coloneq  \ketbra{\eta_{\hat{U}} }{\eta_{\hat{U}} }, \\
    \ket{\eta_{\hat{U}} } &\coloneq 
   \label{eq:def_of_eta_U} 
    \bigoplus_{\alpha \in \mathbb{Y}^d_n } [(\hat{U}_\alpha)_{\mcU_{\alpha, 1}} \otimes \1_{\mcS_{\alpha, 1} \mcU_{\alpha, 2} \mcS_{\alpha, 2}} ] \sqrt{d_\alpha} \dket{W_\alpha},
\end{align}
where $\hat{U}_\alpha$ is the irreducible representation $\alpha$ of $\hat{U}$.
By relabelling the Hilbert spaces $\mcI^n$ and $\mcO^n$ as $\mcA^n$ and $\mcB^n$, respectively, the obtained protocol is given as Fig.~\ref{fig:equivalence_protocol}~(d).
We call the unitary estimation protocol with the probe state given in \eqref{eq:def_of_probe_state} and the POVM given in \eqref{eq:covariant_POVM} the parallel covariant protocol since they are covariant with respect to representations of the unitary group $\SU(d)$ as follows:
\begin{align}
    (V^{\otimes n}_{\mcA^n}\otimes V^{*\otimes n}_{\mcB^n})\ket{\tilde{\phi}_\mathrm{est}^{(\vec{v})}} = \ket{\tilde{\phi}_\mathrm{est}^{(\vec{v})}} \quad \forall V\in\SU(d),\\
    (V^{\otimes n}_{\mcA^n}\otimes W^{\otimes n}_{\mcB^n}) M_{\hat{U}} (V^{\otimes n}_{\mcA^n}\otimes W^{\otimes n}_{\mcB^n})^\dagger = M_{V\hat{U}W^\sfT}\nonumber\\
    \forall V, W\in\SU(d).
\end{align}

In conclusion, any given protocol for unitary estimation can be converted into a parallel covariant protocol for the probe state given by Eqs.~\eqref{eq:def_of_probe_state} and \eqref{eq:v_alpha} and the covariant POVM given in Eq.~\eqref{eq:covariant_POVM}.
Note that this construction can also be applied to the case when the original unitary estimation protocol is given by an indefinite causal order protocol \cite{hardy2007towards, oreshkov2012quantum, chiribella2013quantum}.
The average-case fidelity of the parallel covariant protocol for unitary estimation is given as the following Lemma.

\begin{Lem} {\rm \cite{yang2020optimal}}
\label{lem:covariant_unitary_estimation}
    The fidelity of the parallel covariant protocol using the probe state \eqref{eq:def_of_probe_state} and the POVM \eqref{eq:covariant_POVM} for $d$-dimensional unitary estimation using $n$ calls is given by
    \begin{align}
        F_\mathrm{est}(n,d) = \vec{v}^\sfT M_\mathrm{est}(n,d) \vec{v},
    \end{align}
    using a vector $\vec{v} = (v_\alpha)_{\alpha\in\young{d}{n}}$ satisfying $v_\alpha\geq 0$ for all $\alpha\in\young{d}{n}$ and $\sum_{\alpha\in\young{d}{n}} v_\alpha^2 = 1$. The $\abs{\mathbb{Y}_n^d}\times \abs{\mathbb{Y}_n^d} $ matrix $M_{\mathrm{est}}(n,d)$ is defined by
    \begin{align}
    \label{def:estimation-matrix}
        (M_\mathrm{est}(n,d))_{\alpha \beta} = \frac1{d^2} \# [(\alpha+\square) \cap (\beta+\square) \cap \young{d}{n+1}],
    \end{align}
    where $\alpha+\square$ and $\beta+\square$ are defined in Eq.~\eqref{eq:add_box}.
    In particular, the optimal  fidelity of unitary estimation is given by
    \begin{align}
        F_\mathrm{est}(n,d) = \max_{\norm{\vec{v}}=1}\vec{v}^\sfT M_\mathrm{est}(n,d)\vec{v},
    \end{align}
    which equals to the maximal eigenvalue of $M_\mathrm{est}(n,d)$.
\end{Lem}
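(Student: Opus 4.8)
The plan is to evaluate the average-case fidelity $F_\mathrm{est}^{(\mathrm{ave})} = \frac1{d^2}\int\dd U\int\dd\hat U\,\abs{\Tr(\hat U^\dagger U)}^2\,p(\hat U|U)$ directly on the explicit covariant protocol defined by Eqs.~\eqref{eq:def_of_probe_state} and \eqref{eq:covariant_POVM}, and show that the Haar integral collapses to the quadratic form $\vec v^\sfT M_\mathrm{est}\vec v$. The first step is to compute the conditional probability $p(\hat U|U)=\abs{\langle \eta_{\hat U}|(\1_{\mcI^n}\otimes U^{\otimes n}_{\mcO^n})|\tilde\phi_\mathrm{est}^{(\vec v)}\rangle}^2$. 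Substituting the block-diagonal forms of $\ket{\eta_{\hat U}}$ and $\ket{\tilde\phi_\mathrm{est}^{(\vec v)}}$ and using that $U^{\otimes n}$ acts as $\bigoplus_\alpha U_\alpha\otimes\1_{\mcS_\alpha}$ on $\mcO^n$, the maximally entangled vectors $\dket{W_\alpha}$ contract so that the multiplicity factor $\ket{\mathrm{arb}}$ overlaps to $\langle\mathrm{arb}|\mathrm{arb}\rangle=1$ and the prefactors $\sqrt{d_\alpha}$ from the POVM cancel the $1/\sqrt{d_\alpha}$ of the probe state. Using $\Tr(\hat U_\alpha^\dagger U_\alpha)=\chi_\alpha(\hat U^\dagger U)$, where $\chi_\alpha$ is the irreducible $\SU(d)$ character of $\alpha$, I expect to obtain
\begin{align}
  p(\hat U|U) = \Bigl|\sum_{\alpha\in\young{d}{n}} v_\alpha\,\chi_\alpha(\hat U^\dagger U)\Bigr|^2.
\end{align}

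Next I would reduce the double Haar integral to a single one: since the entire integrand depends only on $W\coloneqq\hat U^\dagger U$, left-invariance of the Haar measure gives $\int\dd U\int\dd\hat U\,h(\hat U^\dagger U)=\int\dd W\,h(W)$. Writing $\Tr W=\chi_\square(W)$ for the defining representation (the single-box diagram), the fidelity becomes $F=\frac1{d^2}\sum_{\alpha,\beta}v_\alpha v_\beta\int\dd W\,\chi_\square\chi_\alpha\,\overline{\chi_\square\chi_\beta}$. The key representation-theoretic input is the Pieri (branching) rule $\chi_\square\,\chi_\alpha=\sum_{\mu\in\alpha+\square\,\cap\,\young{d}{n+1}}\chi_\mu$, expressing that tensoring with the fundamental representation adds exactly one box while keeping at most $d$ rows. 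Applying this to both factors and invoking orthonormality of the $\SU(d)$ characters, $\int\dd W\,\chi_\mu\overline{\chi_\nu}=\delta_{\mu\nu}$ (valid because distinct diagrams in $\young{d}{n+1}$ label inequivalent irreps), collapses the double sum over added boxes to the cardinality $\#(\alpha+\square\cap\beta+\square\cap\young{d}{n+1})$. This reproduces exactly $(M_\mathrm{est}(n,d))_{\alpha\beta}$ of Eq.~\eqref{def:estimation-matrix} and hence $F=\vec v^\sfT M_\mathrm{est}(n,d)\,\vec v$.

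Finally, since Appendix~\hyperref[appendix_sec:covariant_unitary_estimation]{C} shows the optimum over all estimation protocols is attained by a parallel covariant protocol of this form, the optimal fidelity equals $\max_{v_\alpha\geq 0,\,\norm{\vec v}=1}\vec v^\sfT M_\mathrm{est}\vec v$. Because $M_\mathrm{est}$ is entrywise nonnegative and symmetric, Perron--Frobenius guarantees a nonnegative top eigenvector, so the constraint $v_\alpha\geq 0$ is inactive and the maximum coincides with the largest eigenvalue of $M_\mathrm{est}(n,d)$, as claimed.

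I expect the main obstacle to be the first step: carefully bookkeeping the four multiplicity/irrep tensor factors in $\dket{W_\alpha}$ (the two $\mcU_\alpha$ charge spaces carrying $U_\alpha$ and $\hat U_\alpha$, and the two $\mcS_\alpha$ multiplicity spaces carrying $\ket{\mathrm{arb}}$) to confirm the clean contraction to $\sum_\alpha v_\alpha\chi_\alpha(\hat U^\dagger U)$ with the arbitrary multiplicity vector cancelling out. Once this reduction is established, the single-integral form, the Pieri rule, and character orthogonality make the remaining steps essentially automatic.
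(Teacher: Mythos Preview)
Your proposal is correct and follows essentially the same route as the paper: compute $p(\hat U|U)=\bigl|\sum_\alpha v_\alpha\,\chi^\alpha(\hat U^\dagger U)\bigr|^2$ from the block structure of the probe state and POVM, collapse the double Haar integral via the substitution $W=\hat U^\dagger U$, apply the Pieri rule $\chi_\square\chi_\alpha=\sum_{\mu\in\alpha+\square\cap\young{d}{n+1}}\chi_\mu$, and finish with character orthonormality. Your additional Perron--Frobenius remark justifying that the constraint $v_\alpha\geq 0$ is inactive at the optimum is a clean detail the paper leaves implicit.
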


\begin{proof}\footnote{The proof presented in Ref.~\cite{yang2020optimal} contains small inconsistencies; hence, for completeness, we present a full proof here.}
    From Eq.~\eqref{eq:def_of_probe_state} and \eqref{eq:def_of_eta_U}, the probability distribution of obtaining the estimator $\hat{U}$ is 
    \begin{align}
     &p(\hat{U}|U)\nonumber\\
    &= \Tr[M_{\hat{U}} (U^{\otimes n}\otimes \1_d^{\otimes n})\ketbra{\tilde{\phi}^{(\vec{v})}_{\mathrm{est}}}(U^{\otimes n}\otimes \1_d^{\otimes n})^\dagger]\\
    &= \Bigg\lvert \sum_{\alpha \in \mathbb{Y}^d_n} v_\alpha \dbra{W_\alpha } 
    [(\hat{U}_{\alpha})_{\mcU_{\alpha, 1}} \otimes \1_{\mcS_{\alpha, 1}} \otimes \1_{\mcU_{\alpha, 2}} \otimes \1_{\mcS_{\alpha, 2}}]^\dagger \nonumber\\
    &\hspace{30pt} \times [(U_{\alpha})_{\mcU_{\alpha, 1}} \otimes \1_{\mcS_{\alpha, 1}} \otimes \1_{\mcU_{\alpha, 2}} \otimes \1_{\mcS_{\alpha, 2}}] \dket{W_\alpha}   \Bigg\rvert^2\\
    &= \abs{ \sum_{\alpha \in \mathbb{Y}^d_n} v_\alpha \sum^{d_\alpha}_{s=1} \bra{\alpha, s} \hat{U}^\dagger_\alpha U_\alpha \ket{\alpha ,s }}^2 \\
    &= \abs{\sum_{\alpha \in \mathbb{Y}^d_n} v_\alpha \chi^{\alpha}( \hat{U}^\dagger U) }^2,
    \end{align}
    where $\chi^\alpha(U)$ is the character of the irreducible representation $\alpha$ defined by $\chi^\alpha(U) \coloneq \Tr(U_\alpha)$. The  fidelity is given by 
    \begin{align}
        &F_{\mathrm{est}}(n,d) \nonumber\\
        &=  \int \dd{U} \int \dd{\hat{U}} 
        \frac1{d^2} \abs{ \Tr(\hat{U}^\dagger U )}^2
        \abs{\sum_{\alpha \in \mathbb{Y}^d_n} v_\alpha \chi^{\alpha}( \hat{U}^\dagger U) }^2\\
        &=\frac1{d^2} \int \dd{U} \int \dd{\hat{U}} 
        \abs{\chi(\hat{U}^\dagger U) \sum_{\alpha \in \mathbb{Y}_n^d } v_\alpha \chi^\alpha (\hat{U}^\dagger U)}^2,
    \end{align}
        where $\chi$ is defined by $\chi(U)\coloneqq \Tr(U)$, which corresponds to the character of the irreducible representation $\square$.
        Since the characters satisfy the following properties:
    \begin{align}
        &\chi(V) \chi^\alpha(V) 
        = \sum_{\mu \in \alpha + \Box \cap \mathbb{Y}_{n+1}^d} \chi^\mu (V)\quad \forall V \in \mathrm{SU}(d), \\
        &\int \dd{V} \chi^\mu(V) \chi^{\nu,*}(V) = \delta_{\mu, \nu} \quad \forall \mu, \nu \in \young{d}{n},
    \end{align}
    the average-case fidelity is given by
    \begin{align}
        &F_{\mathrm{est}}(n,d)\nonumber\\
        &= \frac1{d^2} \int \dd{U} \int \dd{\hat{U}} 
        \abs{\sum_{\alpha \in \mathbb{Y}^d_n } v_\alpha \sum_{\mu \in \alpha + \Box} \chi^\alpha(\hat{U}^\dagger U) }^2
        \\
        &= \frac1{d^2} \int \dd{U} \int \dd{\hat{U}}
        \sum_{\alpha, \beta \in \mathbb{Y}^d_n} v_\alpha v_\beta \nonumber\\
        &\hspace{30pt} \times \sum_{\mu \in \alpha + \Box} \sum_{\nu \in \beta + \Box}
        \chi^\alpha(\hat{U}^\dagger U) \chi^{\beta,*} (\hat{U}^\dagger U) \\
        &= \frac1{d^2} \int \dd{U}  
        \qty[\sum_{\alpha, \beta \in \mathbb{Y}^d_n} v_\alpha v_\beta 
        \sum_{\mu \in \alpha + \Box} \sum_{\nu \in \beta + \Box} \delta_{\mu,\nu} ] \\
        &=
        \label{Result_of_calc}
        \frac1{d^2} \qty[\sum_{\alpha, \beta \in \mathbb{Y}^d_n} v_\alpha v_\beta 
        \sum_{\mu \in \alpha + \Box} \sum_{\nu \in \beta + \Box} \delta_{\mu,\nu} ].
    \end{align}
    The third equality follows from the invariance of the Haar measure given by $\dd{\hat{U} }= \dd(\hat{U}^\dagger U)$ \cite{mele2024introduction}.
    The fourth equality arises from the normality of the Haar measure, i.e., $\int \dd{U} = 1$, which explicitly demonstrates that average-case fidelity coincides with the worst-case fidelity for the covariant protocol.
    Rearranging this equation, the fidelity $F_{\mathrm{est}}(n,d)$ is expressed with the estimation matrix $M_\mathrm{est}(n,d)$ defined in Eq.~\eqref{def:estimation-matrix} as 
    \begin{equation}
        F_{\mathrm{est}}(n,d) =  \vec{v}^\sfT M_{\mathrm{est}}(n,d) \vec{v},
    \end{equation}
    where $\vec{v}$ is a unit vector supported by $\mathbb{Y}^d_n$.
\end{proof}

\section{Proof of Thm.~\ref{thm:equivalence}: One-to-one correspondence between dPBT and unitary estimation}
\label{appendix_sec:proof_equivalence}
In this section, we prove the main theorem of this paper.

\begin{proof}[Proof of Thm.~\ref{thm:equivalence}]
    We prove this Theorem by constructing a transformation between covariant protocols for dPBT using $N=n+1$ ports and unitary estimation using $n$ calls of the input unitary operation, which does not decrease the fidelity.
    Since the covariant protocol is shown to achieve the optimal fidelity for dPBT \cite{mozrzymas2018optimal, leditzky2022optimality}, the transformation from dPBT to unitary estimation shows
    \begin{align}
        F_\mathrm{PBT}(n+1, d) \leq F_\mathrm{est}(n,d).
    \end{align}
    Similarly, since the parallel covariant protocol is shown to achieve the optimal fidelity for unitary estimation \cite{chiribella2005optimal, bisio2010optimal}, the transformation from unitary estimation to dPBT shows 
    \begin{align}
        F_\mathrm{est}(n,d) \leq F_\mathrm{PBT}(n+1, d).
    \end{align}
    Therefore, we obtain
    \begin{align}
        F_\mathrm{PBT}(n+1, d) = F_\mathrm{est}(n,d).
    \end{align}
    In addition, since an explicit construction of the covariant protocols for dPBT and unitary estimation from any given protocol are shown in Appendixes~\ref{appendix_sec:covariant_PBT} and~\ref{appendix_sec:covariant_unitary_estimation}, this transformation shows a transformation between any given dPBT protocol and unitary estimation protocol.
    Below, we show the transformation between the covariant protocols for dPBT and unitary estimation.

    As shown in Lemmas~\ref{lem:covariant_PBT} and~\ref{lem:covariant_unitary_estimation}, the fidelity of the covariant dPBT and untiary estimation protocols are given by
    \begin{align}
        F_{\mathrm{PBT}}(N,d) &= \vec{w}^\sfT M_\mathrm{PBT}(N,d) \vec{w},\\
        F_\mathrm{est}(n,d) &= \vec{v}^\sfT M_\mathrm{est}(n,d) \vec{v}.
    \end{align}
    Reference \cite{mozrzymas2018optimal} shows that the matrix $M_{\mathrm{PBT}}(n+1,d)$ in Eq.~\eqref{def:teleportation-matrix} can be written as
    \begin{align}
        M_{\mathrm{PBT}}(n+1,d) = {1\over d^2} R^\sfT(n,d) R(n,d),
    \end{align}
    where $R(n,d)$ is a $\abs{\mathbb{Y}^d_n} \times \abs{\mathbb{Y}^d_{n+1}}$ matrix defined by
    \begin{align}
        (R(n,d))_{\alpha, \mu} = \begin{cases}
            1 & (\mu \in \alpha + \square)\\
            0 & (\text{otherwise})
        \end{cases} \quad \forall \alpha\in\young{d}{n}, \mu\in\young{d}{n+1}.
    \end{align}
    From Eq.~\eqref{def:estimation-matrix}, the matrix $M_{\mathrm{est}}(n,d)$ can also be written as
    \begin{align}
        M_{\mathrm{est}}(n,d)= {1\over d^2} R(n,d) R^\sfT(n,d).
    \end{align}
    Thus, from a given covariant protocol for dPBT parametrized by $\vec{w}$, one can obtain a parallel covariant protocol for unitary estimation parametrized by
    \begin{align}
    \label{eq:PBT_to_untiary_estimation}
        \vec{v} \coloneqq {R (n,d) \vec{w} \over \|R (n,d) \vec{w}\|}
    \end{align}
    without decreasing the fidelity since
    \begin{align}
        &{\vec{v}^\sfT M_\mathrm{est}(n,d) \vec{v} \over \vec{w}^\sfT M_\mathrm{PBT}(N,d) \vec{w}}\nonumber\\
        &= {\vec{w}^\sfT R^\sfT(n,d) R(n,d) R^\sfT(n,d) R(n,d) \vec{w} \over \vec{w}^\sfT R^\sfT (n,d) R (n,d) \vec{w}\ \vec{w}^\sfT R^\sfT(n,d) R(n,d) \vec{w}} \geq 1
    \end{align}
    holds, where we use $\vec{w}\vec{w}^\sfT \leq \1$.
    Similarly, from a given parallel covariant protocol for unitary estimation parametrized by $\vec{v}$, one can obtain a covariant protocol for dPBT parametrized by
    \begin{align}
    \label{eq:untiary_estimation_to_PBT}
        \vec{w} \coloneqq {R^\sfT (n,d) \vec{v} \over \|R^\sfT (n,d) \vec{v}\|}
    \end{align}
    without decreasing the fidelity since
    \begin{align}
        &{\vec{w}^\sfT M_\mathrm{PBT}(N,d) \vec{w} \over \vec{v}^\sfT M_\mathrm{est}(n,d) \vec{v}}\nonumber\\
        &= {\vec{v}^\sfT R(n,d) R^\sfT(n,d) R(n,d) R^\sfT(n,d) \vec{v} \over \vec{v}^\sfT R(n,d) R^\sfT(n,d) \vec{v} \vec{v}^\sfT R(n,d) R^\sfT(n,d) \vec{v}} \geq 1
    \end{align}
    holds, where we use $\vec{v}\vec{v}^\sfT \leq \1$.
\end{proof}

\section{Proof of Lem.~\ref{lemma:asymptotically_optimal_unitary_estimation}: Asymptotically optimal fidelity of unitary estimation}
\label{appendix_sec:asymptotically_optimal_unitary_estimation}
Reference \cite{yang2020optimal} shows a unitary estimation protocol achieving the fidelity given by
\begin{align}
    F_\text{est}(n,d)\geq 1-O(d^4)n^{-2}.\label{eq:fidelity_lower_bound}
\end{align}
This section shows that this scaling is tight, as shown below:
\begin{Thm}
\label{thm:fidelity_upper_bound}
    Suppose there exists a sequence of protocols, for $d\in \{2, \ldots\}$, $\epsilon<{1\over 200}$ and an unknown unitary operator $U\in\SU(d)$, using $n_{d,\epsilon}$ calls of $U, U^\dagger, \mathrm{c}U\coloneqq \ketbra{0}\otimes \1_d + \ketbra{1}\otimes U, \mathrm{c}U^\dagger\coloneqq \ketbra{0}\otimes \1_d + \ketbra{1}\otimes U^\dagger$ to obtain the estimator $\hat{U}$ satisfying $f(\hat{U}, \mcU) >1-\epsilon^2$ with probability $\mathrm{Prob}[f(\hat{U}, \mcU)> 1-\epsilon^2]\geq {2\over 3}$.
    Then, $n_{d,\epsilon}$ must satisfy $n_{d,\epsilon} \geq \Omega(d^2)/\epsilon$.
\end{Thm}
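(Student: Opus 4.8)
The plan is to adapt the diamond-distance query lower bound of Ref.~\cite{haah2023query} to the channel-fidelity figure of merit used here, and to protocols that may additionally call $U^\dagger$, $\mathrm{c}U$ and $\mathrm{c}U^\dagger$. The backbone is a packing-plus-hybrid argument. \textbf{Step 1 (reduction to identification).} For traceless Hermitian $H,H'$, writing $U=e^{i\epsilon H}$ and letting $\mathcal{V}$ denote the channel $e^{i\epsilon H'}(\cdot)e^{-i\epsilon H'}$, one has $1-f(U,\mathcal{V}) = \tfrac{\epsilon^2}{d}\|H-H'\|_F^2 + O(\epsilon^3)$, and $\sqrt{1-f}$ obeys a triangle inequality up to a universal constant. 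A standard volumetric argument on $\SU(d)$ near $\1_d$ therefore produces a family $\{U_i = e^{i\epsilon H_i}\}_{i=1}^N$ of size $N=\exp(\Omega(d^2))$ with $\|U_i-\1_d\|_{\mathrm{op}}=O(\epsilon)$ for all $i$, such that the ``fidelity balls'' $B_i\coloneqq\{V\in\SU(d): f(U_i,\mathcal{V})>1-\epsilon^2\}$ are pairwise disjoint. Any protocol meeting the hypothesis then identifies the index $i$, drawn uniformly from $\{1,\dots,N\}$, with probability at least $2/3$: one simply outputs the unique $i$ with $\hat U\in B_i$.

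\textbf{Step 2 (the admissible oracle calls are all weak).} Since $\|\mathrm{c}U_i-\mathrm{c}U_j\|_{\mathrm{op}}=\|U_i^\dagger-U_j^\dagger\|_{\mathrm{op}}=\|U_i-U_j\|_{\mathrm{op}}=O(\epsilon)$, each of the four admissible oracle calls is an $O(\epsilon)$-perturbation of a controlled or tensor extension of the identity; the extra control qubit only enlarges the working space and is never used below, so the argument proceeds uniformly in the type of call. \textbf{Step 3 (query lower bound for identification).} Let $\rho_i$ be the pre-measurement state produced by the $n$-query protocol when the oracle is $U_i$ — including all ancillas and internal randomness — and let $\sigma$ be the state when the oracle is $\1_d$. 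A hybrid argument swapping oracle calls one at a time gives $\|\rho_i-\rho_j\|_1\le 2n\,\|U_i-U_j\|_{\mathrm{op}}$, hence $\|\rho_i-\sigma\|_1=O(n\epsilon)$. By Fano's inequality, identifying one of $N$ equiprobable hypotheses with probability $\ge 2/3$ forces the mutual information between the index and the measurement outcome to be $\ge \tfrac23\log N-O(1)=\Omega(d^2)$; by Holevo's bound this mutual information is at most $\chi(\{1/N,\rho_i\}_i)=\tfrac1N\sum_i S(\rho_i\,\|\,\bar\rho)$, where $\bar\rho=\tfrac1N\sum_i\rho_i$.

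The \emph{main obstacle} is to show that this Holevo quantity is only $O(n\epsilon)$ — rather than the $O(1)$ implied by the crude trace-distance bound of Step~3, or the $O((n\epsilon)^2)$ of a naive second-order estimate — since that is exactly what converts $\chi=\Omega(d^2)$ into $n=\Omega(d^2/\epsilon)$. This is precisely the technical core of Ref.~\cite{haah2023query}: one tracks the circuit state to second order in the $O(\epsilon)$ perturbation along the whole protocol, exploiting the near-triviality of each individual call together with the cancellations enforced by trace preservation and by the traceless structure of the $H_i$, so that the information ``deposited'' per call scales linearly in $\epsilon$. I would import this estimate essentially verbatim, the only adjustments being (i) the change of figure of merit from diamond distance to channel fidelity, which merely rescales the packing radius in Step~1 by a harmless $\Theta(\sqrt d)$ factor, and (ii) the extra bookkeeping needed to accommodate the $U^\dagger$ and controlled calls of Step~2.

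Combining the three steps yields $n_{d,\epsilon}=\Omega(d^2/\epsilon)$, which is the claim. Feeding this into a Markov-inequality argument on $1-f$ — if the error $1-f$ exceeds $t$ with probability at least $1/3$ then its mean is at least $t/3$ — gives $F_{\mathrm{est}}(n,d)\le 1-\Omega(d^4 n^{-2})$, which together with the lower bound of Ref.~\cite{yang2020optimal} pins down the asymptotics and, via Theorem~\ref{thm:equivalence}, establishes Corollary~\ref{cor:asymptotically_optimal_PBT}.
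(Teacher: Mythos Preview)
Your route is genuinely different from the paper's, and the gap is exactly where you flag the ``main obstacle'': the bound $\chi(\{1/N,\rho_i\})=O(n\epsilon)$. You propose to import it verbatim from Ref.~\cite{haah2023query}, but that reference contains no such estimate. Its technical core is a \emph{fractional-query simulation} (their Lemma~4.5, restated in the paper as Lemma~\ref{lem:fractional_query}): $n$ calls to $R^\alpha$ with $\alpha=\Theta(\epsilon)$ can be simulated by $O(\alpha n)$ calls to $\mathrm{c}R$, at the price of a postselection probability as small as $\exp(-O(\alpha n))$. This reduces $\epsilon$-accurate estimation of $R^\alpha$ to \emph{constant}-accuracy discrimination among a net of reflections $R$, for which Ref.~\cite{haah2023query} proves an $\Omega(d^2)$ lower bound (their Proposition~4.2 and Lemma~4.3, restated here as Proposition~\ref{prop:discrimination}) that --- crucially --- tolerates exponentially small success probability; a Fano/Holevo argument cannot operate in that regime. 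The paper follows this route verbatim, its only new ingredient being a rebuild of the constant-distance net in Bures distance rather than diamond distance (Proposition~\ref{prop:bures_net}), which is not the ``harmless $\Theta(\sqrt d)$ rescaling'' you describe and requires going back to the density-matrix packing of Ref.~\cite{haah2016sample}.

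Without an independent proof of $\chi=O(n\epsilon)$, your argument stalls at the wrong rate. The hybrid bound $\|\rho_i-\sigma\|_1=O(n\epsilon)$ on pure output states gives $\langle\phi|\bar\rho|\phi\rangle\geq 1-O((n\epsilon)^2)$, and since $\mathrm{rank}(\bar\rho)\leq N$ one obtains only $\chi\leq S(\bar\rho)=O\!\big((n\epsilon)^2\log N\big)=O\!\big((n\epsilon)^2 d^2\big)$; combined with Fano this yields merely $n\geq\Omega(1/\epsilon)$, missing the $d^2$ factor entirely. The ``second-order tracking'' and ``cancellations enforced by trace preservation and by the traceless structure of the $H_i$'' that you invoke do not appear anywhere in Ref.~\cite{haah2023query}; if such a Holevo bound holds at all you would have to prove it from scratch, and that would be a substantial new result rather than an adaptation.
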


\begin{proof}[Proof of Lem.~\ref{lemma:asymptotically_optimal_unitary_estimation}]
If unitary estimation achieves the average-case channel fidelity $F_\mathrm{est}(n,d)=1-{1\over 3} \epsilon^2$ using $n$ calls of the input unitary operator $U\in \SU(d)$, the probability of obtaining the estimator $\hat{U}$ satisfying $f(\hat{U}, \mcU)>1-\epsilon^2$ should be $\geq {2\over 3}$ since
\begin{align}
    F_\mathrm{est}(n,d)&\leq 1-\mathrm{Prob}[f(\hat{U}, \mcU)\leq 1-\epsilon^2]\epsilon^2\\
    &= 1-[1-\mathrm{Prob}[f(\hat{U}, \mcU)> 1-\epsilon^2]]\epsilon^2
\end{align}
holds.
Then, from Thm.~\ref{thm:fidelity_upper_bound}, we obtain $n\geq \Omega(d^2)/\epsilon$, i.e., $\epsilon \geq \Omega(d^2)/n$.
Therefore, we have the following upper bound on the fidelity of unitary estimation:
\begin{align}
    F_\mathrm{est}(n,d) = 1-{1\over 3}\epsilon^2 \leq 1-\Omega(d^4) n^{-2}.
\end{align}
\end{proof}

Reference \cite{haah2023query} shows a lower bound on the diamond-norm error of unitary estimation, which is given as follows:
\begin{Prop}[Theorem 1.2 in Ref.~\cite{haah2023query}]
\label{prop:diamond_norm_lower_bound}
    Suppose there exists a sequence of protocols, for $d\in \{2, \ldots\}$, $\epsilon<{1\over 8}$ and an unknown unitary operator $U\in\SU(d)$, using $n_{d,\epsilon}$ calls of $U, U^\dagger, \mathrm{c}U\coloneqq \ketbra{0}\otimes \1_d + \ketbra{1}\otimes U, \mathrm{c}U^\dagger\coloneqq \ketbra{0}\otimes \1_d + \ketbra{1}\otimes U^\dagger$ to obtain the estimator $\hat{U}$ satisfying $\|\hat{\mcU}-\mcU\|_\diamond <\epsilon$ with probability $\mathrm{Prob}[\|\hat{\mcU}-\mcU\|_\diamond <\epsilon]\geq {2\over 3}$.
    Then, $n_{d,\epsilon}$ must satisfy $n_{d,\epsilon} \geq \Omega(d^2)/\epsilon$.
\end{Prop}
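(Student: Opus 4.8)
The plan is to treat this as a multi-parameter quantum metrology problem and to prove the lower bound by bounding the rate at which $n$ oracle calls can accumulate information about the $d^2-1$ parameters of $U\in\SU(d)$, and then to play that accumulation rate off against the much sharper diamond-norm success criterion.

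First I would exploit the Haar-covariance of the problem to localise: it suffices to lower bound the number of queries needed to resolve $U$ in a neighbourhood of an arbitrary fixed point, which I take to be $\1_d$. I parametrise $U_{\vec\theta}=\exp(i\sum_{a=1}^{d^2-1}\theta_a T_a)$ with $\{T_a\}$ an orthonormal basis of traceless Hermitian generators, $\Tr(T_aT_b)=\delta_{ab}$. The key elementary fact linking the two geometries is that for traceless $H$ the diamond distance $\|\mcU_{e^{iH}}-\mcU_{\1}\|_\diamond$ is, to leading order, the spectral spread $\lambda_{\max}(H)-\lambda_{\min}(H)$, which is comparable to $\|H\|_\infty$ (operator norm), and \emph{not} to $\|H\|_2=\|\vec\theta\|_2$. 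Thus the success criterion ``diamond distance $<\epsilon$'' is an operator-norm constraint on the estimation error $\Delta=\hat H-H$, a point I return to below.

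Next come the two quantitative inputs. (i) Heisenberg scaling per direction: writing $|\psi_n(\vec\theta)\rangle$ for the algorithm's state after $n$ calls, a standard hybrid/triangle argument bounds $\|\partial_{\theta_a}|\psi_n\rangle\|$ by the sum over the $n$ insertion points of the generator seminorm acting on the query register, giving a single-parameter quantum Fisher information $\mathcal F_{aa}\le 4n^2\,\mathrm{spread}(T_a)^2$ --- the source of the $1/\epsilon$ factor. The four allowed oracle types fit uniformly: $U^\dagger$ contributes $-T_a$ (conjugated), and $\mathrm{c}U,\mathrm{c}U^\dagger$ contribute $|1\rangle\!\langle1|\otimes T_a$, all with $O(1)$ seminorms, so the accounting loses only constants. (ii) The dimension input: summing the first-order signal of a single query over all directions collapses to a Casimir, $\sum_a T_a^2=\tfrac{d^2-1}{d}\1_d$, so at each query the total first-order ``speed'' over all $d^2-1$ parameters is only $\Theta(d)$, whence $\Tr\mathcal F\lesssim n^2 d$.

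The final step combines these with the operator-norm success criterion through a Bayesian (van Trees) Cram\'er--Rao inequality on the parameter ball, using that diamond distance $<\epsilon$ forces the $\ell_2$ error $\|\Delta\|_2\lesssim\epsilon\sqrt d$ (since $\|\Delta\|_\infty\ge\|\Delta\|_2/\sqrt d$). I expect the main obstacle to live precisely here: the naive trace inequality $\Tr(\mathcal F^{-1})\ge (d^2-1)^2/\Tr\mathcal F$ combined with the $\ell_2$ relaxation only yields $n=\Omega(d/\epsilon)$, losing a factor of $d$ against the true bound, because it implicitly lets the protocol allocate its fixed query budget optimally across directions whereas the diamond criterion demands \emph{simultaneous} operator-norm resolution in every direction at once. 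Recovering the tight $\Omega(d^2/\epsilon)$ requires replacing the trace-based step by a genuinely operator-norm-sensitive, joint multi-parameter argument --- equivalently, choosing a hard ensemble (e.g.\ $U$ versus $U e^{i\epsilon P}$ for Haar-random rank-one $P$, or the $\Theta(d^2)$ single-generator perturbations, which are pairwise $\Omega(\epsilon)$-separated in diamond distance) and running an averaged hybrid / Gram-matrix argument in which the algorithm's amplitudes are low-degree trigonometric polynomials of $\vec\theta$, so that a Bernstein--Markov-type inequality upgrades the per-direction Heisenberg bound into the correct joint bound. This is the technical heart of Ref.~\cite{haah2023query} and the step I would budget the most effort for; the surrounding reduction (localisation, handling controlled and inverse queries, and the spread-versus-operator-norm dictionary) is routine by comparison.
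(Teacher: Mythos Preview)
Your proposal has a genuine gap, and it also mischaracterises the actual proof.

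You yourself concede that the Fisher-information/van Trees route, as written, only yields $n=\Omega(d/\epsilon)$: the trace bound $\Tr\mathcal F\lesssim n^2 d$ together with the $\ell_2$ relaxation $\|\Delta\|_2\lesssim \epsilon\sqrt d$ loses a factor of $d$ against the target. Your proposed fix --- upgrading to a ``joint multi-parameter'' argument via Bernstein--Markov on polynomial amplitudes --- is not made precise, and more importantly it is not what Ref.~\cite{haah2023query} does. The claim that this is ``the technical heart'' of that reference is incorrect.

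The paper (following Ref.~\cite{haah2023query}) summarises the actual proof, and it is structurally different from a metrology argument. It has three ingredients, corresponding to Propositions~\ref{prop:diamond_norm_net}, \ref{prop:discrimination} and Lemma~\ref{lem:fractional_query} here:
(i) a \emph{packing} construction: a set $\mathcal N_d$ of $\exp(\Omega(d^2))$ reflections that are pairwise $\geq 1/4$-separated in diamond norm;
(ii) a \emph{discrimination} lower bound: any protocol distinguishing the elements of $\mathcal N_d$ with success probability $\geq \exp(-cn)$ needs $n=\Omega(d^2)$ queries;
(iii) an \emph{$\epsilon$-bootstrap via fractional queries}: estimating $U=R^\alpha$ to diamond error $\epsilon$ with $\alpha=\Theta(\epsilon)$ yields an estimate of the reflection $R$ to constant error, and $n$ queries to $R^\alpha$ can be simulated by $O(\alpha n)$ queries to $\mathrm{c}R$.
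Thus the $d^2$ comes from the entropy of the packing at \emph{constant} scale, and the $1/\epsilon$ comes \emph{separately} from the fractional-query amplification. Your approach tries to extract both factors from a single Cram\'er--Rao step, and that is exactly why it falls short by a factor of $d$.

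The ``hard ensemble of pairwise $\Omega(\epsilon)$-separated perturbations'' you mention in passing is close in spirit to ingredient~(i), but in the actual proof it \emph{replaces} the metrology framework rather than patching it; and the $\epsilon$-dependence is handled by a completely separate mechanism (fractional queries on reflections), not by a Heisenberg-scaling argument.
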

To extend Proposition~\ref{prop:diamond_norm_lower_bound} for the worst-case fidelity as shown in Thm.~\ref{thm:fidelity_upper_bound}, we first review the proof shown in Ref.~\cite{haah2023query}.
It first shows a lower bound on the query complexity of unitary estimation with a constant error given by $\Omega(d^2)$.
The proof of this lower bound uses the reduction of unitary estimation to the discrimination task among the set of unitaries $\mathcal{N}_d$ in which the different unitary operator has a constant distance.
The construction of such a set is shown in Proposition~\ref{prop:diamond_norm_net}, and the lower bound of the query complexity of the discrimination task is shown in Proposition~\ref{prop:discrimination}.
Then, we consider the unitary estimation with the error $\epsilon$.
Since the error $\epsilon$ on $U\in\SU(d)$ can be translated to a constant error on $U^{1/\alpha}$ for $\alpha = \Theta(\epsilon)$, the unitary estimation with the error $\epsilon$ can be translated to the estimation of $R\in\SU(d)$ with a constant error using $R^\alpha$.
To uniquely define $R^\alpha$ for $\alpha\in [-1,1]$, we restrict $R$ to be a Hermitian unitary (Definition~\ref{def:reflection}).
It is shown that any protocol using $n$ calls of $R^\alpha$ can be simulated by using $\Theta(\alpha n)$ calls of $\mathrm{c}R$ (Lem.~\ref{lem:fractional_query}).
By combining this result with the lower bound on the query complexity of unitary estimation with a constant error, we obtain $\Theta(\alpha n) = \Omega(d^2)$, i.e., $n = \Theta(d^2/\epsilon)$.

\begin{Def}[Definition 4.4 in Ref.~\cite{haah2023query}]
\label{def:reflection}
    A unitary $R\in\SU(d)$ is called a reflection if $R^2=\1_d$ holds.
    For a reflection $R\in\SU(d)$ and $\alpha\in[-1,1]$, we define
    \begin{align}
        R^\alpha \coloneqq {1\over 2}(\1_d+R) + e^{-i\pi\alpha}{1\over 2}(\1_d-R).
    \end{align}
\end{Def}

\begin{Prop}[Proposition 4.1 in Ref.~\cite{haah2023query}]
\label{prop:diamond_norm_net}
    There exists a sequence of sets of reflections $\mathcal{N}_d = \{U_x\}_{x=1}^{N_d} \subset \SU(d)$ for $d\in\{2, \ldots\}$ with $N_d \geq \exp(d^2/64)$ such that
    \begin{align}
        \|\mcU_x - \mcU_y\|_\diamond \geq {1\over 4} \quad \forall x\neq y.
    \end{align}
\end{Prop}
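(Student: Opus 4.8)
\emph{Proof strategy.}
The plan is to build $\mathcal{N}_d$ by a Gilbert--Varshamov / volumetric packing argument on a complex Grassmannian. For any rank-$k$ orthogonal projector $P$ on $\CC^d$, the operator $\1-2P$ is self-adjoint, unitary, squares to $\1$, and has determinant $(-1)^k$; so I would fix $k=k(d)$ to be \emph{even}, to satisfy $0<k<d/2$, and to have $k=\Theta(d)$ (for instance $k:=2\lfloor d/5\rfloor$, which is even and strictly below $d/2$ once $d$ exceeds a small constant; for $d\le 6$ one has $\lfloor e^{d^2/64}\rfloor=1$ and there is nothing to prove). Rank-$k$ projectors form the Grassmannian $\mathrm{Gr}(k,\CC^d)$, a compact manifold of real dimension $2k(d-k)=\Theta(d^2)$, and a standard metric-entropy/volumetric bound gives, for every $\delta$ below some absolute constant $c_0>0$, a family $\{P_x\}_{x=1}^{N_d}\subset\mathrm{Gr}(k,\CC^d)$ with $\|P_x-P_y\|_\infty\ge\delta$ (operator norm) for $x\ne y$ and $N_d\ge\exp\!\big(2k(d-k)\log(c_0/\delta)\big)$.

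\emph{Key step: projector separation $\Rightarrow$ channel separation.}
Writing $U_x:=\1-2P_x$, I would show that a constant operator-norm gap between $P_x$ and $P_y$ forces a constant diamond-norm gap between $\mcU_x$ and $\mcU_y$. Set $V:=U_xU_y$, so that $\|\mcU_x-\mcU_y\|_\diamond=\|\mathcal{V}-\mathrm{id}\|_\diamond$ with $\mathcal{V}(\cdot)=V(\cdot)V^\dagger$, and use the exact spectral formula for the diamond distance of unitary channels (see e.g.\ Ref.~\cite{watrous2005notes}): it equals $2$ if $0$ lies in the convex hull of $\mathrm{spec}(V)$, and $2\sin(\Theta/2)$ if $\mathrm{spec}(V)$ lies in a circular arc of length $\Theta\le\pi$. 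A product of two reflections is block diagonal: $V$ acts as $+1$ on $(\mathrm{ran}\,P_x\cap\mathrm{ran}\,P_y)\oplus(\ker P_x\cap\ker P_y)$, as $-1$ on $(\mathrm{ran}\,P_x\cap\ker P_y)\oplus(\ker P_x\cap\mathrm{ran}\,P_y)$, and as a planar rotation by $\pm 2\vartheta_j$ on each remaining two-dimensional block, the $\vartheta_j$ being the principal angles between $\mathrm{ran}\,P_x$ and $\mathrm{ran}\,P_y$. Since $k<d/2$, $\dim(\ker P_x\cap\ker P_y)\ge d-2k\ge 1$, so $+1$ is always an eigenvalue of $V$; a short case analysis on $\vartheta_{\max}:=\max_j\vartheta_j$ then yields $\|\mcU_x-\mcU_y\|_\diamond=2$ when $\vartheta_{\max}\ge\pi/4$ (the eigenvalue $+1$ together with $e^{\pm 2i\vartheta_{\max}}$ already fails to lie in any semicircle) and $\|\mcU_x-\mcU_y\|_\diamond=2\sin(2\vartheta_{\max})$ when $\vartheta_{\max}<\pi/4$; in particular $\|\mcU_x-\mcU_y\|_\diamond\ge 2\sin(2\vartheta_{\max})$ always. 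Because the projectors have equal rank, $\|P_x-P_y\|_\infty=\sin\vartheta_{\max}$, so the bound holds with $\vartheta_{\max}=\arcsin\|P_x-P_y\|_\infty$.

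\emph{Assembling the net.}
Take $\delta:=\sin\!\big(\tfrac12\arcsin\tfrac18\big)\approx 0.063$; then for $x\ne y$, $\|P_x-P_y\|_\infty\ge\delta$ gives $\vartheta_{\max}\ge\tfrac12\arcsin\tfrac18<\pi/4$, whence $\|\mcU_x-\mcU_y\|_\diamond\ge 2\sin(\arcsin\tfrac18)=\tfrac14$. The $U_x:=\1-2P_x$ are reflections in $\SU(d)$ since $k$ is even, so $\mathcal{N}_d:=\{U_x\}$ does the job as long as $N_d\ge\exp\!\big(2k(d-k)\log(c_0/\delta)\big)\ge e^{d^2/64}$; for $d$ large $2k(d-k)\ge\tfrac15 d^2$, so this holds once $\log(c_0/\delta)\ge 5/64$, a relation between absolute constants (and for the finitely many remaining $d\ge 7$ one checks directly that the required $N_d$ is easily attained). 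If $c_0$ were too small for this particular $\delta$, the same scheme goes through after slightly weakening the target constant $1/4$ or pushing $k/d$ closer to $1/2$, both of which only enlarge the packing exponent.

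\emph{Main obstacle.}
The one genuinely delicate point is the key step: the elementary Choi-state bound $\|\mcU_x-\mcU_y\|_\diamond\ge 2\sqrt{1-d^{-2}|\operatorname{Tr}(U_x^\dagger U_y)|^2}$ does \emph{not} suffice, since a constant operator-norm gap between the projectors only forces $\operatorname{Tr}(U_x^\dagger U_y)/d\le 1-\Theta(1/d)$, which tends to $1$; one really needs the exact spectral description of the diamond norm, together with the observation that taking $k<d/2$ (rather than $k=d/2$) guarantees a $+1$ eigenvalue of $U_xU_y$, eliminating the degenerate direction $U_x\approx -U_y$ (i.e.\ $P_x\approx\1-P_y$) and collapsing the whole comparison to the single scalar $\vartheta_{\max}$. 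Beyond that, the work is the bookkeeping needed to make the Grassmannian packing constant $c_0$, the separation scale $\delta\approx 0.063$ fixed by the target $1/4$, and the exponent $1/64$ mutually compatible --- routine, but requiring care because the target exponent sits only modestly below the available packing exponent.
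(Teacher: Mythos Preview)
The paper does not give its own proof of this proposition; it is quoted directly from Ref.~\cite{haah2023query}. The paper does, however, prove the Bures-distance analogue (Proposition~\ref{prop:bures_net}) by a construction it explicitly describes as following Ref.~\cite{haah2023query}'s proof of the present statement, so that is the natural comparison.

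Your route is sound in outline and genuinely different in its key lemma. The paper's construction (following Ref.~\cite{haah2023query}) imports Lemma~8 of Ref.~\cite{haah2016sample}: an explicit family of rank-$r$ projectors on $\CC^{2r}$ separated in \emph{trace} norm, with $r\approx d/3$, and then pads each $2P_x-\1_{2r}$ with an identity block $\1_b$ ($b=d-2r$). That identity block plays exactly the role your condition $k<d/2$ plays, namely forcing $+1$ into the spectrum of $U_xU_y$ so that a single nontrivial principal angle already spreads the eigenvalues over a long arc. You instead pack the Grassmannian directly in \emph{operator} norm by a volumetric argument and feed the result into the exact spectral formula for the diamond distance of unitary channels. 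The trade-off is that the Haah et~al.\ route inherits explicit constants from Ref.~\cite{haah2016sample}, so the targets $1/4$ and $1/64$ fall out cleanly, whereas your approach leaves the Grassmannian packing constant $c_0$ implicit; since the statement fixes both numerical constants, your hedge ``if $c_0$ were too small \ldots'' would actually have to be discharged, not just flagged.

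Two minor slips. First, ``for $d\le 6$ one has $\lfloor e^{d^2/64}\rfloor=1$ and there is nothing to prove'' is not quite right: $e^{d^2/64}>1$ for every $d\ge 2$, so one always needs $N_d\ge 2$, and a singleton does not suffice (easily fixed for $d\ge 3$ by $\{\1_d,\,(-\1_2)\oplus\1_{d-2}\}$). Second, at $d=2$ the only reflections in $\SU(2)$ are $\pm\1_2$, which induce the \emph{same} channel; the statement is therefore already delicate there --- a wrinkle the paper's own treatment of the Bures-distance version shares.
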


\begin{Prop}[Proposition 4.2 + Lemma 4.3 in Ref.~\cite{haah2023query}]
\label{prop:discrimination}
    Let $\mathcal{N}_d = \{U_x\}_{x=1}^{N_d} \subset \SU(d)$ be a set of unitary operators defined for $d\in\{2, \ldots\}$.
    Suppose there exists a sequence of protocols for $d\in\{2 , \ldots \}$ using $n_d$ queries of the input unitary operator $U_x$ for unknown $x\in\{1, \ldots, N_d\}$ to obtain the estimator $\hat{x}$ with the success probability
    \begin{align}
        {1\over N_d}\sum_{x=1}^{N_d} \mathrm{Prob}[\hat{x}=x|x] \geq \exp(-cn_d)
    \end{align}
    for $d\geq 100/c' \geq 1$ and $N_d\geq \exp(c' d^2)$ for some constants $c, c'>0$.
    Then, $n_d$ should satisfy
    \begin{align}
        n_d \geq c'' d^2
    \end{align}
    for some constant $c''$ that depends only on $c$ and $c'$.
\end{Prop}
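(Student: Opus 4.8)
The plan is to turn the query lower bound into an upper bound on the average success probability controlled by the dimension of the subspace that $n_d$ queries can reach, and then to compare this bound with the two hypotheses $N_d\ge\exp(c'd^2)$ and $p_{\mathrm{succ}}\coloneqq N_d^{-1}\sum_x\mathrm{Prob}[\hat x=x\mid x]\ge\exp(-cn_d)$ on a logarithmic scale. First I would invoke the reduction of Appendix~\hyperref[appendix_sec:covariant_unitary_estimation]{C} (which applies to the discrimination task and even to protocols with indefinite causal order): any $n_d$-query protocol can be replaced, without decreasing $p_{\mathrm{succ}}$, by a parallel one in which a fixed probe $\ket{\phi}$ is acted on by $U_x^{\otimes n_d}$ and then measured by a POVM $\{M_{\hat x}\}$. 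The hypothesis-dependent states $\ket{\psi_x}\coloneqq (U_x^{\otimes n_d}\otimes\1)\ket{\phi}$ then carry all the distinguishing information, so the problem reduces to bounding how many mutually distinguishable such states can coexist.

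Next I would bound $\dim V$ for $V\coloneqq\mathrm{span}\{\ket{\psi_x}\}$. Decomposing $\mcO^{n_d}=\bigoplus_{\alpha\in\young{d}{n_d}}\mcU_\alpha\otimes\mcS_\alpha$ by Schur--Weyl, the action of $U_x^{\otimes n_d}$ on block $\alpha$ is $U_{x,\alpha}\otimes\1$, so inside block $\alpha$ the reachable vectors lie in $\{(A\otimes\1)\ket{\phi_\alpha}:A\in\mcL(\mcU_\alpha)\}$, of dimension at most $d_\alpha^2$. Hence $\dim V\le\sum_{\alpha\in\young{d}{n_d}}d_\alpha^2$, and the crucial point is the identity $\sum_{\alpha\in\young{d}{n_d}}d_\alpha^2=\dim\mathrm{Sym}^{n_d}(\CC^{d^2})=\binom{d^2+n_d-1}{n_d}$, which also follows directly from the fact that $\dket{U_x}^{\otimes n_d}$ is a symmetric tensor in $(\CC^{d}\otimes\CC^{d})^{\otimes n_d}$. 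Restricting each $M_{\hat x}$ to $V$ and using $M_{\hat x}\ge0$ together with $\sum_{\hat x}M_{\hat x}\le\1$, a one-line trace bound gives
\begin{align}
  p_{\mathrm{succ}}=\frac1{N_d}\sum_x\langle\psi_x|M_x|\psi_x\rangle\le\frac{\dim V}{N_d}\le\frac1{N_d}\binom{d^2+n_d-1}{n_d}.
\end{align}

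Finally I would combine the bounds logarithmically: $\exp(-cn_d)\le p_{\mathrm{succ}}\le\binom{d^2+n_d-1}{n_d}\exp(-c'd^2)$ yields $c'd^2\le cn_d+\log\binom{d^2+n_d-1}{n_d}$. Writing $n_d=\epsilon d^2$ and using the entropy estimate $\log\binom{d^2+n_d-1}{n_d}\le(1+\epsilon)d^2\,H\!\big(\tfrac{\epsilon}{1+\epsilon}\big)$ (binary entropy in nats) gives $c'\le c\epsilon+(1+\epsilon)H\!\big(\tfrac{\epsilon}{1+\epsilon}\big)=:\Phi(\epsilon)$, where $\Phi$ is continuous and strictly increasing with $\Phi(0)=0$ and $\Phi(\epsilon)\to\infty$; hence $\epsilon\ge\Phi^{-1}(c')=:c''>0$, i.e.\ $n_d\ge c''d^2$ with $c''$ depending only on $c$ and $c'$.

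The step I expect to be the main obstacle is obtaining the \emph{sharp} effective dimension $\binom{d^2+n_d-1}{n_d}$ rather than a cruder surrogate: replacing $\sum_\alpha d_\alpha^2$ by the trivial $d^{2n_d}$ would turn $\log\binom{d^2+n_d-1}{n_d}=\Theta_\epsilon(d^2)$ into $\Theta(n_d\log d)$ and degrade the conclusion to the lossy $n_d=\Omega(d^2/\log d)$. Thus the representation-theoretic identity $\sum_{\alpha}d_\alpha^2=\dim\mathrm{Sym}^{n_d}(\CC^{d^2})$ — equivalently, the confinement of the reachable states to the symmetric subspace — is exactly what removes the spurious $\log d$ and yields a $d$-independent constant $c''$. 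The other point requiring care is verifying that the parallel-covariant reduction of Appendix~\hyperref[appendix_sec:covariant_unitary_estimation]{C} faithfully preserves $p_{\mathrm{succ}}$ for fully general protocols, so that the dimension bound applies beyond the parallel case.
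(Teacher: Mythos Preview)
The paper does not supply its own proof of this statement; it is quoted from Ref.~\cite{haah2023query} (their Proposition~4.2 combined with Lemma~4.3) and used as a black box in the proof of Theorem~\ref{thm:fidelity_upper_bound}. Your argument --- bound the span of the hypothesis-dependent states by $\dim\mathrm{Sym}^{n_d}(\CC^{d^2})=\binom{d^2+n_d-1}{n_d}$, apply the trace inequality $p_{\mathrm{succ}}\le(\dim V)/N_d$, and compare logarithmically via the entropy estimate --- is the standard route for such query lower bounds and presumably coincides with the cited proof.

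One step should be revised. The reduction of Appendix~\hyperref[appendix_sec:covariant_unitary_estimation]{C} does \emph{not} apply to the discrete discrimination task: that argument works by $\SU(d)\times\SU(d)$-twirling the tester, which preserves the Haar-averaged estimation fidelity precisely because the figure of merit is invariant under that action, but it erases the discrete labels $x\in\{1,\ldots,N_d\}$ and does not preserve $N_d^{-1}\sum_x\mathrm{Prob}[\hat x=x\mid x]$. Fortunately you do not need any adaptive-to-parallel reduction. After purifying and deferring measurements, an arbitrary adaptive protocol outputs the pure state $\ket{\psi_U}=W_{n_d}(U\otimes\1)W_{n_d-1}\cdots W_1(U\otimes\1)\ket{\phi_0}$, whose components are homogeneous polynomials of degree $n_d$ in the entries of $U$; equivalently $\ket{\psi_U}=\Gamma\,\dket{U}^{\otimes n_d}$ for a fixed linear map $\Gamma$, whence $\dim\mathrm{span}_x\{\ket{\psi_{U_x}}\}\le\dim\Gamma\big(\mathrm{Sym}^{n_d}(\CC^{d^2})\big)\le\binom{d^2+n_d-1}{n_d}$ directly. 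This is exactly the symmetric-subspace observation you record parenthetically; it should be the main argument, and the appeal to Appendix~\hyperref[appendix_sec:covariant_unitary_estimation]{C} should be dropped. The trace bound $p_{\mathrm{succ}}\le(\dim V)/N_d$ (using $\ketbra{\psi_x}\le Q$ for $Q$ the projector onto $V$ and $\sum_x M_x\le\1$) and the subsequent entropy comparison are correct as written.
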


\begin{Lem}[Lemma 4.5 in Ref.~\cite{haah2023query}]
\label{lem:fractional_query}
    For a reflection $R\in \SU(d)$ and $U \coloneqq R^\alpha$, let $C\in\SU(D)$ be a unitary operator implemented by a quantum circuit using $n$ queries to $U$ or $U^\dagger$.
    Then, there exists a unitary operator $C_R\in\SU(2^n D)$ implemented by a quantum circuit with $n$ auxiliary qubits using $50+100\alpha n$ queries to $\mathrm{c}R$ such that
    \begin{align}
    \label{eq:fractional_query}
        C_R (\ket{\psi}\otimes \ket{0}^{\otimes n}) = \nu \ket{\widetilde{\mathrm{out}}}+\ket{\Phi^\perp},
    \end{align}
    where $\ket{\widetilde{\mathrm{out}}}$ is a normalized vector satisfying
    \begin{align}
    \label{eq:closeness}
        \left\|\ket{\widetilde{\mathrm{out}}}-C\ket{\psi}\right\|_2\leq \exp(-99\alpha\pi n-20),
    \end{align}
    for the $2$-norm $\|\cdot\|_2$ defined by $\|\ket{\psi}\|_2\coloneqq \sqrt{\braket{\psi}}$,
    $\nu\in \CC$ is a complex number satisfying
    \begin{align}
    \label{eq:post_selection}
        \abs{\nu}\geq 0.99 \exp(-\alpha\pi n/2),
    \end{align}
    and $\ket{\Phi^\perp}$ satisfies $(\1_D\otimes \bra{0}^{\otimes n})\ket{\Phi^\perp} = 0$.
\end{Lem}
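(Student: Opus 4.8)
The plan is to prove Lemma~\ref{lem:fractional_query} by a truncated linear‑combination‑of‑unitaries (LCU) expansion of the reflection‑query circuit, exploiting the fact that a fractional power of a reflection is an exact two‑term combination of $\1_d$ and $R$. Since $R^2=\1_d$, Definition~\ref{def:reflection} gives
\begin{align}
  U = R^\alpha = a\,\1_d + b\,R,\quad a = e^{-i\pi\alpha/2}\cos\tfrac{\pi\alpha}{2},\quad b = i\,e^{-i\pi\alpha/2}\sin\tfrac{\pi\alpha}{2},
\end{align}
so that $|a|=\cos\tfrac{\pi\alpha}{2}$, $|b|=\sin\tfrac{\pi\alpha}{2}$, and $s\coloneqq|a|+|b|=\cos\tfrac{\pi\alpha}{2}+\sin\tfrac{\pi\alpha}{2}$. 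Because $R^\dagger=R$, a query to $U^\dagger=\bar a\,\1_d+\bar b\,R$ has the same structure with conjugated coefficients and identical magnitudes, and each occurrence of $R$ is realized by one call to $\mathrm{c}R$. Writing $C=C_n\,U\,C_{n-1}\cdots U\,C_0$ with query‑free unitaries $C_i$ and expanding every query yields an exact LCU indexed by $\vec\epsilon\in\{0,1\}^n$,
\begin{align}
  C = \sum_{\vec\epsilon\in\{0,1\}^n} c_{\vec\epsilon}\,C_{\vec\epsilon},\qquad C_{\vec\epsilon}\coloneqq C_n R^{\epsilon_n}C_{n-1}\cdots R^{\epsilon_1}C_0,
\end{align}
where $|c_{\vec\epsilon}|=|a|^{\,n-|\vec\epsilon|}\,|b|^{\,|\vec\epsilon|}$ and $C_{\vec\epsilon}$ uses genuine $R$'s only at the $|\vec\epsilon|=\sum_j\epsilon_j$ positions with $\epsilon_j=1$.

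The key step is \emph{query compression by truncation}. The total $\ell_1$ weight is $\sum_{\vec\epsilon}|c_{\vec\epsilon}|=(|a|+|b|)^n=s^n$, and the relative weight of strings of a fixed Hamming weight is binomial with success probability $p\coloneqq|b|/s=\sin\tfrac{\pi\alpha}{2}/(\cos\tfrac{\pi\alpha}{2}+\sin\tfrac{\pi\alpha}{2})$, whose mean $np$ is $\approx\tfrac{\pi}{2}\alpha n$ for small $\alpha$. Truncating to $|\vec\epsilon|\le k$ with $k$ a large constant multiple of the mean, a Chernoff bound makes the dropped relative weight $\Pr[\mathrm{Bin}(n,p)>k]\le\exp(-99\pi\alpha n-20)$; taking $k\simeq100\alpha n$ places the cutoff at roughly $64$ times the mean and reproduces the stated exponent up to Haah's tuned constants. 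Implementing the truncated LCU with the standard PREPARE--SELECT--UNPREPARE scheme uses $n$ ancilla qubits to index $\vec\epsilon$, and SELECT applies $R$ (via $\mathrm{c}R$) only at the flagged positions, so $C_R$ uses at most $k$ calls to $\mathrm{c}R$; adding the fixed PREPARE/UNPREPARE overhead gives the $50+100\alpha n$ query count.

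The amplitude and error then follow from the LCU normalization. The scheme block‑encodes $C/s^n$ on the subspace where all ancillas return to $\ket{0}^{\otimes n}$, so projecting onto $\1_D\otimes\ket{0}^{\otimes n}$ produces $C_R(\ket{\psi}\otimes\ket{0}^{\otimes n})=\nu\ket{\widetilde{\mathrm{out}}}+\ket{\Phi^\perp}$ with $(\1_D\otimes\bra{0}^{\otimes n})\ket{\Phi^\perp}=0$ by construction, as in Eq.~\eqref{eq:fractional_query}, and $|\nu|=1/s^n$ up to the truncation correction. The expansion $\log s=\tfrac{\pi\alpha}{2}-(\tfrac{\pi\alpha}{2})^2+O(\alpha^3)$ with $\alpha^2 n\to0$ — valid in the regime $\alpha n=\Theta(1)$, $\alpha=\Theta(\epsilon)$ relevant to Theorem~\ref{thm:fidelity_upper_bound} — yields $s^{-n}\ge0.99\,e^{-\pi\alpha n/2}$, matching Eq.~\eqref{eq:post_selection}. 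The normalized postselected output $\ket{\widetilde{\mathrm{out}}}$ differs from $C\ket{\psi}$ only through the discarded high‑weight terms, whose combined operator norm relative to the truncated normalization is bounded by the same Chernoff tail, so $\|\ket{\widetilde{\mathrm{out}}}-C\ket{\psi}\|_2\le\exp(-99\pi\alpha n-20)$, matching Eq.~\eqref{eq:closeness}.

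The main obstacle is the query compression itself with sharp constants: one must show that $n$ fractional queries, each close to $\1_d$, can be simulated by only $O(1+\alpha n)$ controlled reflections rather than $\Omega(n)$. This rests on (i) the exact two‑term form of $R^\alpha$, special to reflections, which removes any within‑query series error; (ii) a careful binomial/Chernoff estimate for $p=\sin\tfrac{\pi\alpha}{2}/s$ that simultaneously fixes the cutoff $k\simeq100\alpha n$, the error $\exp(-99\pi\alpha n-20)$, and the amplitude $\ge0.99\,e^{-\pi\alpha n/2}$; and (iii) a SELECT that applies $\mathrm{c}R$ only at the flagged positions so the gate count tracks $k$ rather than $n$. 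That only $\mathrm{c}R$ (and neither $\mathrm{c}R^\dagger$ nor multiply‑controlled variants) is required follows from $R=R^{-1}$, but compiling the position‑controlled sparse applications within the stated additive overhead is the delicate part.
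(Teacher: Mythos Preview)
The paper does not prove Lemma~\ref{lem:fractional_query}; it is quoted verbatim from Ref.~\cite{haah2023query} and used as a black box in the proof of Theorem~\ref{thm:fidelity_upper_bound}. There is therefore no ``paper's own proof'' to compare against. Your sketch is essentially an outline of the argument in Haah's paper: exact two-term LCU expansion of $R^\alpha$, Hamming-weight truncation via a binomial/Chernoff tail, and a sparse PREPARE--SELECT--UNPREPARE block-encoding.

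Two remarks on the sketch itself. First, your attribution of the additive $50$ in the query count to ``PREPARE/UNPREPARE overhead'' is not right: those isometries act only on the ancillas and use no calls to $\mathrm{c}R$. The $50$ is a floor on the truncation cutoff $k$, needed so that when $\alpha n$ is small the tail bound still delivers the absolute $e^{-20}$ factor in Eq.~\eqref{eq:closeness}; with $k\ge 50$ the binomial tail $\Pr[X>k]$ is tiny regardless of how small the mean $np$ is. Second, the step you flag as ``delicate'' is in fact the crux of the lemma: a naive SELECT applies $\mathrm{c}R$ at each of the $n$ positions (controlled on the corresponding ancilla bit), costing $n$ queries, not $k$. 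Reducing this to $k$ queries requires re-encoding each weight-$\le k$ string by the ordered list of its $1$-positions and then applying $\mathrm{c}R$ only $k$ times, each time at a position specified by the compressed register, interleaved with the query-free unitaries $C_i$. You state the desired outcome but do not supply this mechanism, so the query count $50+100\alpha n$ is asserted rather than established. Without that piece the argument does not yet prove the lemma; with it, the remaining estimates you give (for $|\nu|$ and for the truncation error) are correct in spirit, and in fact the bound $s^{-n}\ge e^{-\pi\alpha n/2}$ holds for all $\alpha\in[0,1]$ without the small-$\alpha$ expansion you invoke, since $\cos x+\sin x\le 1+x\le e^x$ on $[0,\pi/2]$.
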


We extend Proposition~\ref{prop:diamond_norm_net} for the case of channel fidelity.
To this end, we define the Bures distance \cite{zyczkowski2006geometry} $d_B(U, V)$ between two unitary operators $U, V\in\SU(d)$ by
\begin{align}
    d_B(U, V) \coloneqq \sqrt{1-f(U, \mathcal{V})},
\end{align}
which satisfies the following properties:
\begin{align}
    d_B(U,W)&\leq d_B(U, V)+d_B(V, W),\\
    d_B(U, V) &= d_B(AUB, AVB) \quad \forall A, B\in\SU(d),\\
    d_B(U, V) &\geq {1\over \sqrt{2} d} \min_{\phi\in \mathrm{U}(1)}\|U-\phi V\|_\mathrm{tr},\label{eq:bures_trace}
\end{align}
where $\mathrm{U}(1)$ is the set of complex numbers with a unit norm, $\|\cdot\|_\mathrm{tr}$ is the trace norm defined by
\begin{align}
    \|\rho\|_\mathrm{tr} &\coloneqq \Tr\sqrt{\rho^\dagger \rho}.
\end{align}
The inequality \eqref{eq:bures_trace} is shown below.
First, the Bures distance $d_B(U, V)$ is given by
\begin{align}
    d_B(U, V) = {1\over \sqrt{2d}} \min_{\phi\in \mathrm{U}(1)}\|U-\phi V\|_\mathrm{F},
\end{align}
where $\|\cdot\|_\mathrm{F}$ is the Frobenius norm defined by
\begin{align}
    \|\rho\|_\mathrm{F} \coloneqq \sqrt{\Tr(\rho^\dagger \rho)}.
\end{align}
Then, the inequality \eqref{eq:bures_trace} holds since
\begin{align}
    \|\rho\|_\mathrm{F} = \sqrt{\sum_{i=1}^{d} \lambda_i^2} \geq \sqrt{{1\over d}\left(\sum_{i=1}^{d} \lambda_i\right)^2} = {\|\rho\|_\mathrm{tr}\over \sqrt{d}},
\end{align}
where $\lambda_i\geq 0$ are singular values of $\rho$.
Then, we show the following Proposition:

\begin{Prop}
\label{prop:bures_net}
    There exists a sequence of sets of reflections $\mathcal{N}_d = \{U_x\}_{x=1}^{N_d} \subset \SU(d)$ for $d\in\{2, \ldots\}$ with $N_d \geq \exp(d^2/288)$ such that
    \begin{align}
        d_B(U_x, U_y) \geq {1\over 100} \quad \forall x\neq y.
    \end{align}
\end{Prop}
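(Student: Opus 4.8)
The plan is to mirror the construction underlying Proposition~\ref{prop:diamond_norm_net}, but to keep track of the Bures distance $d_B$ rather than the diamond norm, producing an explicit family of reflections from a dense packing of a Grassmannian. Fix an even integer $k$; for a rank-$k$ orthogonal projector $P$ on $\CC^d$ set $R_P\coloneqq\1_d-2P$. Then $R_P^\dagger=R_P$, $R_P^2=\1_d$, and $\det R_P=(-1)^k=1$, so $R_P\in\SU(d)$ is a reflection. Since $R_P,R_Q$ are Hermitian and, using $\|P-Q\|_\mathrm{F}^2=2k-2\Tr(PQ)$, one has $\Tr(R_PR_Q)=d-4k+4\Tr(PQ)=d-2\|P-Q\|_\mathrm{F}^2$ (the $k$-dependence cancels), we obtain the exact identity
\begin{align}
    d_B(R_P,R_Q)^2=1-\frac{1}{d^2}\Big(d-2\|P-Q\|_\mathrm{F}^2\Big)^2=\frac{4\,\|P-Q\|_\mathrm{F}^2\,\Big(d-\|P-Q\|_\mathrm{F}^2\Big)}{d^2}.
\end{align}
I would then take $k$ to be the largest even integer with $k\le d/4$, so that $\|P-Q\|_\mathrm{F}^2\le\Tr P+\Tr Q=2k\le d/2$ for all rank-$k$ projectors; then $d-\|P-Q\|_\mathrm{F}^2\ge d/2$ and the identity above simplifies to the clean lower bound $d_B(R_P,R_Q)\ge\sqrt{2/d}\,\|P-Q\|_\mathrm{F}$. (For the finitely many small $d$ where $\exp(d^2/288)$ only marginally exceeds $1$ the claim is handled ad hoc or is vacuous; the substantive regime is $d$ large, where $k=\Theta(d)$.)

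It therefore suffices to produce, for every large $d$, a set $\{P_x\}$ of rank-$k$ projectors on $\CC^d$ with $\#\{P_x\}\ge\exp(d^2/288)$ and $\|P_x-P_y\|_\mathrm{F}\ge\tfrac{1}{100}\sqrt{d/2}$ for $x\ne y$: then $U_x\coloneqq R_{P_x}$ is a family of reflections in $\SU(d)$ with $d_B(U_x,U_y)\ge\tfrac{1}{100}$, as required. This is a standard metric-packing problem. The Grassmannian $\mathrm{Gr}(k,\CC^d)$ of rank-$k$ projectors is a compact symmetric space, hence a Riemannian manifold of real dimension $2k(d-k)=\Theta(d^2)$ with nonnegative Ricci curvature, and its Frobenius diameter is $\sqrt{2k}=\Theta(\sqrt d)$; the required separation $\tfrac{1}{100}\sqrt{d/2}$ is a fixed fraction ($\approx 1/100$) of that diameter, so exponentially many well-separated points exist. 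Concretely one may (i) take a maximal $\eta$-separated set with $\eta=\tfrac{1}{100}\sqrt{d/2}$ --- automatically an $\eta$-net --- so that its cardinality is at least $\mathrm{vol}(\mathrm{Gr}(k,\CC^d))/\max_P\mathrm{vol}(B(P,\eta))$, bound $\mathrm{vol}(B(P,\eta))\le\omega_{2k(d-k)}\,\eta^{\,2k(d-k)}$ by Bishop--Gromov (valid since $\mathrm{Ric}\ge 0$), and combine this with the explicit volume of the Grassmannian; or (ii) take $P_1,\dots,P_N$ independent Haar-random rank-$k$ projectors and note that $\Tr(P_iP_j)$ is an $O(\sqrt k)$-Lipschitz function of $P_j$ with mean $k^2/d$, so that by concentration of measure on the Grassmannian the probability of $\|P_i-P_j\|_\mathrm{F}<\tfrac{1}{100}\sqrt{d/2}$ (equivalently, that $\Tr(P_iP_j)$ deviates from $k^2/d$ by a $\Theta(d)$ amount) is $\exp(-\Omega(d^2))$, whence a union bound over the $\binom{N}{2}$ pairs succeeds up to $N=\exp(\Omega(d^2))$.

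The only point requiring care is bookkeeping the constants: one must check that the absolute constant in the packing bound of (i) --- equivalently, the large-deviation exponent in (ii) --- yields at least $\exp(d^2/288)$ points for the chosen $k=\Theta(d)$. The margin is enormous; indeed, the leading-order Stirling estimate of $\mathrm{vol}(\mathrm{Gr}(k,\CC^d))$ together with $\mathrm{vol}(B(P,\eta))\le\omega_{2k(d-k)}\eta^{2k(d-k)}$ gives a packing number of the form $\exp(cD)$ with $D=2k(d-k)$ and $c$ larger than $1$, to be compared with the required $d^2/288=\Theta(D)$ with a far smaller constant --- this slack reflects the fact that the separation $1/100$ and exponent $1/288$ are much weaker than the $1/4$ and $1/64$ of Proposition~\ref{prop:diamond_norm_net}. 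I expect this constant-chasing --- together with recording the explicit volume and curvature/injectivity-radius data of $\mathrm{Gr}(k,\CC^d)$, or the explicit concentration constant for overlaps of Haar-random subspaces --- to be the main, and essentially the only, obstacle; everything else is the elementary linear algebra behind the displayed identity.
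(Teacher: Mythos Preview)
Your argument is sound and follows a route closely related to --- but not identical with --- the paper's. Both proofs build reflections out of projectors and reduce to a Grassmannian packing: the paper sets $r=\lfloor(d-1)/3\rfloor$, takes a trace-norm packing $\{\rho_x=\tfrac1r P_x\}$ of rank-$r$ states on $\CC^{2r}$ from Lemma~8 of Haah~\emph{et~al.}~(2016), and forms $U_x=(2r\rho_x-\1_{2r})\oplus\1_b$; it then bounds $d_B$ only through the inequality $d_B(U,V)\ge\tfrac{1}{\sqrt2\,d}\min_{\phi}\|U-\phi V\|_{\mathrm{tr}}$, where the extra $\1_b$ block is what pins the optimal phase $\phi$ near $1$. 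You instead compute $d_B(R_P,R_Q)$ exactly from the channel fidelity and obtain the clean identity $d_B^2=4t(d-t)/d^2$ with $t=\|P-Q\|_{\mathrm F}^2$, which removes the need for the $\1_b$ trick and the phase minimisation entirely --- a genuine simplification (and your parity condition $k$ even to force $\det R_P=1$ is a detail the paper glosses over). The trade-off is on the packing side: the paper's appeal to an existing lemma makes the constant $1/288$ immediate, whereas your volume/Bishop--Gromov or Haar-concentration alternatives are standard in principle but leave the explicit constant to be chased (and one must be mildly careful that the Frobenius metric is chordal, not the intrinsic Grassmannian distance, though the two are uniformly comparable). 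Given that the packing you need is in a $2k(d-k)=\Theta(d^2)$-dimensional space at a scale $1/100$ of its diameter, the slack is indeed ample, so this is bookkeeping rather than a gap.
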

\begin{proof}
    We construct the sequence of the sets $\mathcal{N}_d$ similarly to the proof of Proposition~\ref{prop:diamond_norm_net} in Ref.~\cite{haah2023query}.
    For $d\in\{2, 3\}$, $\mathcal{N}_d$ defined by
    \begin{align}
        \mathcal{N}_d \coloneqq \{\1_d, X_2\oplus \1_{d-2}\}
    \end{align}
    satisfies the properties shown in Proposition~\ref{prop:bures_net}, where $X_2$ is a Pauli X matrix defined by $X_2\coloneqq \ketbra{+}-\ketbra{-}$ for $\ket{\pm}\coloneqq (\ket{0}\pm \ket{1})/\sqrt{2}$.
    We construct $\mathcal{N}_d$ for $d\geq 4$ as shown below.
    We define $r$ and $b$ by
    \begin{align}
        r\coloneqq \left\lfloor{d-1 \over 3}\right\rfloor \geq d/6, \quad b\coloneqq d-2r \geq r \quad \forall d\geq 4.
    \end{align}
    There exists a sequence of sets of $2r$-dimensional density operators $\mathcal{D}_r = \{\rho_x\}_{x=1}^{N_d}$, where $D_r \geq \exp(r^2/8)$ and $\rho_x$ has rank $r$, exactly half of the eigenvalues are $1/r$ and the other half are zero, and 
    \begin{align}
        \|\rho_x - \rho_y\|_\mathrm{tr} \geq {1\over 4} \quad \forall x\neq y
    \end{align}
    as shown in Lemma 8 of Ref.~\cite{haah2016sample}.
    Defining the set of reflections
    \begin{align}
        \mathcal{N}_d \coloneqq \{U_x\coloneqq V_x\oplus \1_b \coloneqq (2r \rho_x - \1_{2r})\oplus \1_b\}_{x=1}^{N_d},
    \end{align}
    $\mathcal{N}_d$ satisfies the properties shown in Proposition~\ref{prop:bures_net} as shown below.
    The number of elements in $\mathcal{N}_d$ satisfies
    \begin{align}
        N_d \geq \exp(r^2/8) \geq \exp(d^2/288) \quad \forall d\geq 4.
    \end{align}
    The Bures distance between two different elements $U_x, U_y \in \mathcal{N}_d$ is given by
    \begin{align}
        &d_B(U_x, U_y)\nonumber\\
        &\geq {1\over \sqrt{2}d} \min_{\phi\in\mathrm{U}(1)} \|U_x-\phi U_y\|_\mathrm{tr}\\
        &= {1\over \sqrt{2}d} \min_{\phi\in\mathrm{U}(1)} \|(V_x-\phi V_y) \oplus (1-\phi)\1_b\|_\mathrm{tr}\\
        &\geq {1\over \sqrt{2}d} \min_{\phi\in\mathrm{U}(1)} \max\{\|\phi \1_{2r}-V_y^\dagger V_x\|_\mathrm{tr}, \|(1-\phi)\1_b\|_\mathrm{tr}\}\\
        &\geq {1\over \sqrt{2}d} \min_{\phi\in\mathrm{U}(1)} \max\{\|\1_{2r} - V_y^\dagger V_x\|_\mathrm{tr} - \|(\phi-1) \1_{2r}\|_\mathrm{tr}, \nonumber\\
        &\hspace{150pt}\|(1-\phi)\1_b\|_\mathrm{tr}\}\\
        &= {1\over \sqrt{2}d} \min_{\phi\in\mathrm{U}(1)} \max\{\|\1_{2r} - V_y^\dagger V_x\|_\mathrm{tr} - 2r\abs{\phi-1}, \nonumber\\
        &\hspace{180pt} b \abs{\phi-1}\}\\
        &\geq {1\over \sqrt{2}d} {b\over 2r+b} \|\1_{2r} - V_y^\dagger V_x\|_\mathrm{tr}\\
        &= {1\over \sqrt{2}d} {b\over 2r+b} \|V_y - V_x\|_\mathrm{tr}\\
        &= {2r\over \sqrt{2d}} {b\over 2r+b} \|\rho_y - \rho_x\|_\mathrm{tr}\\
        &\geq {r\over 2\sqrt{2}d} {b\over 2r+b}\\
        &\geq {1\over 36\sqrt{2}}\\
        &\geq {1\over 100}.
    \end{align}
\end{proof}

\begin{proof}[Proof of Thm.~\ref{thm:fidelity_upper_bound}]
    We show Thm.~\ref{thm:fidelity_upper_bound} following a similar argument for the proof of Proposition~\ref{prop:diamond_norm_lower_bound} shown in Ref.~\cite{haah2023query}.
    The proof shown here is almost the same as shown in Ref.~\cite{haah2023query}, but we put a proof here for completeness.
    
    Suppose there exists a quantum circuit $\mcA$ using $n_{d,\epsilon}$ queries to the oracles $\mathrm{c}U$ and $\mathrm{c}U^\dagger$ to obtain the estimator $\hat{U}$ satisfying $f(\hat{U}, \mcU) > 1-\epsilon^2$, i.e., $d_B(\hat{U}, U)<\epsilon$ for $\epsilon< 1/200$.
    If $U$ is given by $U=R^\alpha$ for $\alpha = \lfloor{1\over 200\epsilon}\rfloor^{-1}$, we can obtain the estimator $\hat{R} \coloneqq \hat{U}^{1/\alpha}$ satisfying $d_B(\hat{R}, R)< 1/200$ with probability $\geq {2\over 3}$ since
    \begin{align}
        d_B(\hat{R}, R)
        &= d_B(\hat{U}^{1/\alpha}, U^{1/\alpha})\\
        &\leq \sum_{p=1}^{1/\alpha} d_B(\hat{U}^p U^{1/\alpha-p}, \hat{U}^{p-1} U^{1/\alpha-p+1})\\
        &= {1\over \alpha} d_B(\hat{U}, U)
        < {\epsilon \over \alpha}
        \leq {1\over 200}
    \end{align}
    holds.

    Next, we show that the quantum circuit $\mcA$ can be simulated with a black-box access to $\mathrm{cc}R$.
    Suppose a quantum circuit $\mcA$ consists of a unitary operator $C\in\SU(D)$ followed by a quantum measurement $\mcM$ \cite{nielsen2010quantum}.
    Using Lem.~\ref{lem:fractional_query}, we obtain a unitary operator $C_R$ using $n'_{d,\epsilon} = 50+100\alpha n_{d,\epsilon}$ queries to $\mathrm{cc}R$ satisfying Eq.~\eqref{eq:fractional_query}.
    We define a quantum circuit $\mcA_R$, which first measures the output state of auxiliary qubits in the computational basis.
    If the measurement outcome is $(0, \ldots, 0)$, we perform the measurement $\mcM$ on the target system and declare failure otherwise.
    From Eq.~\eqref{eq:post_selection}, the postselection probability is given by 
    \begin{align}
        \abs{\nu}^2\geq 0.99^2 \exp(-\alpha\pi n_{d,\epsilon}).
    \end{align}
    We denote the probability distribution of the measurement outcome of $\mcA$ and $\mcA_R$ conditioned on the postselection by $\mathbf{a}$ and $\mathbf{a}_R$, respectively.
    Then, from Eq.~\eqref{eq:closeness}, the total variation distance $\mathrm{dist}_{\mathrm{TV}}(\mathbf{a}, \mathbf{a}_R)$ of $\mathbf{a}$ and $\mathbf{a}_R$ is given by \cite{bernstein1993quantum}
    \begin{align}
        \mathrm{dist}_{\mathrm{TV}}(\mathbf{a}, \mathbf{a}_R) \leq 4\left\|\ket{\widetilde{\mathrm{out}}}-C\ket{\psi}\right\|_2\leq 4\exp(-99\alpha\pi n_{d,\epsilon}-20).
    \end{align}
    Thus, the success probability for $\mcA_R$ to obtain the estimator $\hat{R}$ satisfying $d_B(\hat{R}, R)<1/200$ is given by
    \begin{align}
        &\mathrm{Prob}_{\mcA_R}[d_B(\hat{R}, R)<1/200]\nonumber\\
        &= \mathrm{Prob}_{\mcA_R}[\text{post selection}]\nonumber\\
        &\hspace{15pt}\times \mathrm{Prob}_{\mcA_R}[d_B(\hat{R}, R)<1/200|\text{post selection}]\\
        &\geq 0.99^2 \exp(-\alpha\pi n_{d,\epsilon})\nonumber\\
        &\hspace{15pt}\times \left[\mathrm{Prob}_{\mcA}[d_B(\hat{R}, R)<1/200]- \mathrm{dist}_{\mathrm{TV}}(\mathbf{a}, \mathbf{a}_R)\right]\\
        &\geq 0.99^2 \exp(-\alpha\pi n_{d,\epsilon}) \nonumber\\
        &\hspace{15pt}\times \left[{2\over 3} - 4\exp(-99\alpha\pi n_{d,\epsilon}-20)\right]\\
        &\geq {1\over 2} \exp(-\alpha\pi n_{d,\epsilon}).
    \end{align}
    
    If we have a promise that $R$ is taken from a set $\mathcal{N}\subset \SU(d)$, we can discriminate an element $\mathrm{cc}R$ in $\mathrm{cc}\mathcal{N} \coloneqq \{\mathrm{cc}R|R\in\mathcal{N}\}\subset \SU(4d)$ with probability $\geq {1\over 2}\exp(-\alpha\pi n_{d,\epsilon}) \geq \exp(\delta_2-\delta_1 n'_{d,\epsilon})$ using $n'_{d,\epsilon} = 50+100\alpha n_{d,\epsilon}$ queries to $\mathrm{cc}R$ for constants $\delta_1\geq 0$ and $\delta_2\in\mathbb{R}$ that do not depend on $d$ and $\epsilon$.
    From Proposition~\ref{prop:discrimination}, we obtain
    \begin{align}
        n'_{d,\epsilon} = \Omega(d^2),
    \end{align}
    i.e.,
    \begin{align}
        n_{d,\epsilon} = {n'_{d,\epsilon}-50 \over 100 \alpha} \geq \Omega(d^2)/\epsilon
    \end{align}
    holds.
\end{proof}

\section{Proof of Cor.~\ref{cor:optimal_unitary_estimation_small_n}: Optimal unitary estimation protocol for $n\leq d-1$}
\label{appendix_sec:optimal_unitary_estimation}
In this section, we explicitly construct the optimal unitary estimation protocol for $n\leq d-1$ shown in Cor.~\ref{cor:optimal_unitary_estimation_small_n}.
To this end, we apply the transformation shown in Appendix~\ref{appendix_sec:proof_equivalence} to the optimal covariant dPBT protocol shown in Refs.~\cite{ishizaka2008asymptotic, mozrzymas2018optimal}.

The optimal deterministic port-based teleportation using $N$ ports to teleport a $d$-dimensional state for the case of $N\leq d$ is constructed in Ref.~\cite{mozrzymas2018optimal}, which shows the following fidelity:
\begin{align}
    F_\mathrm{PBT}(N,d) = {N\over d^2},
\end{align}
where the optimal resource state is the covariant state \eqref{eq:covariant_resource_state} parametrized by $\vec{w}$ given as
\begin{align}
    w_\mu = {m_\mu \over \sqrt{N!}},\label{eq:optimal_PBT}
\end{align}
where $m_\mu$ is the multiplicity of the irreducible representation $\mu$ of the unitary group (see Appendix~\ref{appendix_sec:shur_weyl_duality}).
By using the one-to-one correspondence between the unitary estimation and PBT shown in Thm.~\ref{thm:equivalence}, we obtain the optimal unitary estimation with the covariant probe state \eqref{eq:def_of_probe_state} parametrized by $\vec{v}$ given as
\begin{align}
    v_\alpha
    &\propto \sum_{\mu} (R(n,d))_{\alpha, \mu} w_\mu\\
    &\propto \sum_{\mu\in\alpha+\square} m_\mu\\
    &= N m_\alpha,
\end{align}
where we used Lem.~\ref{lem:symmetric_group_induced_irrep_dimension}.
Since $\sum_{\alpha\in\young{d}{n}} m_\alpha^2 = n! = (N-1)!$ holds for $n\leq d-1$, we obtain
\begin{align}
    v_\alpha = {m_\alpha \over \sqrt{n!}}.\label{eq:optimal_estimation}
\end{align}
The optimal fidelity of the unitary estimation is given by
\begin{align}
    F_\mathrm{est}(d,n) = F_\mathrm{PBT}(d,N=n+1) = {n+1\over d^2}.
\end{align}

\section{Proof of Thm.~\ref{thm:optimality_of_parallel_unitary_inversion}: Optimality of parallel unitary inversion for $n\leq d-1$}
\label{appendix_sec:feasilibity_proof}
In this section, we show Thm.~\ref{thm:optimality_of_parallel_unitary_inversion}, stating that parallel protocol achieves the optimal fidelity of unitary inversion among the most general transformation of quantum operations including the ones with indefinite causal order.
To this end, we first review the framework of quantum supermaps representing the transformation of quantum operations.
Then, we show the proof of Thm.~\ref{thm:optimality_of_parallel_unitary_inversion}.

The quantum supermap \cite{chiribella2008transforming, chiribella2008quantum} is given as a linear map $\mathcal{C}: \bigotimes_{i=1}^{n} [\mcL(\mcI_i) \to \mcL(\mcO_i)] \to [\mcL(\mcP) \to \mcL(\mcF)]$, where $\mcI_i$ and $\mcO_i$ are the Hilbert spaces representing the input and output systems of the $i$-th input quantum operation, $\mcP$ and $\mcF$ are the Hilbert spaces representing the input and output systems of the output quantum operation.
Since the quantum supermaps should preserve the completely positive (CP) and trace-preserving (TP) maps, it satisfies the completely CP preserving (CCPP) condition \cite{quintino2019probabilistic, gour2019comparison} given by
\begin{align}
    (\mathcal{C} \otimes I)(\Lambda) \text{ is CP} \quad \forall \text{ CP maps } \Lambda,
\end{align}
and TP preserving (TPP) condition \cite{quintino2019probabilistic, gour2019comparison} given by
\begin{align}
    \mathcal{C}(\Lambda) \text{ is TP} \quad \forall \text { TP maps } \Lambda.
\end{align}
There exist quantum supermaps satisfying both the CCPP and TPP conditions but not implementable within the quantum circuit framework \cite{chiribella2008quantum, wechs2021quantum}, called quantum supermaps with indefinite causal order \cite{hardy2007towards, oreshkov2012quantum, chiribella2013quantum}.
The most general quantum supermap including the one with indefinite causal order can be represented by the Choi matrix \cite{chiribella2008quantum} $C\in \mcL(\mcP\otimes \mcI^n \otimes \mcO^n \otimes \mcF)$ satisfying $C\geq 0$, $L_\mathrm{GEN}(C) = 0$ and $\Tr(C) = d^n$, where $L_\mathrm{GEN}$ is a linear map corresponding to the TPP condition, and $\mcI^n$ and $\mcO^n$ are the joint Hilbert spaces defined by $\mcI^n\coloneqq \bigotimes_{i=1}^{n} \mcI_i$ and $\mcO^n\coloneqq \bigotimes_{i=1}^{n} \mcO_i$, respectively.

Using the framework of the quantum supermap, we show Thm.~\ref{thm:optimality_of_parallel_unitary_inversion}.

\begin{proof}[Proof of Thm.~\ref{thm:optimality_of_parallel_unitary_inversion}]
Since the optimal fidelity of parallel unitary inversion is the same as that of unitary estimation \cite{quintino2022deterministic}, we obtain
\begin{align}
    F_\mathrm{inv}^{(\mathrm{PAR})}(n,d) = {n+1\over d^2}
\end{align}
for $n\leq d-1$ from Cor.~\ref{cor:optimal_unitary_estimation_small_n}.
We conclude the proof of Thm.~\ref{thm:optimality_of_parallel_unitary_inversion} by showing an upper bound on the optimal fidelity for indefinite causal order protocol.
To this end, we introduce a semidefinite programming (SDP) to obtain the optimal fidelity of unitary inversion.
The optimal fidelity of deterministic unitary inversion using general protocols is given by the following SDP \cite{quintino2022deterministic}:
\begin{align}
\begin{split}
  &\max \Tr(C\Omega)\\
  \text{s.t. } & \mcL(\mcP\otimes \mcI^n \otimes \mcO^n \otimes \mcF)\ni C\geq 0,\\
  & L_\mathrm{GEN}(C)=0,\\
  &\Tr(C) = d^n.
\end{split}
\label{eq:sdp}
\end{align}
where $\Omega\in \mcL(\mcP\otimes \mcI^n \otimes \mcO^n \otimes \mcF)$ is a positive operator called the performance operator.
The performance operator $\Omega$ is given by
\begin{align}
  \Omega
  \coloneqq \frac{1}{d^2}\int \dd U_d \dketbra{U_d}_{\mcI^n \mcO^n}^{\otimes n} \otimes \dketbra{U_d}_{\mcF\mcP},
\end{align}
where $\dd U_d$ is the Haar measure on $\mathrm{SU}(d)$, $\dket{U_d}$ is the dual vector of $U_d\in\SU(d)$ given in Eq.~\eqref{eq:def_dual_vector}, and the subscripts $\mcI^n \mcO^n$ and $\mcP \mcF$ represent the Hilbert spaces where the corresponding operators act.
As shown in Refs.~\cite{quintino2022deterministic,yoshida2023reversing}, the performance operator $\Omega$ satisfies the unitary group symmetry given by
\begin{align}
    \label{eq:Omega_unitary_group_symmetry}
    [\Omega, U^{\otimes n+1}_{\mcI^n \mcF} \otimes V^{\otimes n+1}_{\mcO^n \mcP}] = 0 \quad \forall U, V\in\SU(d).
\end{align}
Thus, it can be represented by a linear combination of $\{(E^{\mu}_{ij})_{\mcI^n \mcF} \otimes (E^{\nu}_{kl})_{\mcO^n \mcP}\}_{\mu, \nu\in\young{d}{n+1}, i,j\in\{1, \ldots, m_\mu\}, k,l\in\{1, \ldots, m_\nu\}}$ using the operator $E^{\mu}_{ij}$ defined in Eq.~\eqref{eq:def_E}, as follows \cite{quintino2022deterministic,yoshida2023reversing}\footnote{Note that Ref.~\cite{yoshida2023reversing} defines $d_\mu$ and $m_\mu$ in the opposite notation.}:
\begin{align}
    \Omega = {1\over d^2} \sum_{\mu\in\young{d}{n+1}}\sum_{i,j=1}^{m_\mu} {(E^{\mu}_{ij})_{\mcI^n \mcF} \otimes (E^{\mu}_{ij})_{\mcO^n \mcP} \over d_\mu}.
\end{align}
In addition, the performance operator $\Omega$ satisfies the symmetric group symmetry given by
\begin{align}
    \label{eq:Omega_symmetric_group_symmetry}
    [\Omega, (V_\sigma)_{\mcI^n} \otimes (V_\sigma)_{\mcO^n} \otimes \1_{\mcP \mcF}] = 0 \quad \forall \sigma\in\mfS_n,
\end{align}
where $V_\sigma$ is the representation of the symmetric group defined in Eq.~\eqref{eq:def_permutation_operator}.
The dual problem of the SDP \eqref{eq:sdp} is given by \cite{bavaresco2021strict}
\begin{align}
\begin{split}
    &\min \lambda\\
    \text{s.t. } & W\in\mcL(\mcP \otimes \mcI^n \otimes \mcO^n)\\
    & \Omega \leq \lambda (W\otimes \1_{\mcF}),\\
    & \Tr_{\mcO_i} W = \Tr_{\mcI_i \mcO_i} W \otimes \frac{\1_{\mcI_i}}{d}\;\;\;\forall i\in\{1, \ldots, n\},\\
    & \Tr W =d^n,
\end{split}\label{eq:dual_sdp}
\end{align}
and any feasible solution of the dual problem~\eqref{eq:dual_sdp} gives an upper bound of the optimal fidelity of deterministic unitary inversion using general protocols.

We construct a feasible solution of the dual SDP~\eqref{eq:dual_sdp} with
\begin{align}
    \lambda &= \frac{n+1}{d^2},\label{eq:def_lambda}
\end{align}
which shows that the optimal fidelity of unitary inversion using general protocols is bounded by
\begin{align}
    F_\mathrm{inv}^{(\mathrm{GEN})}(n,d) \leq {n+1\over d^2}.
\end{align}
Since $F_\mathrm{inv}^{(\mathrm{PAR})}(n,d)\leq F_\mathrm{inv}^{(\mathrm{SEQ})}(n,d)\leq F_\mathrm{inv}^{(\mathrm{GEN})}(n,d)$ hold, we obtain
\begin{align}
    &{n+1\over d^2} = F_\mathrm{inv}^{(\mathrm{PAR})}(n,d)\leq F_\mathrm{inv}^{(\mathrm{SEQ})}(n,d)
    \nonumber\\
    &\hspace{60pt}\leq F_\mathrm{inv}^{(\mathrm{GEN})}(n,d) \leq {n+1\over d^2},
\end{align}
i.e., $F_\mathrm{inv}^{(\mathrm{PAR})}(n,d)=F_\mathrm{inv}^{(\mathrm{SEQ})}(n,d)=F_\mathrm{inv}^{(\mathrm{GEN})}(n,d)={n+1\over d^2}$ holds.
Due to the unitary group and symmetric group symmetries~\eqref{eq:Omega_unitary_group_symmetry} and \eqref{eq:Omega_symmetric_group_symmetry} of the performance operator $\Omega$, the constraints in the dual SDP~\eqref{eq:dual_sdp} is invariant under the twirling operations given by
\begin{align}
    W&\mapsto \int \dd U \int\dd V (U^{\otimes n}_{\mcI^n} \otimes V^{\otimes n+1}_{\mcO^n \mcP}) W (U^{\otimes n}_{\mcI^n} \otimes V^{\otimes n+1}_{\mcO^n\mcP})^\dagger,\\
    W&\mapsto {1\over n!} \sum_{\sigma\in \mfS_{n}} [(V_\sigma)_{\mcI^n} \otimes (V_\sigma)_{\mcO^n} \otimes \1_{\mcP}] \nonumber\\
    &\hspace{60pt}\times W [(V_\sigma)_{\mcI^n} \otimes (V_\sigma)_{\mcO^n} \otimes \1_{\mcP}]^\dagger.
\end{align}
Thus, without loss of generality, we can assume that $W$ satisfies the unitary group and symmetric group symmetries given by
\begin{align}
    \label{eq:W_unitary_group_symmetry}
    &[W, U^{\otimes n}_{\mcI^n} \otimes V^{\otimes n+1}_{\mcO^n\mcP}] = 0 \quad \forall U, V\in\SU(d),\\
    \label{eq:W_symmetric_group_symmetry}
    &[W, (V_\sigma)_{\mcI^n} \otimes (V_\sigma)_{\mcO^n} \otimes \1_{\mcP}] = 0 \quad \forall \sigma\in\mfS_n.
\end{align}
We assume the following ansatz on $W$:
\begin{align}
    W = \sum_{\alpha\in \young{d}{n}} \sum_{\mu\in \alpha+\square} w_{\alpha\mu} \sum_{a,b=1}^{m_\alpha} (E^\alpha_{ab})_{\mcI^n} \otimes (E^\mu_{a^\alpha_\mu b^\alpha_\mu})_{\mcO^n\mcP},\label{eq:ansatz_W}
\end{align}
where $w_{\alpha\mu}\geq 0$.
Note that the operator $(E^\mu_{a^\alpha_\mu b^\alpha_\mu})_{\mcO^n\mcP}$ is well-defined for all $\mu\in\alpha+\square$ since $\alpha+\square \subset \young{d}{n+1}$ holds for all $\alpha\in\young{d}{n}$ when $n \leq d-1$.
The ansatz~\eqref{eq:ansatz_W} satisfies the unitary group symmetry~\eqref{eq:W_unitary_group_symmetry} by definition of the operators $E^\alpha_{ab}$.
The ansatz~\eqref{eq:ansatz_W} also satisfies the symmetric group symmetry~\eqref{eq:W_symmetric_group_symmetry} since
\begin{align}
    &[(V_\sigma)_{\mcI^n} \otimes (V_\sigma)_{\mcO^n} \otimes \1_{\mcP}] W [(V_\sigma)_{\mcI^n} \otimes (V_\sigma)_{\mcO^n} \otimes \1_{\mcP}]^{\dagger}\nonumber\\
    &= \sum_{\alpha\in\young{d}{n}} \sum_{\mu\in\alpha+\square} 
    w_{\alpha\mu}(\1_{\mcU_\alpha})_{\mcI^n} \otimes (\1_{\mcU_\mu})_{\mcO^n\mcP} \nonumber\\
    &\hspace{30pt} \otimes (\sigma_\alpha\ketbra{\alpha, a}{\alpha, b}\sigma_\alpha^{\dagger})_{\mcS_\alpha} \otimes (\sigma_\mu\ketbra{\mu, a^\alpha_\mu}{\mu, b^\alpha_\mu}\sigma_\mu^{\dagger})_{\mcS_\mu}\\
    &= \sum_{\alpha\in\young{d}{n}} \sum_{\mu\in\alpha+\square} w_{\alpha\mu} \sum_{a,b=1}^{m_\alpha} (\1_{\mcU_\alpha})_{\mcI^n} \otimes (\1_{\mcU_\mu})_{\mcO^n\mcP} \nonumber\\
    &\hspace{30pt}\otimes (\ketbra{\alpha, a}{\alpha, b})_{\mcS_\alpha} \otimes (\ketbra{\mu, a^\alpha_\mu}{\mu, b^\alpha_\mu})_{\mcS_\mu}\\
    &= W
\end{align}
holds, where we use the equality
\begin{align}
    \sum_{a=1}^{m_\alpha} \sigma_\alpha \ket{\alpha, a} \otimes \sigma_\mu \ket{\mu, a^{\alpha}_\mu} &= \sum_{a=1}^{m_\alpha} \sigma_\alpha \ket{\alpha, a} \otimes \sigma_\mu^* \ket{\mu, a^{\alpha}_\mu} \\
    &= \sum_{a=1}^{m_\alpha} \ket{\alpha, a} \otimes \ket{\mu, a^{\alpha}_\mu},
\end{align}
which is derived from the fact that $\sigma_\mu$ is a real matrix in the Young-Yamanouchi basis\footnote{See, e.g., Ref.~\cite{ceccherini2010representation}.} and the action of $\sigma\in \mfS_n$ on $\ket{\mu, a^{\alpha}_\mu}$ is unitarily equivalent to the action of $\sigma$ on $\ket{\alpha, a}$ as shown in Eq.~(\ref{eq:subgroup_adapted}).
Due to the permutation symmetry (\ref{eq:W_symmetric_group_symmetry}), the constraints of the dual SDP \eqref{eq:dual_sdp} hold if
\begin{align}
    \Omega &\leq \lambda (W \otimes \1_{\mcF}),\label{eq:dual_problem_cond0}\\
    \Tr_{\mcO_n} W &= \Tr_{\mcI_n \mcO_n} W \otimes \frac{\1_{\mcI_n}}{d}, \label{eq:dual_problem_cond1}\\
    \Tr W&=d^n.\label{eq:dual_problem_cond2}
\end{align}

The right-hand side of the inequality~\eqref{eq:dual_problem_cond0} is given by
\begin{align}
    &\lambda(W\otimes \1_{\mcF})\nonumber\\
    &= \lambda \sum_{\alpha\in \young{d}{n}} \sum_{\mu, \nu\in \alpha+\square} w_{\alpha\mu} \sum_{a,b=1}^{m_\alpha} (E^\nu_{a^\alpha_\nu b^\alpha_\nu})_{\mcI^n\mcF} \otimes (E^\mu_{a^\alpha_\mu b^\alpha_\mu})_{\mcO^n\mcP}\\
    &\geq \lambda \sum_{\alpha\in \young{d}{n}} \sum_{\mu\in \alpha+\square} w_{\alpha\mu} \sum_{a,b=1}^{m_\alpha} (E^\mu_{a^\alpha_\mu b^\alpha_\mu})_{\mcI^n\mcF} \otimes (E^\mu_{a^\alpha_\mu b^\alpha_\mu})_{\mcO^n\mcP}\\
    &= \lambda \sum_{\mu\in \young{d}{n+1}} \sum_{\alpha \in \mu-\square} w_{\alpha\mu} \sum_{a,b=1}^{m_\alpha} (E^\mu_{a^\alpha_\mu b^\alpha_\mu})_{\mcI^n\mcF} \otimes (E^\mu_{a^\alpha_\mu b^\alpha_\mu})_{\mcO^n\mcP}\\
    &= \lambda \sum_{\mu\in\young{d}{n+1}} \sum_{\alpha \in\mu-\square} m_\alpha w_{\alpha\mu} (\1_{\mcU_\mu})_{\mcI^n\mcF} \otimes (\1_{\mcU_\mu})_{\mcO^n\mcP} \nonumber\\
    &\hspace{120pt} \otimes \ketbra{\phi_{\alpha, \mu}}{\phi_{\alpha, \mu}}_{\mcS_\mu \mcS_\mu},
\end{align}
where $\ket{\phi_{\alpha, \mu}} \in \mcS_\mu\otimes \mcS_\mu$ is defined by
\begin{align}
\ket{\phi_{\alpha, \mu}}&\coloneqq \frac{1}{\sqrt{m_\alpha}} \sum_{a=1}^{m_\alpha} \ket{\mu, a_\mu}_{\mcS_\mu} \otimes \ket{\mu, a_\mu}_{\mcS_\mu}.
\end{align}
On the other hand, the left-hand side of the inequality~\eqref{eq:dual_problem_cond0} is given by
\begin{align}
    \Omega &= \frac{1}{d^2}\sum_{\mu\in\young{d}{n+1}} \sum_{i,j=1}^{m_\mu} {(E^\mu_{ij})_{\mcI^n\mcF} \otimes (E^\mu_{ij})_{\mcO^n\mcP} \over d_\mu}\\
    &= {1\over d^2} \sum_{\mu\in\young{d}{n+1}} \frac{1}{d_\mu} (\1_{\mcU_\mu})_{\mcI^n\mcF} \otimes (\1_{\mcU_\mu})_{\mcO^n\mcP} \otimes \ketbra{\phi_{\mu}}{\phi_{\mu}}_{\mcS_\mu \mcS_\mu},
\end{align}
where $\ket{\phi_{\mu}} \in \mcS_\mu\otimes \mcS_\mu$ is defined by
\begin{align}
\ket{\phi_{\mu}}&\coloneqq \frac{1}{\sqrt{m_\mu}} \sum_{i=1}^{m_\mu} \ket{\mu, i}_{\mcS_\mu} \otimes \ket{\mu, i}_{\mcS_\mu}.
\end{align}
Since $\ket{\phi_\mu} = \sum_{\alpha\in\mu-\square} c_{\alpha, \mu} \ket{\phi_{\alpha, \mu}}$ and $\sum_{\alpha\in\mu-\square} |c_{\alpha, \mu}|^2=1$ holds for $c_{\alpha, \mu}\coloneqq \sqrt{m_\alpha/m_\mu}$, we have the following inequality:
\begin{align}
    \sum_{\alpha\in\mu-\square} \ketbra{\phi_{\alpha, \mu}}{\phi_{\alpha, \mu}} \geq \ketbra{\phi_{\mu}}{\phi_{\mu}}.
\end{align}
Therefore, by taking $w_{\alpha\mu}$ as
\begin{align}
    w_{\alpha\mu} = \frac{m_\mu}{(n+1) m_\alpha d_\mu},\label{eq:def_w_alpha_mu}
\end{align}
and $\lambda$ as in Eq.~\eqref{eq:def_lambda}, we obtain the inequality~\eqref{eq:dual_problem_cond0} as shown below:
\begin{align}
   &\lambda (W\otimes \1_{\mcF})\nonumber\\
    &\geq {1\over d^2}\sum_{\mu\in\young{d}{n+1}} \frac{m_\mu}{d_\mu} \sum_{\alpha \in\mu-\square} (\1_{\mcU_\mu})_{\mcI^n\mcF} \otimes (\1_{\mcU_\mu})_{\mcO^n\mcP} \nonumber\\
    &\hspace{120pt} \otimes \ketbra{\phi_{\alpha, \mu}}{\phi_{\alpha, \mu}}_{\mcS_\mu \mcS_\mu}\\
    &\geq \frac{1}{d^2}\sum_{\mu\in\young{d}{n+1}} \frac{1}{d_\mu} (\1_{\mcU_\mu})_{\mcI^n\mcF} \otimes (\1_{\mcU_\mu})_{\mcO^n\mcP} \otimes \ketbra{\phi_{\mu}}{\phi_{\mu}}_{\mcS_\mu \mcS_\mu}\\
    &= \Omega.
\end{align}

The constraint (\ref{eq:dual_problem_cond1}) is confirmed as follows.  By defining $\tau\coloneqq (n, n+1) \in \mfS_{n+1}$, $\Tr_{\mcO_n} W$ is given by
\begin{align}
    &\Tr_{\mcO_n} W \nonumber\\
    &= [\Tr_{\mcP} [(\1_{\mcI^n} \otimes (V_\tau)_{\mcO^n\mcP}) W (\1_{\mcI^n} \otimes (V_\tau)_{\mcO^n\mcP})^{\dagger}]]_{\mcO_n \to \mcP},
\end{align}
where subscript $\mcO_n \to \mcP$ represents the relabelling of the Hilbert spaces. This is further calculated as
\begin{align}
    &\Tr_{\mcO_n} W\nonumber\\
    &=
    \frac{1}{n+1}\sum_{\alpha\in\young{d}{n}} \sum_{\mu\in\alpha+\square} \frac{m_\mu}{m_\alpha d_\mu} \sum_{a,b=1}^{m_\alpha} (E^\alpha_{ab})_{\mcI^n} \nonumber\\
    &\hspace{75pt} \otimes [\Tr_{\mcP} [V_\tau (E^\mu_{a_\mu b_\mu}) V_\tau^{\dagger}]_{\mcO^n\mcP}]_{\mcO_n \to \mcP}\\
    &=
    \frac{1}{n+1}\sum_{\alpha\in\young{d}{n}} \sum_{\mu\in\alpha+\square} \frac{m_\mu}{m_\alpha d_\mu} \sum_{a,b=1}^{m_\alpha} \sum_{i,j=1}^{m_\mu} (E^\alpha_{ab})_{\mcI^n} \nonumber\\
    &\hspace{75pt} \otimes [\tau_\mu]_{i a_\mu} [\tau_\mu^\dagger]_{b_\mu j} [\Tr_{\mcP} (E^\mu_{ij})_{\mcO^n\mcP}]_{\mcO_n \to \mcP}\\
    &=
    \frac{1}{n+1}\sum_{\alpha\in\young{d}{n}} \sum_{\mu\in\alpha+\square} \frac{m_\mu}{m_\alpha} \sum_{a,b=1}^{m_\alpha} \sum_{i,j=1}^{m_\mu} (E^\alpha_{ab})_{\mcI^n} \nonumber\\
    &\hspace{30pt} \otimes \sum_{\beta\in \mu-\square} \sum_{c,d=1}^{d_\beta}  [\tau_\mu]_{c_\mu^\beta a_\mu^\alpha} [\tau_\mu^\dagger]_{b_\mu^\alpha d_\mu^\beta} \frac{(E^\beta_{cd})_{\mcO^{n-1}\mcP}}{d_\beta}\\
    &=
    \sum_{\alpha, \beta\in\young{d}{n}}
    \sum_{a,b=1}^{m_\alpha} \sum_{c,d=1}^{d_\beta} A^{\alpha\beta}_{abcd}(E^\alpha_{ab})_{\mcI^n} \otimes  (E^\beta_{cd})_{\mcO^{n-1}\mcP}
    ,
\end{align}
where $[\tau_\mu]_{ij}\coloneqq \bra{\mu, i}\tau_\mu \ket{\mu, j}$, $a_\mu^\alpha$ represents the index of the standard tableaux $s_{a_\mu^\alpha}^{\mu}$ obtained by adding a box \fbox{$n+1$} to the standard tableaux $s^\alpha_a$, and the coefficient $A^{\alpha\beta}_{abcd}$ is defined by
\begin{align}
    A^{\alpha\beta}_{abcd}\coloneqq \frac{1}{n+1} \sum_{\mu\in(\alpha+\square) \cap (\beta+\square)} \frac{m_\mu}{m_\alpha}[\tau_\mu]_{c_\mu^\beta a_\mu^\alpha} [\tau_\mu^\dagger]_{b_\mu^\alpha d_\mu^\beta}.\label{eq:def_coefficient}
\end{align}
To proceed with the calculation, we show the following Lemma for the coefficient $A^{\alpha\beta}_{abcd}$:
\begin{Lem}
\label{lem:coefficient}
When $n\leq d-1$ holds, coefficient $A^{\alpha\beta}_{abcd}$ defined by Eq.~(\ref{eq:def_coefficient}) is given by
\begin{align}
    A^{\alpha\beta}_{abcd} = \frac{1}{n} \sum_{\lambda\in(\alpha-\square) \cap (\beta-\square)} \sum_{p,q=1}^{m_\lambda} \frac{m_\beta}{m_\lambda} \delta_{a, p^\lambda_\alpha} \delta_{b, q_\lambda^\alpha} \delta_{c, p^\lambda_\beta} \delta_{d, q^\lambda_\beta}.
\end{align}
\end{Lem}
We prove Lem.~\ref{lem:coefficient} in the end of this section.
Using this relation, $\Tr_{\mcO_n} W$ is further calculated as
\begin{align}
    &\Tr_{\mcO_n} W\nonumber\\
    &=
    \frac{1}{n}\sum_{\alpha, \beta\in\young{d}{n}} \sum_{\lambda\in (\alpha-\square) \cap (\beta-\square)}
    \sum_{p,q=1}^{m_\lambda} \frac{m_\beta}{m_\lambda}\left(E^\alpha_{p_\alpha^\lambda q_\alpha^\lambda}\right)_{\mcI^n} \nonumber\\
    &\hspace{135pt}\otimes  \frac{\left(E^\beta_{p_\beta^\lambda q_\beta^\lambda}\right)_{\mcO^{n-1}\mcP}}{d_\beta}\\
    &=
    \frac{1}{n} \sum_{\lambda\in \young{d}{n-1}} \sum_{\beta\in \lambda+\square}
    \frac{m_\beta}{m_\lambda d_\beta}\sum_{p,q=1}^{m_\lambda} \sum_{\alpha\in\lambda+\square} \left(E^\alpha_{p_\alpha^\lambda q_\alpha^\lambda}\right)_{\mcI^n} \nonumber\\
    &\hspace{135pt} \otimes  \left(E^\beta_{p_\beta^\lambda q_\beta^\lambda}\right)_{\mcO^{n-1}\mcP}\\
    &=
    \frac{1}{n} \sum_{\lambda\in \young{d}{n-1}} \sum_{\beta\in \lambda+\square}
    \frac{m_\beta}{m_\lambda d_\beta}\sum_{p,q=1}^{m_\lambda} \sum_{\alpha\in\lambda+\square} \left(E^\lambda_{pq}\right)_{\mcI^{n-1}} \nonumber\\
    &\hspace{105pt} \otimes  \left(E^\beta_{p_\beta^\lambda q_\beta^\lambda}\right)_{\mcO^{n-1}\mcP} \otimes \1_{\mcI_n}.
\end{align}
Thus, the constraint (\ref{eq:dual_problem_cond1}) holds.

The constraint (\ref{eq:dual_problem_cond2}) is satisfied since
\begin{align}
    \Tr W &= \frac{1}{n+1}\sum_{\alpha\in\young{d}{n}} \sum_{\mu\in\alpha+\square} m_\mu d_\alpha\\
    &= \sum_{\alpha\in\young{d}{n}} m_\alpha d_\alpha\\
    &= \dim \bigoplus_{\alpha\in\young{d}{n}} \mcU_\alpha \otimes \mcS_\alpha\\
    &= \dim (\CC^d)^{\otimes n}\\
    &= d^n
\end{align}
holds, where we use Lem.~\ref{lem:symmetric_group_induced_irrep_dimension} in Appendix~\ref{appendix_sec:shur_weyl_duality}.

Therefore, $(\lambda, W)$ given in Eqs.~\eqref{eq:def_lambda}, \eqref{eq:ansatz_W} and \eqref{eq:def_w_alpha_mu} is a feasible solution of the dual SDP \eqref{eq:dual_sdp}.
As shown above, this gives a matching upper bound on the optimal fidelity of unitary inversion using general protocols, which completes the proof.
\end{proof}

\begin{proof}[Proof of Lem.~\ref{lem:coefficient}]
We first consider the operator $\{E^{\beta}_{cd}\}$ for $\beta\in\young{D}{n}$ and $c,d\in\range{m_\beta}$ defined on $(\CC^D)^{\otimes n+1}$ for $D\geq n+1$.  For $\tau=(n,n+1)\in\mfS_{n+1}$, 
\begin{align}
    \Tr_{n+1} [V_\tau(E^{\beta}_{cd} \otimes \1_{\CC^D})V_\tau^{\dagger}] = \Tr_{n} E^{\beta}_{cd} \otimes \1_{\CC^D}
\end{align}
holds.  The left-hand side is calculated as
\begin{align}
    &\Tr_{n+1} [V_\tau(E^{\beta}_{cd} \otimes \1_{\CC^D})V_\tau^{\dagger}]\nonumber\\
    &=
    \sum_{\mu\in\beta+\square} \Tr_{n+1} [V_\tau (E^\mu_{c^\beta_\mu d^\beta_\mu}) V_\tau^\dagger]\\
    &=
    \sum_{\mu\in\beta+\square} \sum_{i,j=1}^{m_\mu} [\tau_\mu]_{i c_\mu^\beta} [\tau_\mu^\dagger]_{d_\mu^\beta j} \Tr_{n+1} (E^\mu_{ij})\\
    &=
    \sum_{\mu\in\beta+\square} \sum_{\alpha\in\mu-\square} \sum_{a,b=1}^{m_\alpha}
    [\tau_\mu]_{a_\mu^\alpha c_\mu^\beta} [\tau_\mu^\dagger]_{d_\mu^\beta b_\mu^\alpha} \frac{d_\mu}{d_\alpha} E^{\alpha}_{ab}\\
    &=
    \sum_{\alpha\in\young{D}{n}}
    \sum_{\mu\in(\alpha+\square) \cap (\beta+\square)}
    \sum_{a,b=1}^{m_\alpha}
    [\tau_\mu]_{a_\mu^\alpha c_\mu^\beta} [\tau_\mu^\dagger]_{d_\mu^\beta b_\mu^\alpha} \frac{d_\mu}{d_\alpha} E^{\alpha}_{ab}\\
    &=
    \sum_{\alpha\in\young{D}{n}}
    \sum_{\mu\in(\alpha+\square) \cap (\beta+\square)}
    \sum_{a,b=1}^{m_\alpha}
    [\tau_\mu]_{c_\mu^\beta a_\mu^\alpha} [\tau_\mu^\dagger]_{b_\mu^\alpha d_\mu^\beta} \frac{d_\mu}{d_\alpha} E^{\alpha}_{ab}.
\end{align}
Here, we use the relation $\tau_\mu^\sfT = (\tau_\mu^\dagger)^* = \tau_\mu^* = \tau_\mu$.
The right-hand side is calculated as
\begin{align}
    &\Tr_{n} E^{\beta}_{cd} \otimes \1_{\CC^D} \nonumber\\
    &=
    \sum_{\lambda\in\beta-\square} \sum_{p,q=1}^{m_\lambda} \frac{d_\beta}{d_\lambda} \delta_{c, p^\lambda_\beta} \delta_{d, q^\lambda_\beta} E^{\lambda}_{pq} \otimes \1_{\CC^D}\\
    &=
    \sum_{\lambda\in\beta-\square} \sum_{p,q=1}^{m_\lambda}
    \sum_{\alpha\in\lambda+\square}
    \frac{d_\beta}{d_\lambda} \delta_{c, p^\lambda_\beta} \delta_{d, q^\lambda_\beta} E^{\alpha}_{p^{\lambda}_{\alpha} q^{\lambda}_{\alpha}}\\
    &=
    \sum_{\alpha\in\young{D}{n}}
    \sum_{\lambda\in(\alpha-\square) \cap (\beta-\square)} \sum_{p,q=1}^{m_\lambda}
    \frac{d_\beta}{d_\lambda} \delta_{c, p^\lambda_\beta} \delta_{d, q^\lambda_\beta} E^{\alpha}_{p^{\lambda}_{\alpha} q^{\lambda}_{\alpha}}.
\end{align}
Therefore, we obtain
\begin{align}
    &\sum_{\mu\in(\alpha+\square) \cap (\beta+\square)} \frac{d_\mu}{d_\alpha} [\tau_\mu]_{c_\mu^\beta a_\mu^\alpha} [\tau_\mu^\dagger]_{b_\mu^\alpha d_\mu^\beta} \nonumber\\
    &= \sum_{\lambda\in(\alpha-\square) \cap (\beta-\square)} \frac{d_\beta}{d_\lambda} \delta_{a, p^\lambda_\alpha} \delta_{b, q_\lambda^\alpha} \delta_{c, p^\lambda_\beta} \delta_{d, q^\lambda_\beta}.
\end{align}
Due to the hook-content formula \eqref{eq:hook-content} shown in Appendix~\ref{appendix_sec:shur_weyl_duality} for the dimensions of the irreducible representations of the unitary group and the symmetric group,
\begin{align}
    \lim_{D\to \infty} \frac{1}{D} \frac{d_\mu}{d_\alpha}&= \lim_{D\to \infty} (n+1)\frac{D+i-j}{D}\frac{m_\mu}{m_\alpha}=(n+1) \frac{m_\mu}{m_\alpha},\\
    \lim_{D\to \infty} \frac{1}{D} \frac{d_\beta}{d_\lambda}&=\lim_{D\to \infty} n\frac{D+k-l}{D}\frac{m_\beta}{m_\lambda}=n\frac{m_\beta}{m_\lambda}
\end{align}
hold, where $(i,j) ((k,l))$ is the position of the box added to $\alpha (\beta)$ to obtain $\mu (\nu)$.
Therefore, we obtain
\begin{align}
    A^{\alpha\beta}_{abcd}
    &=\frac{1}{n+1}\sum_{\mu\in(\alpha+\square) \cap (\beta+\square)} \frac{m_\mu}{m_\alpha} [\tau_\mu]_{c_\mu^\beta a_\mu^\alpha} [\tau_\mu^\dagger]_{b_\mu^\alpha d_\mu^\beta}  \\
    &= \frac{1}{n} \sum_{\lambda\in(\alpha-\square) \cap (\beta-\square)} \frac{m_\beta}{m_\lambda} \delta_{a, p^\lambda_\alpha} \delta_{b, q_\lambda^\alpha} \delta_{c, p^\lambda_\beta} \delta_{d, q^\lambda_\beta}.
\end{align}
\end{proof}

\bibliographystyle{IEEEtran}
\bibliography{main}

\begin{IEEEbiographynophoto}{Satoshi Yoshida}
received the M.Sc. degree in physics at the University of Tokyo, Japan in 2023, where he is currently pursuing the Ph.D. degree.
His research interests include higher-order quantum computation, quantum learning, fault-tolerant quantum computation, and group theoretical analysis of quantum information processing.
\end{IEEEbiographynophoto}

\begin{IEEEbiographynophoto}{Yuki Koizumi}
is a graduate student in the Department of Applied Physics, Graduate School of Engineering, the University of Tokyo, where he is currently pursuing the M.Sc. degree. His research interests include quantum information processing and quantum computation, with a particular focus on fault-tolerant quantum algorithms, quantum learning theory, and related applications in quantum physics.
\end{IEEEbiographynophoto}

\begin{IEEEbiographynophoto}{Michał Studziński}
received his Ph.D. in Physics from the University of Gdańsk in 2015 and subsequently held postdoctoral positions at the National Quantum Information Centre in Sopot and at the Department of Applied Mathematics and Theoretical Physics, University of Cambridge (2016-2018). He is currently an Adjunct Professor at the University of Gdańsk and, since 2025, a Group Leader of Quantum Devices in Computer Science at the International Centre for Theory of Quantum Technologies (ICTQT). His research explores the role of symmetries and representation theory (e.g., Schur-Weyl duality) in simplifying the description of quantum information problems, as well as higher-order quantum structures. Among his key contributions are results on the second laws of quantum thermodynamics and universal programmable quantum processors.
\end{IEEEbiographynophoto}

\begin{IEEEbiographynophoto}{Marco Túlio Quintino}
received the M.S. degree from the Federal University of Minas Gerais, Brazil, in 2012, the Ph.D. degree from the University of Geneva, Switzerland, in 2016, and the habilitation degree from Sorbonne University, France, in 2025. From 2016 to 2020, he was a Post-Doctoral Researcher with the University of Tokyo, Japan, and from 2020 to 2022, he was a Post-Doctoral Researcher with the University of Vienna and the Institute for Quantum Optics and Quantum Information (IQOQI), Vienna, Austria. Since September 2022, he has been an associate professor (maître de conférences [HDR]) with the computer science department (LIP6) at Sorbonne University, Paris, France. His main research interests include quantum information and quantum computation, with particular focus on quantum correlations, causality in quantum theory, higher-order quantum computing, Bell nonlocality, quantum steering, entanglement, measurement incompatibility, quantum discrimination tasks, and semidefinite programming.
\end{IEEEbiographynophoto}

\begin{IEEEbiographynophoto}{Mio Murao}
received her M.S. and Ph.D. degrees from Ochanomizu University, Tokyo, Japan, in 1993 and 1996, respectively. She held postdoctoral positions at Harvard University (USA), Imperial College London (UK), and RIKEN (Japan). She joined the Department of Physics, Graduate School of Science, The University of Tokyo, where she was appointed Associate Professor in 2001 and promoted to Professor in 2015. Her research interests span a broad range of theoretical topics in quantum information and quantum physics. Her current research focuses on quantum algorithms for higher-order quantum computation, quantum learning, distributed quantum computation, and quantum-programmable physics.
\end{IEEEbiographynophoto}

\end{document}